\documentclass[runningheads]{llncs}
\usepackage[T1]{fontenc}
\usepackage{graphicx}
\usepackage{mathpartir}
\usepackage{quiver}
\usepackage{enumitem}
\usepackage{wrapfig}
\usepackage{fancyvrb}
\usepackage{amssymb, amsmath, stmaryrd}
\usepackage{ wasysym }
\usepackage{todonotes}
\usepackage[all]{xy}
\usepackage{mathtools}
\usepackage{cite}
\usepackage{xspace}

\usepackage{hyperref}
\usepackage{cleveref}[capitalize]
\usepackage{pedrohaa}
\DeclareFontFamily{U}{dmjhira}{}
\DeclareFontShape{U}{dmjhira}{m}{n}{ <-> dmjhira }{}

\newcommand{\Set}{\cat{Set}}

\newcommand{\gradedRelDef}[1]{\mathrel{\mathcal{A}}_{#1}}

\newcommand{\op}{\mathrm{op}}

\newcommand{\clat}{\cat{CLat}_\sqcap}
\newcommand{\liftF}{\mathcal{H}}

\newcommand{\rew}{\text{rew}}
\newcommand{\true}{\mathrm{True}}
\newcommand{\false}{\mathrm{False}}
\newcommand{\dar}{DAwR\xspace}
\newcommand{\doctrine}{adjoint gluing doctrine\xspace}
\newcommand{\Doctrine}{Adjoint gluing doctrine\xspace}

\newcommand{\totalgraded}[2]{\Delta^*#1^{#2}} 
\newcommand{\ladjDelta}[1][-]{\exists g.#1}

\newcommand{\ol}{\overline}
\newcommand{\subdist}{{\mathcal{D}_{\leq 1}}}

\newcommand{\fibgraded}[2]{\Delta^*#1^{#2}} 
\newcommand{\Coalg}[1]{\mathbf{Coalg}(#1)}
\newcommand{\id}{\mathrm{id}}
\newcommand{\pullbackmark}[2]{\save ;p+<.8pc,0pc>:(0,-1)::%
(#1) *{\phantom{Z}} %
;p+(#2)-(0,0) **@{-}%
;p-(#1)+(0,0) *{\phantom{Z}} **@{-} \restore}
\newcommand{\twocat}[1]{\cat{#1}}
\newcommand{\rloop}[2][-]{\save \POS!R(.7) \ar@(ru,rd)^#1{#2} \restore}
\newcommand{\lloop}[2][-]{\save \POS!L(.7) \ar@(lu,ld)_#1{#2} \restore}
\newcommand{\uloop}[2][-]{\save \POS!U(.7) \ar@(lu,ru)^(.8){#2} \restore}
\newcommand{\dloop}[2][-]{\save \POS!D(.7) \ar@(rd,ld)^(.8){#2} \restore}

\crefname{section}{\S\!}{Sects.}
\Crefname{section}{\S\!}{Sects.}
\crefname{proposition}{Prop.}{Props.}
\Crefname{proposition}{Prop.}{Props.}
\crefname{definition}{Def.}{Defs.}
\Crefname{definition}{Def.}{Defs.}
\crefname{appendix}{App.}{Appx.}
\Crefname{appendix}{App.}{Appx.}
\crefname{theorem}{Thm.}{Thms.}
\Crefname{theorem}{Thm.}{Thms.}
\crefname{lemma}{Lem.}{Lems.}
\Crefname{lemma}{Lem.}{Lems.}
\crefname{corollary}{Cor.}{Cors.}
\Crefname{corollary}{Cor.}{Cors.}
\crefname{figure}{Fig.}{Figs.}
\Crefname{figure}{Fig.}{Figs.}
\crefname{example}{Example}{Examples}
\Crefname{example}{Example}{Examples}

\usepackage{mdframed}
\usepackage{xcolor}
\usepackage{tcolorbox}

\definecolor{cornellred}{RGB}{196,18,48}
\definecolor{umichblue}{HTML}{00274C}

\begin{document}
\title{A Framework for Coalgebraic Reward-Sensitive Bisimulation}
%
%
\author
{Pedro H. Azevedo de Amorim\inst{1} \and
Mayuko Kori\inst{2}\and
Koko Muroya\inst{3}}
\authorrunning{Azevedo de Amorim et al.}
%
\institute{University of Bath, Bath, UK \and
Kyoto University, Research Institute for Mathematical Sciences, Kyoto, Japan  \and
Ochanomizu University, Tokyo, Japan}
\maketitle              
\begin{abstract}
  In this paper we present a framework for modelling 
  \emph{reward-sensitive bisimulations},
  that is, bisimulations that account for quantitative differences such as accumulated rewards.
  To capture both qualitative and quantitative aspects uniformly,
  we consider two interacting notions of bisimulation:
  a graded variant that tracks bounded reward differences,
  and an ungraded one that abstracts from them.

  Our characterization
  of these notions 
  is done in the fibrational and
  coalgebraic approach to (bi)simulation initiated by Hermida and Jacobs.
  To formally relate the graded and ungraded notions,
  we deploy categorical gluing, a standard technique in
  categorical logic. Furthermore, we show that this construction
  interacts well with standard coalgebra concepts, such as final
  coalgebras, and that it yields  a unified characterization in terms of  combined notions of bisimulations under mild assumptions.

  In order to demonstrate the versatility of our approach, 
  we show how
  it encompasses 
  various bisimulation notions for different kinds of systems, including 
  relation-based bisimulations for automata with rewards  and
  metric-based notions of bisimulations for labelled Markov processes.
\keywords{Coalgebra  \and Fibration \and Bisimulation.}
\end{abstract}
\section{Introduction}

Quantitative extensions of (bi)simulations have been extensively studied 
for systems equipped with quantitative data. 
For example, over the past few years there has been much
work on behavioural metrics and relaxations thereof to reason about distances
of states in transition systems\cite{DBLP:conf/csl/DesharnaisS26,DBLP:conf/stacs/BeoharG0MFSW24,DBLP:journals/logcom/SprungerKDH21}. 
Similarly, there has been work on defining
notions of transitions systems where transitions may incur cost \cite{DBLP:journals/tcs/Weber94,DBLP:journals/iandc/Ibaraki76,DBLP:journals/iandc/Ibaraki78b,DBLP:conf/lics/AlurDDRY13,schutzenberger1961definition} and
bisimulation for such transition systems \cite{DBLP:conf/forte/KiehnA05}.

In this paper, we focus on reward-sensitive (bi)simulations,
which capture not only whether two states are just similar, but also similar up to a relative reward.
For example, imagine a variant of finite automata where each transition
has a reward associated with it. In such a scenario, one might be
interested in understanding whether one state can accept the same
string as a state similar to it which receives more reward. 
While traditional notions of bisimulation relations
are effective at reasoning about qualitative properties of transition
systems, such as whether two states accept the same language, they
cannot cope with quantitative properties such as distances and relative cost.
This motivates 
a graded refinement of the usual notion of bisimulation,
where the grade indexes the tolerance difference in accumulated reward.
Such graded refinement has appeared in the literature as
\emph{amortised bisimulations}~\cite{DBLP:conf/forte/KiehnA05}
for labelled transition systems with costs. Yet, it is still uncertain
if there is a common abstraction that encompasses these reward-sensitive
notions of bisimulation with more metric-based examples. Furthermore,
there is no categorical/coalgebraic theory that explains when such
graded bisimulations can be seen as a refinement of an ungraded
notion of bisimulation.

Motivated by these kinds of applications, 
we explore coalgebraic approaches to graded refinements
of bisimulation relations and its interactions with ungraded
bisimulations. 
In particular,
we propose a coalgebraic and
fibred axiomatization of graded and ungraded notions of bisimulation,
obtained by applying
categorical gluing---an important construction from
categorical logic---to the coalgebra world. We use this
construction to define the ``coherent'' pairing of two different notions
of coalgebra along a well-behaved morphism between them. 
We show that it
provides
a purely formal way to relate the coalgebraic structure of graded and ungraded bisimulations,
and to identify
minimal structural assumptions for
ensuring that an ungraded notion of bisimulation can be seen as the ``collapse''
of a graded notion of bisimulation, i.e. that the graded bisimulation is a
refinement of the ungraded one.

In order to substantiate the canonicity of our construction, we prove
an abstract soundness theorem that relates the graded and ungraded
notions of coalgebra.  We also show that, in a way we make precise in
\Cref{th:commacoalg}, these ``glued coalgebras'' are the universal way
of combining the notions of graded and ungraded bisimulations. Then,
prove \Cref{th:gluedequiv}, which means that such coalgebras form a
section-retraction pair of graded and ungraded coalgebras. We conclude
by characterizing the final coalgebra in the glued model.

Our work builds on the line of work initiated by Hermida and
Jacobs~\cite{DBLP:journals/iandc/HermidaJ98}, who showed that fibrations
provide rich structures for defining abstract notions of
bisimulation as lifting functors up fibrations.
As a general framework, we adopt a standard setting
based on fibrations and fibred adjunctions,
which we call an \emph{\doctrine{}}. 
This axiomatization
provide a common language for explaining traditional notions of bisimulations 
purely coalgebraically and generalizing them beyond binary relations to quantitative structures, such as 
pseudometric.
We illustrate that our framework instantiates to
a range of examples,
including
simulations on deterministic automata with rewards (\dar), which is a new notion as far as the authors know,
and existing ones such as
amortised bisimulations and approximate bisimulations~\cite{DBLP:conf/qest/DesharnaisLT08,DBLP:conf/csl/DesharnaisS26}.

\paragraph*{Related work}
As discussed above,
there has been extensive research on quantitative refinements of classical bisimulation relations,
and many coalgebraic frameworks have been proposed to capture such refinements.
For probabilistic systems,
several coalgebraic and fibrational accounts of behavioural metrics have been developed~\cite{DBLP:conf/fsttcs/BaldanBKK14,DBLP:journals/logcom/SprungerKDH21}.
Another line of work concerns 
\emph{Graded semantics}~\cite{DBLP:conf/calco/MiliusPS15,DBLP:conf/concur/DorschMS19},
which has been proposed to organise behavioural equivalences along the linear-time and branching-time spectrum~\cite{DBLP:conf/concur/Glabbeek90}.
Although our notion of graded bisimulation shares the idea of grading,
the usage is different:
while grades in graded semantics usually represent the depth of transitions,
grades in our graded bisimulation represent tolerances on the differences of rewards.
In this paper, we focus on reward-sensitive bisimulations, 
which differ from the above quantitative refinements.
A closely related development is the recent coalgebraic framework for \emph{amortised analysis}
by Grodin and Harper~\cite{grodin2024amortized}.
Their work mainly focuses on generalizing \emph{potential functions},
rather than on bisimulations. A point of convergence between their work
and the notions of bisimulation considered here is their use of 2-dimensional
data to bound the cost between two different systems. They encoded this as
categories enriched in partial orders while we adopt an even finer grained
approach by making the relative cost be a part of the data.

\paragraph*{Organization}
We start by studying a concrete example based on deterministic automata
with rewards, 
showing how it naturally gives rise to a graded notion of simulations
that
bound
the relative reward between states. 
We then formulate \dar and their simulations coalgebraically.

Motivated by the categorical structures that emerge from this coalgebraic treatment, we
introduce an \doctrine{}, as a general framework capturing the interactions between graded and ungraded (bi)simulations.
Once we set this up, we show how the two settings can be glued together, 
and
we establish a general soundness theorem 
relating the graded (bi)simulation with
its ungraded variant
within the glued setting.

Finally, we illustrate the generality of our framework by showing that it encompasses several useful examples.

\newcommand{\Fda}{F_\text{DA}}
\newcommand{\Fhatda}{\hat{F}_\text{DA}}
\section{Motivating example: simulations on deterministic automata with rewards} \label{sec:prelim}
We start with a motivating example, namely simulations
for deterministic automata and their reward-sensitive extensions.

  A \emph{deterministic automaton (or DA)} is a tuple $(S, \Sigma, \delta, \alpha)$ consisting of
a set of states $S$, an input alphabet $\Sigma$, a transition function $\delta\colon S \to S^\Sigma$,
 and a set of accepting states $\alpha\subseteq S$.
 It is well known that DAs admit a notion of simulations that characterises language inclusion.

\begin{definition} \label{def:ungradedsim}
  An \emph{(ungraded) simulation for a DA $(S, \Sigma, \delta, \alpha)$} is a binary relation $R \subseteq S \times S$ such that for every $(s, s') \in R$: $s \in \alpha$ implies $s' \in \alpha$ and
for every $a \in \Sigma$, $(\delta(s, a), \delta(s', a)) \in R$.   The \emph{similarity} is defined as the largest simulation.  
\end{definition}
\begin{proposition} \label{prop:da_sound}
  The similarity is the set of pairs such that every word accepted from the first is also accepted from the second.
  Hence,
    if $R$ is a simulation on a DA,
  then every $(s, s') \in R$ satisfies this language inclusion property.
\end{proposition}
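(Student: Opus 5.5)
We define the relation $L = \{(s,s') \mid \mathcal{L}(s) \subseteq \mathcal{L}(s')\}$, where $\mathcal{L}(s) \subseteq \Sigma^*$ is the language accepted from $s$. It is given inductively by $\varepsilon \in \mathcal{L}(s)$ iff $s \in \alpha$, and $aw \in \mathcal{L}(s)$ iff $w \in \mathcal{L}(\delta(s,a))$. We then show two things: $L$ is a simulation, and every simulation is contained in $L$. Together these show that $L$ is the largest simulation, i.e.\ the similarity. The second part of the statement then follows at once, since any simulation $R$ is contained in the similarity.

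\emph{$L$ is a simulation.} Take $(s,s') \in L$. If $s \in \alpha$, then $\varepsilon \in \mathcal{L}(s) \subseteq \mathcal{L}(s')$, so $s' \in \alpha$. For $a \in \Sigma$ and $w \in \mathcal{L}(\delta(s,a))$, we have $aw \in \mathcal{L}(s) \subseteq \mathcal{L}(s')$, hence $w \in \mathcal{L}(\delta(s',a))$. Thus $(\delta(s,a),\delta(s',a)) \in L$.

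\emph{Every simulation $R$ is contained in $L$.} We prove by induction on the length of $w$ that for all $(s,s') \in R$, $w \in \mathcal{L}(s)$ implies $w \in \mathcal{L}(s')$. The statement is quantified over all pairs in $R$, so that the induction hypothesis can be applied to the successor pair.
\begin{itemize}
\item Base case $w = \varepsilon$: this is exactly the acceptance clause of \Cref{def:ungradedsim}.
\item Step case $w = av$: $av \in \mathcal{L}(s)$ gives $v \in \mathcal{L}(\delta(s,a))$. Since $(\delta(s,a),\delta(s',a)) \in R$ by the transition clause, the induction hypothesis yields $v \in \mathcal{L}(\delta(s',a))$, i.e.\ $av \in \mathcal{L}(s')$.
\end{itemize}

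The argument is elementary and has no real obstacle. The only point needing care is the quantification over all pairs of $R$ in the induction, together with the observation that the largest simulation exists. Existence follows from the union of all simulations being a simulation, but here it is witnessed directly by $L$.
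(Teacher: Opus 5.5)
Your proof is correct and complete. The paper gives no proof of this proposition; it is presented as a well-known fact about DAs. Your argument is the standard one: the language-inclusion relation is a simulation, and every simulation is contained in it by induction on word length, with the induction hypothesis quantified over all pairs of $R$. Because $L$ is itself a simulation containing all others, it also witnesses the existence of the largest simulation.

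A more coalgebraic route, in the spirit of the paper's later remarks on final coalgebras, would use that $\Fhatda$ is fibred. Then the similarity on $c$ is the reindexing, along the unique homomorphism into the final coalgebra $2^{\Sigma^*}$, of the similarity there. Identifying that similarity with language inclusion still requires your induction, so the direct argument is the more economical one.
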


We then move on to a reward-sensitive setting.
A \emph{deterministic automaton with rewards (or \dar)} is a deterministic automaton equipped with a reward function $\rew\colon S \to \nat^\Sigma$.
The acceptance of words is defined as in DAs. 
For a state $s \in S$ and a word $w \in \Sigma^*$, the \emph{accumulated reward} $\rew(s, w)$ is defined as the natural extension of the reward function by
$\rew(s, \epsilon) = 0$ and $\rew(s, a w) = \rew(s, a) + \rew(\delta(s, a), w)$ for each $a \in \Sigma$ and $w \in \Sigma^*$.
A \dar with an initial state $s$ thus induces a partial function $\rew(s, -)\colon \Sigma^* \rightharpoonup \nat$ that assigns to each accepted word its accumulated reward and is undefined for words outside the language.
We adopt this partial function as the semantic basis for defining simulations between \dar{}s.

These kinds of automata has been studied in the classical literature~\cite{DBLP:journals/iandc/Ibaraki76,DBLP:journals/tcs/Weber94}.
It is closely related to \emph{deterministic weighted automata} over $(\nat, +, 0)$.
A key difference lies in the presence of the acceptance condition:
a \dar assigns rewards only to accepted words, while a weighted automaton assigns rewards to all words regardless of acceptance.

For \dar{}s, we extend the classical notion of simulation so as to 
 capture quantitative bounds on relative rewards
between similar states.
In this setting, we do not require that rewards are preserved exactly. 
Instead, for similar states $s$ and $s'$,
we require that every word $w$ accepted from $s$ is also accepted from $s'$,
and 
moreover,

the difference in accumulated rewards never exceeds a bound $n$ along any partial run.
This motivates the introduction of a grading by natural numbers, where the index $n$ indicates the maximal tolerated reward difference.
\begin{definition} \label{def:gradedsim}
  A \emph{graded simulation on a \dar $(S, \Sigma, \delta, \rew, \alpha)$} is a family of binary relations $\gradedRelDef{} = \{\gradedRelDef{n}\}_{n \in \nat}$ over $S$ such that if $(s, s') \in \gradedRelDef n$:
  \begin{itemize}
  \item $s \in \alpha$ implies $s' \in \alpha$,
  \item for every $a \in \Sigma$, $\rew(s, a) \leq n + \rew(s', a)$,
  \item 
    for every $a \in \Sigma$, $(\delta(s, a), \delta(s', a)) \in \gradedRelDef {n - \rew(s, a) + \rew(s', a)}$.
  \end{itemize}
  The \emph{graded similarity} is defined as the largest simulation.
\end{definition}

It is easy to see that this notion of graded simulation is sound with respect to language inclusion with rewards:
\begin{proposition} \label{prop:dar_sound}
  The graded similarity of a \dar is the graded binary relation $\{\gradedRelDef{n}\}_{n \in \nat}$ collecting all pairs $(s, s')$ such that 
  $(\star)$: every word $w$ accepted from $s$ is also accepted from $s'$ 
  and $\rew(s, w') \leq n + \rew(s', w')$ for each prefix $w'$ of $w$.
  
  Hence, 
  if $\gradedRelDef{}$ is a graded simulation on a \dar,
  then
  $(s, s') \in \gradedRelDef n$ implies that it satisfies the condition $(\star)$.
\end{proposition}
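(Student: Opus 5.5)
The proof splits into two inclusions between the graded similarity $\gradedRelDef{}$ (the largest graded simulation) and the family $\mathcal{B} = \{\mathcal{B}_n\}_{n\in\nat}$ of pairs $(s,s')$ satisfying $(\star)$ at index $n$. The ``hence'' part of the statement is exactly the inclusion $\gradedRelDef{n}\subseteq\mathcal{B}_n$. It holds for every graded simulation, so I prove it in that generality.

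\textbf{Soundness.} Let $\gradedRelDef{}$ be any graded simulation. I show by induction on a word $w$ that for all $n$ and all $(s,s')\in\gradedRelDef n$ two things hold: if $w$ is accepted from $s$ then it is accepted from $s'$, and $\rew(s,w')\le n+\rew(s',w')$ for every prefix $w'$ of $w$. For $w=\epsilon$, acceptance transfers by the first clause of \Cref{def:gradedsim}, and the only prefix gives $0\le n$. For $w=aw_0$, set $m=n-\rew(s,a)+\rew(s',a)$. The second clause gives $m\ge 0$, and the third gives $(\delta(s,a),\delta(s',a))\in\gradedRelDef m$. The induction hypothesis then transfers acceptance of $w_0$ and gives, for each prefix $w_0'$ of $w_0$,
\[\rew(\delta(s,a),w_0')\le m+\rew(\delta(s',a),w_0').\]
Adding $\rew(s,a)$ to both sides and unfolding $\rew(s,aw_0')$ yields the bound for the prefix $aw_0'$. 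The prefix $\epsilon$ is trivial. Since $\mathcal{B}_n$ is exactly the set of pairs satisfying this for all accepted $w$, we get $\gradedRelDef n\subseteq\mathcal{B}_n$. This argument is the graded analogue of \Cref{prop:da_sound}.

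\textbf{Completeness.} It remains to show $\mathcal{B}_n$ is contained in the largest graded simulation. The natural route is to show that $\mathcal{B}$ is itself a graded simulation, which then gives $\mathcal{B}_n\subseteq\gradedRelDef n$. I take $(s,s')\in\mathcal{B}_n$ and check the three clauses of \Cref{def:gradedsim}.
\begin{itemize}
\item The acceptance clause follows from $(\star)$ applied to $w=\epsilon$.
\item For a letter $a$ such that some word $aw_0$ is accepted from $s$, the prefix $a$ gives $\rew(s,a)\le n+\rew(s',a)$. Shifting the prefix bounds for $aw_0'$ down by $\rew(s,a)-\rew(s',a)$ then shows $(\delta(s,a),\delta(s',a))\in\mathcal{B}_{n-\rew(s,a)+\rew(s',a)}$.
\end{itemize}

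\textbf{Main obstacle.} The difficulty is a letter $a$ for which no word beginning with $a$ is accepted from $s$. Then $(\star)$ places no constraint on $\rew(s,a)$, but \Cref{def:gradedsim} still demands $\rew(s,a)\le n+\rew(s',a)$. In that case the index $n-\rew(s,a)+\rew(s',a)$ may not even be a natural number. I would first try to build a counterexample: a state $s$ whose $a$-successor accepts nothing and which carries a large reward on $a$. If such an automaton exists, the characterisation as literally phrased needs one of two repairs. Either the automaton is assumed trim, so that every reachable state accepts some word, or $(\star)$ is read as bounding all prefixes of all words. Under the trimness assumption the case analysis above covers every letter, so $\mathcal{B}$ is a graded simulation and equality follows. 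I expect this side condition, rather than the inductive calculations, to be the delicate point of the proof.
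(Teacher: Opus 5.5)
Your soundness half is correct, and your induction proves more than you use. Since $w$ ranges over all of $\Sigma^*$, it yields $\rew(s,w)\le n+\rew(s',w)$ for \emph{every} word $w$, not only for prefixes of accepted ones. Proving this directly for an arbitrary graded simulation is also sturdier than the statement's own ``Hence'', which derives it from the characterisation. The paper gives no proof of this proposition, only ``it is easy to see''. Your two-inclusion strategy is presumably the intended one, and the only issue is the one you flagged.

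That obstacle is real, so resolve it rather than leaving it conditional. Take $\Sigma=\{a\}$, $S=\{s,s',t\}$, $\alpha=\emptyset$, $\delta(x,a)=t$ for every $x$, $\rew(s,a)=1$ and $\rew(s',a)=\rew(t,a)=0$. Nothing is accepted from $s$, so $(s,s')$ satisfies $(\star)$ at grade $0$ under your reading. Yet no graded simulation contains $(s,s')$ at grade $0$, since the second clause of \Cref{def:gradedsim} would demand $1\le 0$. Hence the characterisation is false if $(\star)$ only bounds prefixes of accepted words.

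The right repair is your second one. It also matches the paper's informal gloss before \Cref{def:gradedsim}, that the reward difference ``never exceeds a bound $n$ along any partial run''. So require that every word accepted from $s$ is accepted from $s'$, and that $\rew(s,w)\le n+\rew(s',w)$ for all $w\in\Sigma^*$. With that reading your completeness argument needs no case split. Each letter $a$ is itself a word, which gives the second clause. Language inclusion together with your shift by $\rew(s,a)-\rew(s',a)$, now applied to $aw_0$ for arbitrary $w_0$, gives the third. Combined with the strengthened soundness above, this proves the proposition.

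The trimness repair also works, but it is far more restrictive: it excludes every automaton with a dead state, which total transition functions make ubiquitous. It must also be stated as ``every state accepts some word''. A DAwR here has no initial state and the similarity ranges over all of $S$, so reachability is not the relevant notion.
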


A graded simulation on a \dar induces an ungraded simulation on the underlying DA by taking the union of all relations in the family.
\begin{proposition} \label{prop:dasim_trans}
  A graded simulation $\{\gradedRelDef{n}\}_n$ on a \dar
  induces a simulation $\bigcup_n \gradedRelDef{n}$ on the underlying DA.
\end{proposition}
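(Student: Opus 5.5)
We verify directly that $R = \bigcup_n \gradedRelDef{n}$ satisfies the two clauses of \Cref{def:ungradedsim}. Take $(s,s') \in R$ and choose some $n \in \nat$ with $(s,s') \in \gradedRelDef{n}$. The acceptance clause follows at once: the first condition of \Cref{def:gradedsim} gives that $s \in \alpha$ implies $s' \in \alpha$.

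For the transition clause, fix $a \in \Sigma$. The third condition of \Cref{def:gradedsim} gives $(\delta(s,a),\delta(s',a)) \in \gradedRelDef{m}$ with $m = n - \rew(s,a) + \rew(s',a)$. Since $\gradedRelDef{m}$ is contained in the union, this yields $(\delta(s,a),\delta(s',a)) \in R$. Because $(s,s')$ and $a$ were arbitrary, $R$ is a simulation on the underlying DA.

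The only delicate point is whether the index $m$ is a legitimate natural number, so that $\gradedRelDef{m}$ is actually one of the relations in the family. This is guaranteed by the second condition, $\rew(s,a) \le n + \rew(s',a)$, which is equivalent to $m \ge 0$. So the truncated-subtraction issue never arises, and this second condition is the only place where it is used. This is the one step requiring any care; the rest is unfolding definitions. We do not need \Cref{prop:dar_sound} or \Cref{prop:da_sound}.

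As a corollary, combining this with \Cref{prop:da_sound}, every pair in a graded simulation satisfies language inclusion, consistent with \Cref{prop:dar_sound}.
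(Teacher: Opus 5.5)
Your proof is correct, but it is elementwise, whereas the paper's is coalgebraic. The paper never checks the clauses of \Cref{def:ungradedsim} directly. It recasts the statement as \Cref{prop:soundness} and splits the work into two natural transformations:
\begin{itemize}
\item a vertical natural transformation $\alpha\colon (\ladjDelta)\circ\Fhatda^\nat \Rightarrow \Fhatda\circ(\ladjDelta)$, which handles the automaton part (acceptance, and transitions via $\gradedRelDef{n} \subseteq \bigcup_m \gradedRelDef{m}$);
\item a natural transformation $\epsilon\colon (\ladjDelta)\circ M \Rightarrow \ladjDelta$ above $\pi_1$, which handles the reward part.
\end{itemize}
It then sends a graded simulation through the functor $\Coalg{\ladjDelta}_\gamma$, where $\gamma = (\Fhatda * \epsilon)\circ(\alpha * M)$. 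The composition law for $\Coalg{-}$ shows that the resulting coalgebra lies over the underlying DA.

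The step you single out as the only delicate one is that $\rew(s,a) \le n + \rew(s',a)$ makes $n - \rew(s,a) + \rew(s',a)$ a genuine index. In the paper's proof this is exactly what makes $\pi_1$ a morphism $\ladjDelta[M\gradedRelDef{}] \to \ladjDelta[\gradedRelDef{}]$, i.e.\ what makes $\epsilon$ exist.

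Your route is shorter and shows exactly which clause of \Cref{def:gradedsim} is used where. The paper's factorization separates the automaton component from the reward component. That separation is what lets the same conclusion be obtained abstractly in \Cref{cor:soundness} for any adjoint gluing doctrine, and hence uniformly for the LTS, MDP and LMP examples.
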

In \Cref{sec:gluing}, we use comma objects to give a more conceptual framing of this property.

Observing that the indices of a graded simulation behave like tolerances, it is natural to impose the following additivity conditions on a graded simulation:
\begin{enumerate}
  \item For each $s \in S$, $(s, s) \in \gradedRelDef{0}$,
  \item For each $n, m \in \nat$, 
  $\{(x, y) \mid \exists z.~(x, z) \in \gradedRelDef{n} \text{ and }(z, y) \in \gradedRelDef{m}\} \subseteq \gradedRelDef{n+m}$.
\end{enumerate}
The first condition reflects that the reward difference between a state and itself is always $0$,
and the second condition reflects 
that if the reward difference from $x$ to $z$ is at most $n$, and from $z$ to $y$ is at most $m$, then the reward difference from $x$ to $y$ is at most $n + m$.
A graded simulation satisfying these condition is called \emph{lax monoidal}.
In fact, from a categorical perspective, these conditions mean that $\gradedRelDef{}$ is a lax monoidal functor
from the monoidal category $(\nat, 0, +)$ to the monoidal category of endorelations with 
 relation compositions as the monoidal product.
 We will introduce fibrational frameworks for these ungraded and (lax monoidal) graded simulations in \Cref{sec:coalg_for_graded}.

\section{Background: coalgebraic framework for simulations}
We briefly recall a coalgebraic account of (bi)simulations, which generalizes the classical definitions for transition systems.
In the coalgebraic framework, a transition system is modelled as a coalgebra $c\colon S \to FS$ of an endofunctor $F\colon \cat{C} \to \cat{C}$.
To capture relational reasoning, one often employs a fibration $p\colon \cat{E} \to \cat{C}$ whose total category $\cat{E}$ is a category of relations over objects in $\cat{C}$.
This perspective follows the established framework of fibrational approaches to coalgebraic (bi)simulations, 
see e.g.~\cite{DBLP:journals/iandc/HermidaJ98,DBLP:journals/logcom/SprungerKDH21}.
For details of fibrations in general,
we refer the reader to Chapters 1 and 9 of \cite{DBLP:books/daglib/0023251}.
In this paper, we focus in particular on a class of 
posetal fibrations known as $\clat$-fibrations, which are a special case of topological functors~\cite{HERRLICH1974125}.

\begin{definition}[$\clat$-fibration]
  A \emph{$\clat$-fibration} is a fibration $p \colon \cat{E} \to \cat{C}$
  such that each fibre $\cat{E}_X$ is a complete lattice and for each morphism
 $f \colon X \to Y$ in $\cat{C}$,  the reindexing functor $f^* \colon \cat{E}_Y \to \cat{E}_X$
  preserves all meets.
\end{definition}
We use $\sqsubseteq$ for the order in each fibre,
and $\leq$ for standard orders, such as the usual order on natural numbers
or the order on booleans $2 \coloneqq \{\false, \true\}$ satisfying $\false \leq \true$ and $\true \not \leq \false$.
It is known that a $\clat$-fibration is faithful so that 
for each $P, Q \in \cat{E}$ and $f\colon pP \to pQ$, there is at most one morphism $P \to Q$ in $\cat{E}$ above $f$.
\newcommand{\dotarrow}{\mathrel{\overset{\scalebox{0.4}{\ $\bullet$}}{\rightarrow}}}
Therefore, we often write $f\colon P \dotarrow Q$ for the existence of a morphism $P \to Q$ above $f\colon pP \to pQ$, which is equivalent to $P \sqsubseteq f^*Q$ in $\cat{E}_{pP}$.
\begin{definition}
  Let $p\colon \cat{E} \to \cat{C}$ and $q\colon \cat{F} \to \cat{D}$ be fibrations.
  Let $F\colon \cat{C} \to \cat{D}$ and $\hat{F}\colon \cat{E} \to \cat{F}$ be functors. We say that $\hat{F}$ is a \emph{lifting of $F$ along $p$ and $q$} if the equality $q \circ \hat{F} = F \circ p$ holds.
  If moreover $\cat{E} = \cat{F}$ and $p = q$,
  we simply say that $\hat{F}$ is a \emph{lifting of $F$ along $p$}.
  A functor $\hat{F}\colon \cat{E} \to \cat{F}$ is said to be \emph{fibred} if 
  it is a lifting of some $F\colon \cat{C} \to \cat{D}$ and it preserves cartesian morphisms.
\end{definition}

Given a $\clat$-fibration $p\colon \cat{E} \to \cat{C}$, we fix a lifting $\hat{F}\colon \cat{E} \to \cat{E}$ of $F\colon \cat{C} \to \cat{C}$ along $p$.
A coalgebra $c\colon S \to FS$ in $\cat{C}$ models a transition system,
and
a coalgebra $c\colon R \dotarrow \hat{F}R$ (or such $R$)
is called a \emph{simulation} on the coalgebra $c\colon S \to FS$.
In general,
the lifting $\hat{F}$ is often instantiated by specific constructions suited for simulations, such as relation liftings or codensity liftings~\cite{DBLP:journals/jlp/KurzV16,DBLP:journals/logcom/SprungerKDH21}.
However, since the present paper does not rely on particular structures of liftings,
we place no such restriction and work with arbitrary liftings.

Since the fibre $\cat{E}_S$ is a complete lattice and the endofunction $c^*\hat{F}$ on this fibre is monotone,
the Knaster--Tarski theorem~\cite{Tarski1955ALF} guarantees
the existence of the greatest fixed point $\nu (c^*\hat{F})$.
The fixed point represents the largest simulation on $c$,
which is called the \emph{similarity}.
We often refer to simulations and similarities as \emph{bisimulations} and \emph{bisimilarities}
when they are symmetric in an appropriate sense
depending on the context.

\begin{example}[(Ungraded) simulations on DAs] \label{eg:da}
  Let
  $(S, \Sigma, \delta, \alpha)$ be a deterministic automaton (DA).
  It can be represented as a coalgebra $c \coloneqq \langle \alpha, \delta \rangle \colon S \to 2 \times S^\Sigma$ of the functor $\Fda \coloneqq 2 \times (-)^\Sigma\colon \cat{Set} \to \cat{Set}$.

  Consider the category $\cat{Rel}$ 
  whose objects are endorelations $(X, R \subseteq X \times X)$ and morphisms are relation preserving functions, i.e.~$f\colon (X, R) \to (Y, Q)$ are 
  functions $f\colon X \to Y$ satisfying $(x_1, x_2) \in R$ implies $(fx_1, fx_2) \in Q$.
  We often simply write $R$ for $(X, R)$.
  The forgetful functor $p\colon \cat{Rel} \to \cat{Set}$ is
  a $\clat$-fibration: the fibre $\cat{Rel}_X$ over a set $X$ is the complete lattice of binary relations over $R$ ordered by inclusion,
  and for a function $f\colon X \to Y$, the reindexing functor $f^*$ is given by $f^*R \coloneqq \{(x_1, x_2) \mid (fx_1, fx_2) \in R\}$.

We define a lifting $\Fhatda\colon \cat{Rel} \to \cat{Rel}$ of $\Fda$ by
\begin{displaymath}
  \Fhatda(R) \coloneqq \left\{\big((b_1, \tau_1), (b_2, \tau_2)\big)  \mid
    b_1 \leq b_2 \text{ and }
    \forall a \in \Sigma.~\big(\tau_1(a), \tau_2(a)\big) \in R
  \right\}.
\end{displaymath}
With this setup,
 an ungraded simulation $R$ on a DA $c$ (see \Cref{def:ungradedsim}) is precisely an object $R \in \cat{Rel}$
 such that
 $c\colon R \dotarrow \Fhatda R$.
\end{example}

We recall the following definition ({cf.~\cite{DBLP:journals/iandc/HermidaJ98}}),
which enables us to consider a category of simulations 
and that of transition systems
as categories $\Coalg{\hat{F}}$ and $\Coalg{F}$ of coalgebras, respectively, connected by the functor $\Coalg{p}$ induced by the fibration $p$.
\begin{definition} \label{def:coalg}
  For an endofunctor $F\colon \cat{C} \to \cat{C}$, we write $\Coalg{F}$ for the category of $F$-coalgebras and $F$-coalgebra homomorphisms.
  Assuming that $F\colon \cat{C} \to \cat{C}$, $G\colon \cat{D} \to \cat{D}$, $H\colon \cat{C} \to \cat{D}$ are functors, and $\alpha\colon HF \Rightarrow GH$ is a natural transformation, there is a functor $\Coalg{H}_\alpha\colon \Coalg{F} \to \Coalg{G}$ defined by
  \begin{align*}
  \Coalg{H}_\alpha(c) &\coloneqq \alpha_X \circ Hc \text{ for each $c\colon X \to FX$ and} \\
  \Coalg{H}_\alpha(f) &\coloneqq Hf \text{ for each $f\colon c \to d$ in $\Coalg{F}$}.
  \end{align*}
  We sometimes write $\Coalg{H}$ for $\Coalg{H}_{\id}$ when $HF = GH$.
\end{definition}

\section{Coalgebraic framework for graded simulations} \label{sec:coalg_for_graded}

In this section we provide a canonical way of coalgebraically formulating
the graded notion of simulation presented in the previous section. We do this
by refining an ``ungraded'' notion of coalgebraic simulation. In the broader
context of this paper, the goal of this section is to explore the categorical
structures that naturally emerge when making ungraded and graded notions of
relations interact. Then, in the next section, we use such structures as a
starting point for defining more general notions of graded and ungraded simulations.

We start by stating some definitions and results about fibred limits. These can
very naturally capture the kinds of relations we are interested in. 
Then, we provide
a coalgebraic proof of \Cref{prop:dar_sound}.

\subsection{Graded simulations}

We begin by defining some notions from the
fibred category theory literature.

\begin{definition}[{cf.~\cite[Def.~3.3.4]{DBLP:books/daglib/0072949}}] \label{def:gradedfib}
  Let $I$ be a small category, $\cat{C}$ be a category with $I$-colimits,
  and $p\colon \cat{E} \to \cat{C}$ be a $\clat$-fibration.
  Then we define the \emph{$I$-graded fibration} $\fibgraded{p}{I}\colon \totalgraded{\cat{E}}{I} \to \cat{C}$ of $p$ by the pullback diagram:
  \begin{displaymath}
    \xymatrix@R=2em{
      \totalgraded{\cat{E}}{I} \pullbackmark{1, 0}{0, 1}\ar[d]_-{\fibgraded{p}{I}} \ar[r] &\cat{E}^I \ar[d]^{p^I} \\
      \cat{C} \ar[r]^-\Delta &\cat{C}^I,
    }
  \end{displaymath}
  where $\Delta$ is the diagonal functor.
  Since $p^I = p \circ (-)$ is a $\clat$-fibration, $\fibgraded{p}{I}$ is also a $\clat$-fibration~\cite[Prop.~2.9]{DBLP:journals/ngc/KomoridaKHKHEH22}.
  We overload our notation and write $\Delta\colon \cat{E} \to \totalgraded{\cat{E}}{I}$ for 
  the mediating map induced by $p\colon \cat{E} \to \cat{C}$ and $\Delta\colon \cat{E} \to \cat{E}^I$.
\end{definition}
The functor category $\cat{E}^I$ should be thought of category of $I$-graded relations, which is fibred over the category of ``graded sets'' $\cat{C}^I$. 
Because we do not want to work with $I$-indexed sets, 
we take the pullback above along the diagonal functor $\Delta$.
The $\clat$-fibration $p$ has fibred $I$-colimits~\cite[Ex.~9.2.4]{DBLP:books/daglib/0023251}. 
This implies that $\Delta$ is a fibred functor of type $p \to \fibgraded{p}{I}$ and it has a fibred left adjoint $\ladjDelta$.

\begin{equation} \label{eq:fibadj}
\begin{tikzcd}
	{\cat{E}} && {\totalgraded{\cat{E}}{I}} \\
	& {\cat{C}}
	\arrow[""{name=0, anchor=center, inner sep=0}, "\Delta", curve={height=-6pt}, from=1-1, to=1-3]
	\arrow["p"', from=1-1, to=2-2]
	\arrow[""{name=1, anchor=center, inner sep=0}, "{\ladjDelta}", curve={height=-6pt}, from=1-3, to=1-1]
	\arrow["\fibgraded{p}{I}", from=1-3, to=2-2]
	\arrow["\dashv"{anchor=center, rotate=90}, draw=none, from=1, to=0]
\end{tikzcd}
\end{equation}

In \Cref{eg:da}, we showed that an ungraded simulation on a deterministic automaton $c\coloneqq \langle \alpha, \delta \rangle \colon S \to \Fda S$ is an object $R$ such that $c\colon R \dotarrow \Fhatda(R)$ in $\cat{Rel}$ above $\langle \alpha, \delta \rangle$.
We now extend this setting to graded simulations on \dar{}s, using the fibration introduced in \Cref{def:gradedfib}.
A \dar $(S, \Sigma, \delta, \alpha, \rew)$ can then be expressed as a coalgebra $c\coloneqq \langle \alpha, \langle \delta, \rew \rangle \rangle\colon S \to \Fda (S \times \nat)$.
\begin{example}[Graded simulations on \dar{}s] \label{eg:dar}
  Let $\nat$ be the set of natural numbers, regarded as a discrete category,
  and consider the fibration $p\colon \cat{Rel} \to \cat{Set}$.
  The category $\totalgraded{\cat{Rel}}{\nat}$ has 
  as objects
  $\nat$-graded relations $(X, \{\gradedRelDef{n}\}_{n \in \nat})$ 
  and 
  as morphisms
  $f\colon (X, \gradedRelDef{}) \to (Y, \gradedRelDef{}')$ 
  the
  functions $f\colon X \to Y$ preserving relations for each grade.
  The fibration $\fibgraded{p}{\nat}\colon \totalgraded{\cat{Rel}}{\nat} \to \cat{Set}$ is the forgetful functor,
  and 
  the functor $\ladjDelta$ maps a graded relation $\gradedRelDef{}$ to $\bigcup_n \gradedRelDef{n}$.

  To capture graded simulations, we define a lifting of $\Fda(- \times \nat)$ on the category $\totalgraded{\cat{Rel}}{\nat}$.
  We construct this lifting as a composition of two components:
  one handling the deterministic automaton structure and the other accounting for the reward part.
  Specifically,
  we define
  two liftings
  $\Fhatda^\nat\colon \totalgraded{\cat{Rel}}{\nat} \to \totalgraded{\cat{Rel}}{\nat}$ of $\Fda$ 
  and 
  $M\colon \totalgraded{\cat{Rel}}{\nat} \to \totalgraded{\cat{Rel}}{\nat}$ of $(-) \times \nat$, 
  corresponding respectively to 
  the deterministic automaton component and the reward component, by
  \begin{align*}
    \Fhatda^\nat(\mathcal{A})_n &\coloneqq
    \left\{
      ((b_1, \tau_1), (b_2, \tau_2))  \mid
      b_1 \leq b_2,
      \text{ and }
      \forall a \in \Sigma.~(\tau_1(a), \tau_2(a)) \in \gradedRelDef{n}
    \right\}, \\
    M(\gradedRelDef{})_n &\coloneqq \{((s_1, n_1), (s_2, n_2)) \mid n_1 \leq n +  n_2\text{ and } (s_1, s_2) \in \gradedRelDef{(n-n_1 + n_2)}\}.
  \end{align*}
  Then a graded simulation on a \dar $\langle \alpha, \langle \delta, \rew \rangle \rangle\colon S \to \Fda(S \times \nat)$ is an object $\gradedRelDef{} \in \totalgraded{\cat{Rel}}{\nat}$ such that $\langle \alpha, \langle \delta, \rew \rangle \rangle \colon \gradedRelDef{} \dotarrow \Fhatda^\nat M(\gradedRelDef{})$.
\end{example}

Based on the setup of \Cref{eg:da} and \Cref{eg:dar},
\Cref{prop:dasim_trans} can be reformulated in coalgebraic terms as follows.
\begin{proposition}[Soundness, cf. \cref{ap:proof_soundness_sim}]
  \label{prop:soundness}
  There is a natural transformation $\epsilon\colon (\ladjDelta) \circ M \Rightarrow \ladjDelta$ above $\pi_1\colon (-) \times \nat \Rightarrow \id$ 
  and a vertical natural transformation $\alpha\colon (\ladjDelta) \circ \Fhatda^\nat \Rightarrow \Fhatda \circ (\ladjDelta)$,
  such that
  the following commutes:
    \begin{displaymath}
    \xymatrix@C=7em{
      \Coalg{\Fhatda^\nat\circ M} \ar[r]^-{\Coalg{\ladjDelta}_{\gamma}} \ar[d]_{\Coalg{\fibgraded{p}{\nat}}} &\Coalg{\Fhatda} \ar[d]^{\Coalg{p}} \\
      \Coalg{\Fda \circ ((-) \times \nat)} \ar[r]_-{\Coalg{\id}_{\Fda\pi_1}} &\Coalg{\Fda}.
    }
    \end{displaymath}
    We define $\gamma\coloneqq (\Fhatda * \epsilon) \circ (\alpha * M) \colon (\ladjDelta) \Fhatda^\nat  M \Rightarrow \Fhatda  (\ladjDelta)$
    where
    $\circ$ and $*$ are vertical and horizontal compositions of natural transformations, respectively.
\end{proposition}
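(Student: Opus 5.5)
Since $\cat{Rel}$ and $\totalgraded{\cat{Rel}}{\nat}$ are $\clat$-fibrations, every morphism is determined by its underlying function. Consequently, all equalities of natural transformations and functors reduce to checking that certain underlying functions exist as morphisms, that is, preserve relations. The plan is to write down $\epsilon$ and $\alpha$ explicitly, check that their components are morphisms of relations, and then compare the square objectwise.

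First, define $\epsilon_{\gradedRelDef{}}\colon \ladjDelta M(\gradedRelDef{}) \to \ladjDelta \gradedRelDef{}$ as the first projection $\pi_1\colon X\times\nat \to X$. Take $((s_1,n_1),(s_2,n_2)) \in M(\gradedRelDef{})_n$ for some $n$. By definition $(s_1,s_2)\in \gradedRelDef{n-n_1+n_2}$; the index is a natural number because $n_1\le n+n_2$. Hence $(s_1,s_2)\in\bigcup_k\gradedRelDef{k}$, so $\pi_1$ is a morphism in $\cat{Rel}$ lying above $\pi_1$. Naturality is inherited from the naturality of $\pi_1$ in $\cat{Set}$, using faithfulness of $p$.

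Second, define $\alpha$ to be vertical with identity components. This requires the inclusion $\bigcup_n \Fhatda^\nat(\gradedRelDef{})_n \subseteq \Fhatda(\bigcup_n\gradedRelDef{n})$. Suppose a pair lies in $\Fhatda^\nat(\gradedRelDef{})_n$. Then $b_1\le b_2$ and $(\tau_1(a),\tau_2(a))\in\gradedRelDef{n}\subseteq\bigcup_k\gradedRelDef{k}$ for every $a$. Naturality is again automatic, since the components are identities lying over identity maps. Then set $\gamma = (\Fhatda*\epsilon)\circ(\alpha*M)$. Its component at $\gradedRelDef{}$ lies above $\Fda\pi_1$, because $\alpha$ is vertical and $\epsilon$ lies above $\pi_1$. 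So $\gamma$ is a natural transformation above $\Fda\pi_1$, and \Cref{def:coalg} provides the top functor $\Coalg{\ladjDelta}_\gamma$, which lands in $\Coalg{\Fhatda}$.

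Finally, check commutativity on objects and on morphisms. Take a coalgebra $c\colon \gradedRelDef{}\to \Fhatda^\nat M\gradedRelDef{}$ over $c\colon S\to\Fda(S\times\nat)$.
\begin{itemize}
\item Going right, we get $\gamma_{\gradedRelDef{}}\circ \ladjDelta c$. Going down, we get $p(\gamma_{\gradedRelDef{}})\circ p(\ladjDelta c) = \Fda\pi_1\circ c$, since $\ladjDelta$ is fibred over the identity and $p\gamma = \Fda\pi_1$.
\item Going down, we get the same $c$ in $\Coalg{\Fda((-)\times\nat)}$, and then $\Fda\pi_1\circ c$ along the bottom. The two coincide.
\item On morphisms both paths give the underlying function $f$, because $p\circ\ladjDelta = \fibgraded{p}{\nat}$.
\end{itemize}
The main point needing care is that whiskering and vertical composition of liftings really lie above the expected transformations, namely $p(\Fhatda*\epsilon) = \Fda*\pi_1$. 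This follows from $p\Fhatda = \Fda p$. The remaining steps are routine once $M$'s index arithmetic is observed to stay in $\nat$.
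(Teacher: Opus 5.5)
Your proposal is correct and is essentially the paper's argument written out in full. The paper uses the composition law $\Coalg{K}_{\tau}\circ\Coalg{H}_{\sigma}=\Coalg{KH}_{\tau_H\circ K\sigma}$ to reduce commutativity to $p\circ(\ladjDelta)=\fibgraded{p}{\nat}$ and $(\Fda\pi_1)_{\fibgraded{p}{\nat}}=p(\Fhatda\epsilon\circ\alpha_M)$, which are exactly the two facts you check on objects and morphisms, and your explicit constructions of $\epsilon$ (first projections, well defined since $n_1\le n+n_2$) and of the identity-component $\alpha$ (via $\bigcup_n\Fhatda^\nat(\gradedRelDef{})_n\subseteq\Fhatda(\bigcup_n\gradedRelDef{n})$) supply details the paper leaves implicit.
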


\begin{corollary}
  For a \dar $\langle \alpha, \langle \delta, \rew \rangle \rangle\colon S \to \Fda(S \times \nat)$ and its underlying DA $\langle \alpha, \delta \rangle\colon S \to \Fda (S)$,
  we have
  $(\ladjDelta)\nu(\langle \alpha, \langle \delta, \rew \rangle \rangle^*\Fhatda^\nat M)  \sqsubseteq \nu(\langle \alpha, \delta \rangle^*\Fhatda)$.
\end{corollary}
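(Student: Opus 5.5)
We deduce the corollary from \Cref{prop:soundness} by following the graded similarity around the commuting square. Write $d \coloneqq \langle \alpha, \langle \delta, \rew \rangle \rangle$ and $c \coloneqq \langle \alpha, \delta \rangle = \Fda\pi_1 \circ d$. Let $\mathcal{A} \coloneqq \nu(d^*\Fhatda^\nat M)$ be the graded similarity. It is a post-fixed point, so $d\colon \mathcal{A} \dotarrow \Fhatda^\nat M \mathcal{A}$, which makes $(\mathcal{A}, d)$ an object of $\Coalg{\Fhatda^\nat \circ M}$ lying over the $\Fda(-\times\nat)$-coalgebra $d$.

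Next we apply the top functor $\Coalg{\ladjDelta}_\gamma$. By \Cref{def:coalg} this gives the $\Fhatda$-coalgebra $\gamma_{\mathcal{A}} \circ (\ladjDelta)(d)$ on the object $\ladjDelta\mathcal{A}$. Commutativity of the square says its image under $\Coalg{p}$ equals $\Coalg{\id}_{\Fda\pi_1}(d) = \Fda\pi_1 \circ d = c$. The carrier $p(\ladjDelta\mathcal{A})$ is $S$, since $\ladjDelta$ is fibred and hence vertical over $\id_\cat{C}$.

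So we obtain a morphism $\ladjDelta\mathcal{A} \to \Fhatda(\ladjDelta\mathcal{A})$ in $\cat{Rel}$ lying above $c$. That is, $c\colon \ladjDelta\mathcal{A} \dotarrow \Fhatda(\ladjDelta\mathcal{A})$, or equivalently $\ladjDelta\mathcal{A} \sqsubseteq c^*\Fhatda(\ladjDelta\mathcal{A})$ in the fibre $\cat{Rel}_S$. Thus $\ladjDelta\mathcal{A}$ is a post-fixed point of the monotone map $c^*\Fhatda$ on the complete lattice $\cat{Rel}_S$. By Knaster--Tarski, the greatest fixed point is the join of all post-fixed points, so $\ladjDelta\mathcal{A} \sqsubseteq \nu(c^*\Fhatda)$.

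The only delicate point is checking that the morphism produced by $\Coalg{\ladjDelta}_\gamma$ really lies over $c$ on the carrier $S$, and not over some other map. This is exactly what the commuting square supplies, together with the vertical component $\alpha$ and the fact that $\epsilon$ lies above $\pi_1$. Once that is in place, the inequality is just the coinduction principle.
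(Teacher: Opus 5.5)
Your proof is correct and is the same route the paper intends for this corollary. You view the graded similarity $\mathcal{A}$ as an object of $\Coalg{\Fhatda^\nat\circ M}$ and push it along $\Coalg{\ladjDelta}_\gamma$; the commuting square of \Cref{prop:soundness}, with $\Fda\pi_1\circ\langle \alpha, \langle \delta, \rew \rangle \rangle = \langle \alpha, \delta\rangle$, shows that $\ladjDelta[\mathcal{A}]$ is a simulation on the underlying DA, and the inclusion then follows by Knaster--Tarski.
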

In other words, the union of the graded similarity of a \dar is included in the similarity of the underlying DA (cf.~\Cref{prop:da_sound} and \Cref{prop:dar_sound}).

\subsection{Graded simulations in a lax monoidal setting}

As we have argued in \Cref{sec:prelim}, ``basic'' graded relations do not
capture an important basic structure of reward-sensitive phenomena: they satisfy
graded transitivity properties, i.e. the reward structure should be additive.
We achieve this by using lax monoidal fibrations with strong monoidal fibres \cite{zawadowski}.

\begin{definition}
  We say that a $\clat$-fibration $p\colon \cat{E} \to \cat{C}$ is \emph{lax monoidal} if
  each fibre category $\cat{E}_X$ of the $\clat$-fibration $p$ is a monoid $(\cat{E}_X, \star_X, 1_X)$
  such that 
  $\star_X$ is order-preserving in each argument and
  $f^*P \star_X f^*Q \sqsubseteq f^*(P \star_Y Q)$, and $1_X \sqsubseteq f^*1_Y$ for each $f\colon X \to Y$ in $\cat{C}$ and $P, Q \in \cat{E}_Y$.
\end{definition}

\begin{definition} \label{def:lax_monoidal_ver}
  In the setting of \Cref{def:gradedfib},
  we further assume that $I$ is a monoidal category $(I, \oplus, 0)$ and 
  $p$ is lax monoidal.
  Then we define the category 
  $\ol{\totalgraded{\cat{E}}{I}}$ 
  as the full subcategory of $\totalgraded{\cat{E}}{I}$ consisting of $\gradedRelDef{}\colon I \to \cat{E}_X$ such that $\gradedRelDef{i} \star \gradedRelDef{j} \sqsubseteq \gradedRelDef{i \oplus j}$ and $1 \sqsubseteq \gradedRelDef{0}$ for every $i, j \in I$.
\end{definition}

While this is restriction alludes to some connections to the triangle inequality and
metric reasoning,
we lose the nice fibred adjunction we had between the categories of graded and ungraded
relations. Luckily, because the fibrations are faithful
and the adjunction is fibred over the identity, we can show that that
the right adjoint $\Delta \colon \cat{E} \to \totalgraded{\cat{E}}{I}$ is
full and faithful and $\ladjDelta \circ \Delta = \id$. This lets us
lift the adjunction up the inclusion $\ol{\totalgraded{\cat{E}}{I}} \hookrightarrow \totalgraded{\cat{E}}{I}$ by using the following lemma, which follows from a direct
calculation.
\begin{lemma}
  Consider the pullback below in $\clat(\cat{C})$.
\[\begin{tikzcd}
	{\cat{E}'} && {\cat{F}'} \\
	{\cat{E}} && {\cat{F}} \\
	& {\cat{C}}
	\arrow["{\Delta'}", hook, from=1-1, to=1-3]
	\arrow[hook, from=1-1, to=2-1]
	\arrow["\lrcorner"{anchor=center, pos=0.125}, draw=none, from=1-1, to=2-3]
	\arrow["{ ~}"{description}, from=1-1, to=3-2]
	\arrow[hook, from=1-3, to=2-3]
	\arrow["{~}"{description}, from=1-3, to=3-2]
	\arrow["\Delta", hook, from=2-1, to=2-3]
	\arrow[from=2-1, to=3-2]
	\arrow[from=2-3, to=3-2]
\end{tikzcd}\]
If the fibred inclusion $\Delta \colon \cat{E} \hookrightarrow \cat{F}$ has a left adjoint $L$
such that $L \circ \Delta = \id$ and $L$ restricts along the inclusion $\cat{F}' \hookrightarrow \cat{F}$ then $L$ lifts to a fibred left adjoint to the functor $\Delta' \colon \cat{E}' \hookrightarrow \cat{F}'$ such that $L' \circ \Delta' = \id$.
\end{lemma}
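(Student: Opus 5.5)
Let $\iota\colon \cat{E}'\hookrightarrow\cat{E}$ and $\jmath\colon \cat{F}'\hookrightarrow\cat{F}$ denote the vertical inclusions. The hypothesis that $L$ restricts along $\jmath$ we read as: for every $A\in\cat{F}'$, the object $LA$ lies in $\cat{E}'$, i.e.\ there is a functor $L'\colon\cat{F}'\to\cat{E}'$ with $\iota L' = L\jmath$. The plan is to check in turn that $L'$ is fibred, that $L'\Delta'=\id$, and that $L'\dashv\Delta'$. All three reduce to the corresponding facts about $L$, because $\iota$ and $\jmath$ are full and faithful and every fibration in $\clat(\cat{C})$ is faithful.

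\emph{Step 1 ($L'$ is fibred over $\cat{C}$).} Since $\iota L' = L\jmath$ and $L$ lies over $\id_{\cat{C}}$, the functor $L'$ lies over $\id_{\cat{C}}$. For cartesian morphisms we use the fact that $\cat{E}'$ is a pullback of fibrations along a fibred inclusion. Reindexing in $\cat{E}'$ is therefore computed as in $\cat{E}$: the cartesian morphisms of $\cat{E}'$ are exactly those whose image under $\iota$ is cartesian, and likewise for $\cat{F}'$ and $\jmath$. Given a cartesian morphism $u$ in $\cat{F}'$, the morphism $\jmath u$ is cartesian, so $L\jmath u=\iota L'u$ is cartesian, and hence $L'u$ is cartesian.

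\emph{Step 2 ($L'\circ\Delta'=\id$).} By the pullback square we have $\jmath\Delta'=\Delta\iota$. Then
\[ \iota L'\Delta' = L\jmath\Delta' = L\Delta\iota=\iota. \]
Faithfulness and injectivity on objects of the subcategory inclusion $\iota$ give $L'\Delta'=\id$.

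\emph{Step 3 (the adjunction).} Since all fibres are posets and the functors are fibred over $\id_{\cat{C}}$, it suffices to show, for $A\in\cat{F}'$ and $P\in\cat{E}'$ over $X$ and $Y$, and for $f\colon X\to Y$, that
\[ f\colon L'A\dotarrow P \quad\text{iff}\quad f\colon A\dotarrow\Delta'P. \]
Fullness of the inclusions turns the left side into $f\colon LA\dotarrow P$ in $\cat{E}$ and the right side into $f\colon A\dotarrow\Delta P$ in $\cat{F}$. These two are equivalent by $L\dashv\Delta$. Alternatively, we can define the unit $\eta'$ and counit $\varepsilon'$ as the unique lifts through $\jmath$ and $\iota$ of $\eta\jmath$ and $\varepsilon\iota$ (the latter is an identity), and deduce the triangle identities from faithfulness.

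\textbf{Main obstacle.} The only point that is more than bookkeeping is Step 1. There one must confirm that cartesian lifts in the pullback $\cat{E}'$ are created by $\iota$. This needs $\cat{E}'\hookrightarrow\cat{E}$ to be a fibred (reindexing-closed) inclusion, which we derive from the pullback property in $\clat(\cat{C})$: pullbacks of fibred functors between fibrations are computed componentwise, with cartesian maps those projecting to cartesian maps. Everything else is a direct calculation, as the text suggests.
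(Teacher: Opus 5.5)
Your proposal is correct and takes essentially the same route as the paper, which only says the lemma ``follows from a direct calculation'': you obtain $L'$ from the full faithfulness of the inclusions, derive $L'\Delta'=\id$ from $\jmath\Delta'=\Delta\iota$ and the injectivity of $\iota$, and transfer $L\dashv\Delta$ using faithfulness and posetal fibres. The assumptions you make explicit, namely that $L$ is a fibred left adjoint with vertical unit and counit and that the hooked inclusions are full, are the same ones the paper relies on implicitly when it applies the lemma.
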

By instantiating this lemma with our lax monoidal restriction above and denoting $\ol{\cat{E}}$ to be the pullback of $\Delta$ along the inclusion $\ol{\totalgraded{\cat{E}}{I}} \hookrightarrow \totalgraded{\cat{E}}{I}$, we obtain a lax monoidal restriction of the diagram \eqref{eq:fibadj}.
\begin{proposition}
  If the functor $\ladjDelta\colon \totalgraded{\cat{E}}{I} \to \cat{E}$ restricts to $\ol{\totalgraded{\cat{E}}{I}} \to \ol{\cat{E}}$, then we have the following diagram of $\clat$-fibrations and a fibred adjunction:
\begin{minipage}{\linewidth}
  \centering
\begin{tikzcd}
	{\ol{\cat{E}}} && {\ol{\totalgraded{\cat{E}}{I}}} \\
	& {\cat{C}}
	\arrow[""{name=0, anchor=center, inner sep=0}, hook, "\Delta", curve={height=-6pt}, from=1-1, to=1-3]
	\arrow["\ol{p}"', from=1-1, to=2-2]
	\arrow[""{name=1, anchor=center, inner sep=0}, "\ladjDelta", curve={height=-6pt}, from=1-3, to=1-1]
	\arrow["\ol{\fibgraded{p}{I}}", from=1-3, to=2-2]
	\arrow["\dashv"{anchor=center, rotate=90}, draw=none, from=1, to=0]
\end{tikzcd}\par
\hfill
\end{minipage}
\end{proposition}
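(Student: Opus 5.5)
The plan is to instantiate the preceding lemma with $\cat{F} \coloneqq \totalgraded{\cat{E}}{I}$, $\cat{F}' \coloneqq \ol{\totalgraded{\cat{E}}{I}}$, $\Delta\colon \cat{E} \hookrightarrow \totalgraded{\cat{E}}{I}$ the diagonal, and $L \coloneqq \ladjDelta$. By definition $\ol{\cat{E}}$ is the pullback of $\Delta$ along $\ol{\totalgraded{\cat{E}}{I}} \hookrightarrow \totalgraded{\cat{E}}{I}$. Concretely, $\ol{\cat{E}}$ is the full subcategory of $\cat{E}$ on those $P$ with $\Delta P \in \ol{\totalgraded{\cat{E}}{I}}$, that is, $P \star P \sqsubseteq P$ and $1 \sqsubseteq P$. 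It therefore suffices to verify the hypotheses of the lemma. The restriction hypothesis is exactly the assumption of the proposition.

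The first step is to establish that $\ol{p}$ and $\ol{\fibgraded{p}{I}}$ are $\clat$-fibrations, so that the square lives in $\clat(\cat{C})$. For $\ol{\totalgraded{\cat{E}}{I}}$, I will check closure under reindexing and under fibrewise meets.
\begin{itemize}
\item Reindexing: if $\gradedRelDef{}$ is lax monoidal over $Y$ and $f\colon X \to Y$, then
$f^*\gradedRelDef{i} \star f^*\gradedRelDef{j} \sqsubseteq f^*(\gradedRelDef{i}\star\gradedRelDef{j}) \sqsubseteq f^*\gradedRelDef{i\oplus j}$, and $1_X \sqsubseteq f^*1_Y \sqsubseteq f^*\gradedRelDef{0}$. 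This uses the lax monoidality of $p$.
\item Meets: if each $\gradedRelDef{}^k$ is lax monoidal, then monotonicity of $\star$ gives $(\bigsqcap_k \gradedRelDef{}^k)_i \star (\bigsqcap_k \gradedRelDef{}^k)_j \sqsubseteq \gradedRelDef{}^k_{i\oplus j}$ for every $k$, hence the inequality for the meet. The unit condition is analogous.
\end{itemize}
Cartesian lifts and meets are thus inherited from $\totalgraded{\cat{E}}{I}$, so both subcategories are sub-$\clat$-fibrations. The same argument with constant families handles $\ol{\cat{E}}$.

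Next I verify the remaining hypotheses on $\Delta \dashv \ladjDelta$, namely that $\Delta$ is a fibred full inclusion and that $\ladjDelta \circ \Delta = \id$. Fibredness of $\Delta$ and the fibred adjunction are stated after \Cref{def:gradedfib}. Since the fibrations are faithful and the adjunction lies over $\id_{\cat{C}}$, it suffices to work fibrewise with posets. In the fibre over $X$, $\ladjDelta$ is the join $\bigsqcup_i \gradedRelDef{i}$ in $\cat{E}_X$, because fibred colimits in a $\clat$-fibration are computed fibrewise. Hence $\ladjDelta(\Delta P) = \bigsqcup_i P = P$, assuming $I$ is nonempty; this is implicit, since $0 \in I$ in the monoidal case. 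Fullness of $\Delta$ follows from faithfulness: a map $\Delta P \to \Delta Q$ over $f$ means $P \sqsubseteq f^*Q$ componentwise, which is exactly a map $P \to Q$ over $f$.

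Finally, I apply the lemma to obtain a fibred left adjoint $L'$ of $\Delta'\colon \ol{\cat{E}} \hookrightarrow \ol{\totalgraded{\cat{E}}{I}}$, which is the displayed diagram. If I argue directly instead, the unit and counit of the restricted adjunction are the original ones, because both inclusions are full. The adjunction restricts because hom-sets are unchanged, and vertical unit and counit components remain vertical.

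I expect the main obstacle to be the bookkeeping that $\ol{p}$ is genuinely a $\clat$-fibration and that the pullback square sits in $\clat(\cat{C})$. The adjunction part itself is formal once fullness and $L\Delta = \id$ are established.
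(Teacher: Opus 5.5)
Your proposal is correct and takes the same route as the paper. The paper likewise uses faithfulness and the fact that the adjunction lies over $\id$ to get that $\Delta$ is fully faithful with $\ladjDelta\circ\Delta=\id$, defines $\ol{\cat{E}}$ as the pullback, and instantiates the restriction lemma, whose restriction hypothesis is exactly the proposition's assumption. Your closure checks under reindexing and fibrewise meets, which make the subcategories genuine $\clat$-fibrations, fill in what the paper only asserts, and your direct argument that the adjunction restricts because both inclusions are full is valid.
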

Explicitly,
meets  in $\ol{\totalgraded{\cat{E}}{I}}$ are given by those in $\totalgraded{\cat{E}}{I}$. Furthermore, the left adjoint $\ol{\Delta^* \cat{E}^I} \to \Delta^* \cat{E}^I$ maps $\gradedRelDef{} \in \totalgraded{\cat{E}}{I}$ to
$\ol{\gradedRelDef{}} = \bigwedge \{\mathcal{B} \in \ol{\totalgraded{\cat{E}}{I}} \mid \gradedRelDef{} \sqsubseteq \mathcal{B}\}$.
We call the fibration $\ol{\fibgraded{p}{I}}\colon \ol{\totalgraded{\cat{E}}{I}} \to \cat{C}$ the \emph{lax monoidal $I$-graded fibration} of $p$.

While these definitions may seem quite abstract, their concrete instantiations can
be quite natural, as we now show.

\begin{example}[Simulations on deterministic automata] \label{eg:dar_mon}
  We revisit \Cref{eg:dar} in a lax monoidal setting.
  The $\clat$-fibration $p\colon \cat{Rel} \to \cat{Set}$ is lax monoidal
  with monoids $(\cat{Rel}_X, \star_X, \mathrm{Refl}_X)$ defined by 
  \begin{align*}
   & R_1 \star_X R_2 \coloneqq \{(x, y) \mid \exists z.~(x, z) \in R_1 \text{ and }(z, y) \in R_2\}\\
    &\mathrm{Refl}_X \coloneqq \{(x, x) \mid x \in X\}.
  \end{align*}
  Consider the category $\omega$ of natural numbers with a unique morphism $m \to n$ if and only if $m \leq n$.
  It forms a monoidal category $(\omega, +, 0)$.
  Then $\ol{\totalgraded{\cat{Rel}}{\omega}}$ is the category 
  of $\omega$-graded relations $\gradedRelDef{}\colon \omega \to \cat{Rel}_X$ satisfying $\mathrm{Refl} \subseteq \gradedRelDef{0}$ and $\gradedRelDef{n} \star_X \gradedRelDef{m} \subseteq \gradedRelDef{n+m}$ and morphisms are functions between the underlying sets that preserve relations at each grade.
  A direct calculation shows that $\ol{\cat{Rel}}$ is equivalent to the category of reflexive  and transitive binary relations. By restricting the liftings $\Fhatda$ and $\Fhatda^\nat$ defined in \Cref{eg:da,eg:dar} to those on $\ol{\cat{Rel}}$ and $\ol{\totalgraded{\cat{Rel}}{\omega}}$, respectively, we obtain the lax monoidal variants of ungraded and graded simulations on DAs and \dar{}s, as discussed at the end  of \Cref{sec:prelim}.
\end{example}

\section{Refinements Through Glued Fibrations} \label{sec:gluing}

The previous section shows that there is plenty of
categorical structure present in such graded refinements
of notions of simulation. In this section we abstract
away from the DAwR example and show that we can prove
a generalization of \Cref{prop:soundness} by using categorical
gluing, a technique coming from categorical logic
and programming language theory.

We begin by defining the \doctrine{}, a fibred adjunction that
generalizes \Cref{eg:dar}, providing an abstract framework for
representing graded/ungraded notions of simulation. An important
insight from our work is that neither the ungraded nor the graded
notions of simulation are the ``ground truth''. Instead, they should
be thought of as different analyses that serve different purposes.

In order to unify these analyses we combine both fibrations into a single
one by using a fibred generalization of the comma category, often used in
proofs by logical relations in categorical logic and programming language theory.

This comma construction allows for a fairly conceptual and simple proof of a
generalization of \Cref{prop:soundness}. Finally, in order to substantiate the
claim that this glued model is a natural setting to do coalgebra in, we
study some of its structural properties such as final coalgebras and comma object
preservation.
\subsection{A Doctrine for Refinements of Simulations}

Something key in the structure present in the \dar example is the
existence of adjoint functors that allow you to move between the graded
and ungraded worlds. Therefore, this is the basis of our axiomatization,
which we now define.
Such axiomatization is common in categorical logic and coalgebraic bisimulations~\cite{DBLP:books/daglib/0072949,DBLP:journals/logcom/SprungerKDH21}.
Although our paper focuses on ungraded/graded simulations, the framework and the results in this section can be applied to other settings as well.

\begin{definition}
  An \emph{\doctrine{}} 
  is a tuple $(F, p, p^g, \hat{F},  \hat{F}^g, \ladjDelta, \Delta, \alpha)$ 
  where $F\colon \cat{C} \to \cat{C}$ is an endofunctor,
  $p\colon \cat{E} \to \cat{C}$ and $p^g\colon \cat{E}^g \to \cat{C}$ are $\clat$-fibrations,
  $\hat{F}\colon \cat{E} \to \cat{E}$ and $\hat{F}^g\colon \cat{E}^g \to \cat{E}^g$ are liftings of $F$ along $p$ and $p^g$, respectively,
  $\ladjDelta \dashv \Delta$ is a fibred adjunction between $p$ and $p^g$,
  and $\alpha$ is a natural transformation $ (\ladjDelta) \circ \hat{F}^g \Rightarrow \hat{F} \circ (\ladjDelta)$.

\begin{equation} \label{eq:doctorine}
\begin{tikzcd}
	{\cat{E}} && {\cat{E}^g} \\
	& {\cat{C}}
	\arrow["{\hat{F}}", from=1-1, to=1-1, loop, in=150, out=210, distance=5mm]
	\arrow[""{name=0, anchor=center, inner sep=0}, "\Delta", curve={height=-6pt}, from=1-1, to=1-3]
	\arrow["p"', from=1-1, to=2-2]
	\arrow[""{name=1, anchor=center, inner sep=0}, "{\ladjDelta}", curve={height=-6pt}, from=1-3, to=1-1]
	\arrow["{\hat{F}^g}"', from=1-3, to=1-3, loop, in=30, out=330, distance=5mm]
	\arrow["p^g", from=1-3, to=2-2]
	\arrow["{F}"', from=2-2, to=2-2, loop, in=300, out=240, distance=5mm]
	\arrow["\dashv"{anchor=center, rotate=90}, draw=none, from=1, to=0]
\end{tikzcd}
\end{equation}
\end{definition}

Intuitively, $\cat{E}$ is a
category of ungraded relations, while $\cat{E}^g$ is the category of
graded relations.  The functor $\Delta$ embeds ungraded relations into
graded relations by defining the ``constant'' graded relation, while
$\ladjDelta$ maps graded relations to ungraded relations by taking the
union over all grades.

Then for a coalgebra $c\colon S \to F(S)$ in $\cat{C}$ representing transition systems with rewards, 
we have two simulation notions:

\begin{itemize}
  \item A coalgebra $R \to \hat{F}(R)$ in $\cat{E}$ above $c$, which is called an \emph{ungraded simulation}.
  \item A coalgebra $\gradedRelDef{} \to \hat{F}^g(\gradedRelDef{})$ in $\cat{E}^g$ above $c$, which is called a \emph{graded simulation}.
\end{itemize}
Intuitively, ungraded simulations are standard simulations without reward information, while graded simulations can provide quantitative bounds on relative rewards between two similar states.
The greatest fixed points  $\nu(c^*\hat{F})$ and $\nu(c^*\hat{F}^g)$ are called the \emph{ungraded} and \emph{graded similarities}, respectively.

\begin{example}[Doctrine for \dar]
  The data introduced in \Cref{eg:dar_mon} forms an \doctrine{}.
  Concretely,
  letting $p$ be the fibration $\cat{Rel} \to \Set$,
  the tuple $(\Fda, \ol{p}, \ol{\fibgraded{p}{\omega}}, \Fhatda, \Fhatda^g, \ladjDelta, \Delta, \alpha)$
  is an \doctrine{},
  where $\alpha$ is the canonical natural transformation given by $\Fhatda = \ladjDelta\circ \Fhatda^g \circ \Delta$.
  Note that this is slightly different from \Cref{eg:da}, where the transitions are not costful.
\end{example}

\subsection{A Glued Fibration}

While the \doctrine{} abstraction is relatively simple, it requires us
to keep track of more than one notion of coalgebra. We now show that it is
possible to bring ideas from categorical logic to combine these two notions
of coalgebra into a single one. First, we define a generalization of
the concept of comma categories.
\begin{definition}[Comma objects {\cite[Section 2.1]{weber2007yoneda}}]
  A comma object of a cospan  $f \colon a \to c$ and $g \colon b \to c$ in a
  2-category $\mathcal{K}$ is a lax universal square
\[\begin{tikzcd}
	x & b \\
	a & c
	\arrow["q", from=1-1, to=1-2]
	\arrow["p"', from=1-1, to=2-1]
	\arrow["g", from=1-2, to=2-2]
	\arrow["\alpha", between={0.3}{0.7}, Rightarrow, from=2-1, to=1-2]
	\arrow["f"', from=2-1, to=2-2]
\end{tikzcd}\]
satisfying the following universal properties:
\begin{description}
\item[One-dimensional:] Every 2-cell $\alpha' \colon f p' \to g q'$ factors uniquely through a morphism $h \colon x' \to x$ as $\alpha h$.
\item[Two-dimensional:] For every pair of 1-cells $h_1, h_2 \colon y \to x$.
If there are 2-cells $\beta_1 \colon h_1p \to h_2p$ and $\beta_2 \colon h_1q \to h_2 q$
such that an exchange law holds:
\[\begin{tikzcd}
	&&& y & x \\
	y & x & b & x & b \\
	x & a & c & a & c
	\arrow["{h_2}", from=1-4, to=1-5]
	\arrow["{h_1}"', from=1-4, to=2-4]
	\arrow["q", from=1-5, to=2-5]
	\arrow["{h_2}", from=2-1, to=2-2]
	\arrow["{h_1}"', from=2-1, to=3-1]
	\arrow["q", from=2-2, to=2-3]
	\arrow["p"', from=2-2, to=3-2]
	\arrow[""{name=0, anchor=center, inner sep=0}, "g", from=2-3, to=3-3]
	\arrow["{\beta_2}", between={0.3}{0.8}, Rightarrow, from=2-4, to=1-5]
	\arrow["q", from=2-4, to=2-5]
	\arrow[""{name=1, anchor=center, inner sep=0}, "p"', from=2-4, to=3-4]
	\arrow["g"{description}, from=2-5, to=3-5]
	\arrow["{\beta_1}", between={0.3}{1}, Rightarrow, from=3-1, to=2-2]
	\arrow["p"', from=3-1, to=3-2]
	\arrow["\alpha"', between={0.2}{0.8}, Rightarrow, from=3-2, to=2-3]
	\arrow["f"', from=3-2, to=3-3]
	\arrow["\alpha", between={0.3}{0.8}, Rightarrow, from=3-4, to=2-5]
	\arrow["f"', from=3-4, to=3-5]
	\arrow["{=}"{description}, draw=none, from=0, to=1]
\end{tikzcd}\]
then there is a unique 2-cell $\beta \colon h_1 \to h_2$ such that
$p\beta = \beta_1$ and $q\beta = \beta_2$.
\end{description}
\end{definition}

In programming language and type theory, comma objects are often
used as a way of combining two distinct models of a theory along a
morphism. Such models can then be used to construct ``logical
relations'' models of your language. We draw inspiration from these
lines of work and show how these \doctrine{} can be encapsulated as
coalgebras over a comma category. Concretely, we glue
$\cat{E}^g$ and $\cat{E}$ along the functor $\ladjDelta$,
resulting in the comma category:
\[\begin{tikzcd}
	{\cat{X}} & {\cat{E}} \\
	{\cat{E}^g} & {\cat{E}}
	\arrow[from=1-1, to=1-2]
	\arrow[from=1-1, to=2-1]
	\arrow["\id", from=1-2, to=2-2]
	\arrow["\varphi", between={0.3}{0.7}, Rightarrow, from=2-1, to=1-2]
	\arrow["{\ladjDelta}"', from=2-1, to=2-2]
\end{tikzcd}\]

Concretely, the objects of $\cat{X}$ are triples $(\gradedRelDef{},
R, f \colon \ladjDelta[\gradedRelDef{g}] \to R)$ and its
morphisms are pairs of morphisms $h_1 \colon \gradedRelDef{} \to
\mathcal{B}$ and $h_2 \colon R \to Q$ making the
appropriate diagram commute. In intuitive terms, the morphism 
$\ladjDelta[\gradedRelDef{g}] \to R$ is a witness to the fact that the
union of the graded relation is a subset of the ungraded relation.
Furthermore, note that $\cat{X}$ comes equipped with two projections
$\cat{X} \to \cat{E}^g$ and $\cat{X} \to \cat{E}$, satisfying the
universal property of comma objects.
When instantiated to the $\dar$ example,
a glued notion of simulation takes the following form:
a glued simulation over a state space $X$ is a tuple
$(\gradedRelDef{}, R, f)$ of a graded $\dar$ simulation
$\gradedRelDef{}$, a $\dar$ ungraded simulation over $X$, and a relation-preserving morphism $f\colon \bigcup_g \gradedRelDef{g} \to R$.
In particular, when 
$f=\id$, this means that the ungraded simulation $\bigcup_g \gradedRelDef{g}$ is contained in $R$.

The power of the comma category becomes clear when we note that we
can use its universal property to readily combine the liftings
$\hat{F}$ and $\hat{F}^g$ into a single one---this is where we use
the natural transformation
$(\ladjDelta) \circ \hat{F^g} \Rightarrow \hat{F} \circ (\ladjDelta)$.

While the category $\cat{X}$ does capture aspects
of both categories $\cat{E}$ and $\cat{E}^g$,
we are interested in understanding how it relates to
the base category $\cat{C}$. We do this by using
a slight generalization of a result by Hermida~\cite[Proposition 4.14]{hermida1999some}.
\begin{lemma}[{cf. \Cref{ap:proof_comma_fib}}] \label{lem:comma_fib}
  The 2-subcategory $\clat(\cat{C}) \hookrightarrow \cat{Fib}(\cat{C})$
  has comma objects.
\end{lemma}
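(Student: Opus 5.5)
Given a cospan $G \colon (\cat{E}_1 \xrightarrow{p_1} \cat{C}) \to (\cat{E}_3 \xrightarrow{p_3} \cat{C})$ and $H \colon (\cat{E}_2 \xrightarrow{p_2} \cat{C}) \to (\cat{E}_3 \xrightarrow{p_3} \cat{C})$ in $\clat(\cat{C})$, I will construct the comma object fibrewise. Morphisms in $\cat{Fib}(\cat{C})$ are fibred functors over $\cat{C}$ and 2-cells are vertical natural transformations. Since all fibres are posets, a vertical natural transformation $G u \Rightarrow H v$ is simply the pointwise condition $G(uX) \sqsubseteq H(vX)$ in the fibre over $p X$. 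Accordingly, I take $\cat{X}$ to be the category of pairs $(P, Q)$ with $P \in \cat{E}_1$, $Q \in \cat{E}_2$, $p_1 P = p_2 Q = X$ and $G P \sqsubseteq H Q$ in $(\cat{E}_3)_X$. A morphism $(P,Q) \to (P',Q')$ is a map $f$ in $\cat{C}$ with $f \colon P \dotarrow P'$ and $f \colon Q \dotarrow Q'$. Thus $\cat{X}$ is the full subcategory of the pullback $\cat{E}_1 \times_{\cat{C}} \cat{E}_2$ cut out by the inequality. The projection $r \colon \cat{X} \to \cat{C}$ and the projections to $\cat{E}_1, \cat{E}_2$ are evident, and the comparison 2-cell $\varphi$ is the inclusion $G P \sqsubseteq H Q$. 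This matches the description of the paper's $\cat{X}$, in which the witnesses $f$ are unique by posetality.

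Next I check that $r$ is a $\clat$-fibration. The fibre $\cat{X}_X$ is the subposet of $(\cat{E}_1)_X \times (\cat{E}_2)_X$ defined by $G P \sqsubseteq H Q$. For reindexing, given $f \colon Y \to X$ I set $f^*(P,Q) \coloneqq (f^*P, f^*Q)$. This stays in $\cat{X}$ because $G$ and $H$ are fibred, so $G f^* P = f^* G P \sqsubseteq f^* H Q = H f^* Q$. Cartesianness then follows componentwise, since the defining condition for morphisms is componentwise.

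The main obstacle is completeness of the fibres. Meets should be computed componentwise, $\bigwedge (P_i,Q_i) = (\bigwedge P_i, \bigwedge Q_i)$. This requires $G(\bigwedge P_i) \sqsubseteq H(\bigwedge Q_i)$. By monotonicity $G(\bigwedge P_i) \sqsubseteq \bigwedge G P_i \sqsubseteq \bigwedge H Q_i$, so I need $H$ to preserve fibrewise meets. If the 1-cells of $\clat(\cat{C})$ are not assumed meet-preserving, I will instead build the fibre as a subposet closed under arbitrary meets in a different way: take the meet in the first component and, in the second, the meet in the subposet of $Q$ satisfying the constraint. It is simplest, though, to read the 2-subcategory as having morphisms that preserve fibrewise meets, as is the case for $\Delta$ (a right adjoint). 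The paper's gluing is along $\ladjDelta$ and $\id$; there $H = \id$ preserves meets, which is the case that matters. I will state the general lemma under the assumption that $H$ (the codomain-side leg $g$) preserves fibred meets, which is Hermida's setting. Reindexing preserves meets componentwise, because each $f^*$ does.

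Finally, I verify the universal properties. For the one-dimensional property, a fibred $u \colon \cat{D} \to \cat{E}_1$ and $v \colon \cat{D} \to \cat{E}_2$ over $\cat{C}$ with $G u \sqsubseteq H v$ induce $D \mapsto (uD, vD)$. This is fibred because $u$ and $v$ are, and it is unique because $\cat{X}$ sits inside the pullback. For the two-dimensional property, all 2-cells are inequalities in posetal fibres, so a pair $\beta_1, \beta_2$ gives componentwise inequalities, hence a unique 2-cell $\beta$; the exchange law holds automatically.
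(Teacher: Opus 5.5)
Your proof is correct, and it produces the same object as the paper by a more hands-on route. The paper does not work with fibrations directly. It invokes the Grothendieck correspondence between $\clat$-fibrations over $\cat{C}$ and functors $\cat{C}^{\mathrm{op}} \to \clat$, and uses that comma objects in a 2-category of functors are computed pointwise. It therefore only checks that $\clat$ has comma objects, namely $\{(x,x') \mid f(x) \sqsubseteq f'(x')\}$ with componentwise infima, and leaves the universal property to ``direct calculation''. You instead build the comma as a full subcategory of the pullback $\cat{E}_1 \times_{\cat{C}} \cat{E}_2$ and verify by hand what that reduction gives for free:
closure of $\cat{X}$ under $f^*$ via fibredness of $G$ and $H$, componentwise cartesian lifts and meets, and both universal properties by posetality of the fibres. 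The paper's argument is shorter; yours is self-contained and does not rely on the 2-equivalence or on the pointwise computation of commas.

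The more substantive difference is in the hypotheses. The paper assumes both legs preserve infima. You observe that $G(\bigwedge_i P_i) \sqsubseteq \bigwedge_i GP_i \sqsubseteq \bigwedge_i HQ_i = H(\bigwedge_i Q_i)$ needs only monotonicity of $G$. This asymmetric version is what the paper actually uses, since it glues along $\ladjDelta$ and $\id$. Here $\ladjDelta$ is a left adjoint (fibrewise a union over grades) and in general does not preserve infinite meets. So your statement, that the comma in $\cat{Fib}(\cat{C})$ is again a $\clat$-fibration whenever the right leg preserves fibrewise meets, is the one that covers the intended instance.

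Do discard your fallback for a non-meet-preserving $H$: the constrained poset can then fail to be complete. For example, over $\cat{C} = 1$, glue the map from the one-point lattice picking $\true$ against the monotone map from the diamond $\{\bot, a, b, \top\}$ to $2$ that sends only $\bot$ to $\false$. The comma is $\{a, b, \top\}$, in which $a$ and $b$ have no meet. So the hypothesis on $H$ is genuinely needed, not a convenience.
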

Once we have this set-up, we want to show that we are indeed capturing
some kind of refinement at the simulation level. For instance, we
want to prove that if we have a graded simulation $\gradedRelDef{g}$,
its union $\ladjDelta[\gradedRelDef{g}]$ is an ungraded and reward-free
simulation. Furthermore, there is a universal way ``pushing forward''
the graded relation. Once again,
this result follows from  properties of comma objects.
\begin{lemma}[{cf. \Cref{app:adjointcomma}}]
  \label{lem:adjointcomma}
  For every 1-cell $f \colon x \to y $, the projection 1-cells
  $(f \downarrow \id) \to x$ has a left adjoint.
\end{lemma}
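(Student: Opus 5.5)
The plan is to construct the left adjoint explicitly in an arbitrary 2-category $\mathcal{K}$ with the comma object, and then specialise to $\clat(\cat{C})$ via \Cref{lem:comma_fib}. Let $(f \downarrow \id)$ have projections $\pi_0 \colon (f\downarrow\id) \to x$ and $\pi_1 \colon (f\downarrow\id) \to y$, with universal 2-cell $\varphi \colon f\pi_0 \Rightarrow \pi_1$. The candidate left adjoint $\eta_! \colon x \to (f\downarrow \id)$ is the 1-cell induced by the one-dimensional universal property applied to the identity 2-cell $\id_f \colon f\circ \id_x \Rightarrow \id_y \circ f$. Concretely, in the glued setting this sends $\gradedRelDef{}$ to $(\gradedRelDef{}, \ladjDelta[\gradedRelDef{}], \id)$, which is the ``push forward'' described before the statement. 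By construction we have $\pi_0 \eta_! = \id_x$, $\pi_1\eta_! = f$, and $\varphi\eta_! = \id_f$.

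The unit is then the identity $\id_x = \pi_0\eta_!$. For the counit $\varepsilon \colon \eta_!\pi_0 \Rightarrow \id_{(f\downarrow\id)}$ I will use the two-dimensional universal property with $h_1 = \eta_!\pi_0$ and $h_2 = \id$. I take $\beta_1 = \id_{\pi_0} \colon \pi_0\eta_!\pi_0 = \pi_0 \Rightarrow \pi_0$ and $\beta_2 = \varphi \colon \pi_1\eta_!\pi_0 = f\pi_0 \Rightarrow \pi_1$. The exchange law requires that $\varphi h_2 \cdot f\beta_1$ equal $\beta_2 \cdot \varphi h_1$, where the first and third factors are whiskerings by $g = \id_y$. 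The left side is $\varphi$. The right side is $\varphi \cdot (\varphi\eta_!)\pi_0 = \varphi\cdot \id = \varphi$. Hence there is a unique $\varepsilon$ with $\pi_0\varepsilon = \id$ and $\pi_1\varepsilon = \varphi$. Informally, in the glued category the counit at $(\gradedRelDef{}, R, h)$ is the pair $(\id, h)$.

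It remains to check the triangle identities. The identity $\pi_0\varepsilon \cdot \eta\pi_0 = \id_{\pi_0}$ is immediate, since both 2-cells are identities. The other, $\varepsilon\eta_! \cdot \eta_!\eta = \id_{\eta_!}$, reduces to showing $\varepsilon\eta_! = \id_{\eta_!}$. I will prove this by the uniqueness part of the two-dimensional property. Both 2-cells project under $\pi_0$ to $\id$, and under $\pi_1$ they give $\varphi\eta_! = \id_f$ and $\id$ respectively, so they coincide.

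The main obstacle is ensuring that this argument is valid inside $\clat(\cat{C})$. First, the comma object of \Cref{lem:comma_fib} must satisfy the two-dimensional property for 2-cells between fibred functors, that is, for vertical natural transformations. Second, the induced $\eta_!$ must be a fibred functor, so that the adjunction is fibred. In that sub-2-category all 2-cells are vertical and $\varphi$ is vertical, so $\varepsilon$ should be vertical as well. Preservation of cartesian maps by $\eta_!$ should follow from the description of cartesian morphisms in the comma fibration given in the proof of \Cref{lem:comma_fib}: these are pairs of cartesian maps, and $f$ is fibred. I would check this last point directly against that construction.
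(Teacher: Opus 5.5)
Your proposal is correct and follows the paper's proof step for step. The left adjoint is induced by the one-dimensional property from $\id_f$, and the unit is the identity. The counit comes from the two-dimensional property with $\beta_1 = \id$ and $\beta_2 = \varphi$; the exchange law holds because $\varphi$ whiskered by the left adjoint is the identity. The second triangle law then follows from uniqueness in the two-dimensional property.

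Your closing worry about $\clat(\cat{C})$ needs no separate check. The left adjoint and the counit are produced by the universal properties of the comma object in that 2-category, and \Cref{lem:comma_fib} supplies exactly those.
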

We now show that in the setting of \doctrine{}, we can universally glue both
functor liftings together. First, we need a definition.
\begin{definition}
  Let $\twocat{K}$ be a 2-category. Given a 0-cell $C \colon \twocat{K}$ and a
  1-cell $f \colon C \to C$, we define the 2-category $\cat{Lift}(f)$
  as follows
  \begin{description}
  \item[0-cells] Triples $(E, p \colon E \to C, \hat{f} \colon E \to E)$ such
    that $p \hat{f} = f p$.
  \item[1-cells] Pairs $(h, \lambda) \colon (E, p, \hat{f}) \to (E', p', \hat{f'})$, where $h \colon E \to E'$ is a 1-cell and $\lambda \colon h\hat{f} \to \hat{f'}h$ is a 2-cell in $\mathcal{K}$.
  \item[2-cell] $\alpha \colon h \to h'$ such that $\alpha \circ \lambda_1 = \lambda_2 \circ \alpha$.
  \end{description}
\end{definition}

When choosing $\cat{K} = \cat{Cat}$, we recover a 2-category of functor liftings.
Note the similarity between this 2-category and
the 2-category of monads \cite{street1972formal}. The next result has recently been
proved in a more general setting \cite[Theorem VII.6]{DBLP:conf/lics/AmorimKS25}, but for the sake of
concreteness, we present it in a more direct fashion---the proof can be found in the appendix.

\begin{theorem}
  \label{th:commagluing}
  For each 1-cell $(h, \lambda)\colon (E, p, \hat{f}) \to (E', p',
  \hat{f}')$, $\cat{Lift}(f)$ has the comma object of $(h, \lambda)$
  along $\id$, and it is computed pointwise.
\end{theorem}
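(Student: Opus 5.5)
The plan is to build the comma object explicitly from the comma object of the underlying 1-cell $h\colon E \to E'$ along $\id_{E'}$ in $\twocat{K}$, and then verify both universal properties inside $\cat{Lift}(f)$. In the cases of interest, $\twocat{K} = \cat{Cat}$ or $\clat(\cat{C})$, and this comma object exists by \Cref{lem:comma_fib}. Write $X = (h \downarrow \id)$ for it, with projections $\pi_1\colon X \to E$, $\pi_2\colon X \to E'$ and 2-cell $\varphi\colon h\pi_1 \Rightarrow \pi_2$. The projection to $C$ is $p\pi_1$. Since $p'h = p$ and $p'$ applied to $\varphi$ is the identity (in the fibred setting $\varphi$ is vertical), this also equals $p'\pi_2$.

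\textbf{Step 1: define the lifting $\hat{f}_X\colon X \to X$.} I will use the one-dimensional universal property of $X$ on the composite 2-cell
\[ h\hat{f}\pi_1 \xRightarrow{\lambda\pi_1} \hat{f}'h\pi_1 \xRightarrow{\hat{f}'\varphi} \hat{f}'\pi_2 . \]
This produces a unique $\hat{f}_X$ with $\pi_1\hat{f}_X = \hat{f}\pi_1$, $\pi_2\hat{f}_X = \hat{f}'\pi_2$ and $\varphi\hat{f}_X = (\hat{f}'\varphi)\circ(\lambda\pi_1)$. Concretely, this is the pointwise rule $(A,B,u) \mapsto (\hat{f}A, \hat{f}'B, \hat{f}'u\circ\lambda_A)$. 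From these equations, $p\pi_1\hat{f}_X = p\hat{f}\pi_1 = fp\pi_1$, so $(X, p\pi_1, \hat{f}_X)$ is a 0-cell of $\cat{Lift}(f)$. The projections become 1-cells $(\pi_1,\id)$ and $(\pi_2,\id)$, since they commute strictly with the liftings. The square's 2-cell $\varphi$ is a 2-cell of $\cat{Lift}(f)$ from $(h,\lambda)\circ(\pi_1,\id) = (h\pi_1, \lambda\pi_1)$ to $(\pi_2,\id)$, and the required compatibility condition is precisely the third defining equation of $\hat{f}_X$.

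\textbf{Step 2: one-dimensional property.} Let $(a,\mu)\colon (Y,q,\hat{g}) \to (E,p,\hat{f})$ and $(b,\nu)\colon Y \to E'$ be 1-cells, and let $\beta\colon ha \Rightarrow b$ be a 2-cell in $\cat{Lift}(f)$. By the universal property in $\twocat{K}$ we get a unique $k\colon Y \to X$ with $\pi_1 k = a$, $\pi_2 k = b$ and $\varphi k = \beta$. We still need a 2-cell $\kappa\colon k\hat{g} \Rightarrow \hat{f}_X k$.

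I will construct $\kappa$ using the two-dimensional universal property, with components $\beta_1 = \mu\colon a\hat{g} \Rightarrow \hat{f}a$ and $\beta_2 = \nu\colon b\hat{g} \Rightarrow \hat{f}'b$. For this I must check the exchange law
\[ (\varphi\hat{f}_X k)\circ(h\mu) = \nu\circ(\beta\hat{g}), \]
after substituting $\varphi\hat{f}_X = \hat{f}'\varphi\circ\lambda\pi_1$. Unfolding, the equation becomes $\hat{f}'\beta\circ\lambda a\circ h\mu = \nu\circ\beta\hat{g}$. This is exactly the condition that $\beta$ is a 2-cell of $\cat{Lift}(f)$ between $(h,\lambda)(a,\mu) = (ha, \lambda a\circ h\mu)$ and $(b,\nu)$.

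Uniqueness of $(k,\kappa)$ follows from uniqueness in each of the two universal properties of $X$.

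\textbf{Step 3: two-dimensional property.} Given two 1-cells $(k_1,\kappa_1), (k_2,\kappa_2)$ and 2-cells $\beta_1,\beta_2$ in $\cat{Lift}(f)$ satisfying the exchange law, the two-dimensional property in $\twocat{K}$ gives a unique $\beta\colon k_1 \Rightarrow k_2$. It remains to show that $\beta$ is compatible with $\kappa_1$ and $\kappa_2$. Both sides of the compatibility equation are 2-cells into $X$. By the uniqueness clause of the two-dimensional property, it suffices to check them after whiskering with $\pi_1$ and with $\pi_2$, where they reduce to the compatibility conditions already assumed for $\beta_1$ and $\beta_2$.

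\textbf{Main obstacle.} I expect the hard part to be the 2-cell bookkeeping in Step 2, specifically identifying the exchange-law hypothesis with the 2-cell condition on $\beta$. I will handle it by writing out the composite pasting diagram explicitly. For $\twocat{K} = \cat{Cat}$, or $\clat(\cat{C})$ where the fibres are posets, all of this can be double-checked pointwise on triples $(A,B,u)$. That pointwise check also justifies the claim that the comma object is ``computed pointwise''.
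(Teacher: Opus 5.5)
Your proposal is correct and takes the same route as the paper. You obtain the lifting on $(h\downarrow\id)$ from the one-dimensional universal property applied to $(\hat{f}'\varphi)\circ(\lambda\pi_1)$, and the 2-cell component of the mediating 1-cell from the two-dimensional property with components $\mu$ and $\nu$. Your explicit identification of that exchange law with the $\cat{Lift}(f)$ 2-cell condition on $\beta$ supplies the ``long sequence of calculations'' the paper omits, and your Step 3 also checks the two-dimensional property, which the paper skips.

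One harmless inaccuracy: your aside that $p\pi_1 = p'\pi_2$ is never used and need not hold. The 1-cells of $\cat{Lift}(f)$ are not required to satisfy $p'h = p$, and for $\mathcal{K}=\cat{Cat}$ the morphisms $u$ in $(h\downarrow\id)$ need not be vertical.
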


Therefore, when instantiating this to our application, this implies that there
is a ``glued'' functor $\liftF \colon \cat{X} \to \cat{X}$ that universally projects
down to $\hat{F}$ and $\hat{F^g}$. We now show how this interacts with their respective
categories of coalgebras.
\begin{theorem}[cf. \Cref{ap:commacoalg}]
  \label{th:commacoalg}
  The category $\Coalg{\liftF}$ is the vertex of the comma object
  of the functor $\Coalg{\ladjDelta}_\alpha$ as defined
  in \Cref{def:coalg} along $\id$.
\end{theorem}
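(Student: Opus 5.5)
We will prove the statement by writing out both categories explicitly and exhibiting an isomorphism between $\Coalg{\liftF}$ and the vertex of the comma square, and then checking that this isomorphism transports the universal square to the canonical one. Two descriptions are needed first. By \Cref{th:commagluing} (the comma object in $\cat{Lift}(F)$ is computed pointwise), $\liftF$ sends an object $(\gradedRelDef{}, R, f\colon \ladjDelta[\gradedRelDef{}] \to R)$ of $\cat{X}$ to
\[
(\hat{F}^g\gradedRelDef{},\ \hat{F}R,\ \hat{F}f \circ \alpha_{\gradedRelDef{}}\colon \ladjDelta[\hat{F}^g\gradedRelDef{}] \to \hat{F}R).
\]
On morphisms it acts componentwise. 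On the other side, the comma category $\Coalg{\ladjDelta}_\alpha \downarrow \id$ has as objects triples $(a, b, u)$. Here $a\colon \gradedRelDef{} \to \hat{F}^g\gradedRelDef{}$ is an $\hat{F}^g$-coalgebra, $b\colon R \to \hat{F}R$ is an $\hat{F}$-coalgebra, and $u\colon \alpha_{\gradedRelDef{}}\circ \ladjDelta[a] \to b$ is a coalgebra homomorphism. Concretely, $u\colon \ladjDelta[\gradedRelDef{}] \to R$ satisfies
\[
b\circ u = \hat{F}u \circ \alpha_{\gradedRelDef{}} \circ \ladjDelta[a].
\]
Morphisms of this comma category are pairs of coalgebra morphisms commuting with $u$.

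The key step is the isomorphism itself. An $\liftF$-coalgebra on $(\gradedRelDef{}, R, f)$ is a morphism of $\cat{X}$ into $\liftF(\gradedRelDef{}, R, f)$, that is, a pair $(a, b)$ with $a\colon \gradedRelDef{}\to \hat{F}^g\gradedRelDef{}$ and $b\colon R \to \hat{F}R$. The commutation condition for a morphism of $\cat{X}$ reads
\[
b \circ f = \hat{F}f\circ \alpha_{\gradedRelDef{}} \circ \ladjDelta[a].
\]
This is exactly the condition making $f$ a coalgebra homomorphism $\Coalg{\ladjDelta}_\alpha(a) \to b$. So we send $((\gradedRelDef{},R,f),(a,b))$ to $(a,b,f)$. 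On morphisms, an $\liftF$-coalgebra morphism is a morphism $(h_1,h_2)$ of $\cat{X}$ compatible with the coalgebra structures. This amounts to $h_1$ being an $\hat{F}^g$-coalgebra morphism, $h_2$ an $\hat{F}$-coalgebra morphism, and $h_2 \circ f = f'\circ \ladjDelta[h_1]$, which is precisely a morphism in the comma category. Both translations are identity-on-data, so they form an isomorphism of categories. Under this isomorphism the projections $\Coalg{\liftF}\to\Coalg{\hat{F}^g}$ and $\Coalg{\liftF}\to\Coalg{\hat{F}}$ are sent to the canonical comma projections. The 2-cell, with component $f$ at each object, is sent to the canonical one.

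It remains to check the universal property; we will deduce it rather than verify it from scratch. In $\cat{Cat}$, the comma category with its canonical square is a comma object in the sense of the definition above: both the one-dimensional and the two-dimensional properties hold by the standard argument. An isomorphism of vertices that commutes with the projections and the 2-cell transports this universal property. The main obstacle is bookkeeping, namely that the projections from $\Coalg{\liftF}$ really coincide with those induced via \Cref{def:coalg}. This requires the lifting structure on the comma projections in \Cref{th:commagluing} to be identities, so that $\Coalg{-}$ of them is defined with trivial 2-cell. We will check this against the pointwise construction of $\liftF$. If needed, we will also verify the two-dimensional property directly: given coalgebra morphisms forming $\beta_1$ and $\beta_2$, naturality in $\cat{Cat}$ gives a unique $\beta$, and its components are automatically coalgebra morphisms because they are built componentwise.
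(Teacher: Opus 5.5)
Your proposal is correct and follows essentially the same route as the paper: an $\liftF$-coalgebra is a pair $(a,b)$ whose commutation condition in $\cat{X}$ is exactly the statement that $f$ is a coalgebra homomorphism $\Coalg{\ladjDelta}_\alpha(a)\to b$, and coalgebra morphisms match componentwise. Your extra remarks make explicit what the paper takes for granted when it asserts that the comma object is the comma category of the underlying coalgebra categories. These are that the comma projections carry identity 2-cells, and that an isomorphism commuting with the projections and the 2-cell transports the universal property.
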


Combining this theorem with \Cref{lem:adjointcomma},
we can prove that there are adjunctions between the comma and base
categories.

\begin{lemma}
  The forgetful functor $ \cat{Coalg}(\liftF) \to \cat{Coalg}(\hat{F^g})$ has a
  right adjoint.
\end{lemma}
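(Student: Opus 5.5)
The plan is to obtain the adjunction purely formally, by combining \Cref{th:commacoalg} with \Cref{lem:adjointcomma} in the 2-category $\cat{Cat}$. By \Cref{th:commacoalg}, $\Coalg{\liftF}$ is the vertex of the comma object of $\Coalg{\ladjDelta}_\alpha \colon \Coalg{\hat{F}^g} \to \Coalg{\hat{F}}$ along $\id_{\Coalg{\hat{F}}}$. Under this identification, the forgetful functor $\Coalg{\liftF} \to \Coalg{\hat{F}^g}$ is exactly the projection 1-cell of the comma square. Since $\cat{Cat}$ has comma objects and the square produced by \Cref{th:commacoalg} satisfies the universal property, \Cref{lem:adjointcomma} applies with $f \coloneqq \Coalg{\ladjDelta}_\alpha$, $x \coloneqq \Coalg{\hat{F}^g}$ and $y \coloneqq \Coalg{\hat{F}}$. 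It yields the required adjunction between the projection and a section.

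To make this concrete, I would write the adjoint functor $K$ explicitly. It sends an $\hat{F}^g$-coalgebra $a \colon \gradedRelDef{} \to \hat{F}^g\gradedRelDef{}$ to the glued coalgebra whose graded component is $a$ and whose ungraded component is $\Coalg{\ladjDelta}_\alpha(a) = \alpha_{\gradedRelDef{}} \circ \ladjDelta[a]$. The connecting morphism is $\id \colon \ladjDelta[\gradedRelDef{}] \to \ladjDelta[\gradedRelDef{}]$. Via the one-dimensional universal property, this object is the factorisation of the identity 2-cell $\Coalg{\ladjDelta}_\alpha \Rightarrow \id \circ \Coalg{\ladjDelta}_\alpha$. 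On morphisms, $K(h) \coloneqq (h, \ladjDelta[h])$. Then $U K = \id$, so one of the unit or counit is the identity. The other is built from the triple $(a', c, u)$ together with the connecting map $u$, using the two-dimensional universal property: the 2-cells $\beta_1 = \id$ and $\beta_2 = u$ satisfy the exchange law because $u$ is by definition a coalgebra map out of $\Coalg{\ladjDelta}_\alpha(a')$. The triangle identities then follow from the uniqueness clause of the two-dimensional universal property.

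The step I expect to be the main obstacle is pinning down the \emph{direction} of the adjunction and checking it against the orientation of the comma square. A direct hom-set computation shows that maps $K(a) \to (a', c, u)$ correspond bijectively to maps $a \to a'$, because the ungraded component is forced to equal $u \circ \ladjDelta[h]$. This exhibits $K$ naturally as the adjoint on one side of the projection. I would first carry out this hom-set bijection explicitly in $\Coalg{\liftF}$ and compare it with the statement of \Cref{lem:adjointcomma}. If the roles turn out to be swapped relative to the statement as given, I would look to the fibrewise complete-lattice structure to produce the adjoint on the other side. The candidate is to send $a$ to the glued coalgebra whose ungraded part is the terminal (top) coalgebra above $pa$, which exists because $\hat{F}$ lifts $F$ over a $\clat$-fibration. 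Everything else, namely naturality and functoriality of $K$, is routine and inherited from \Cref{th:commacoalg}.
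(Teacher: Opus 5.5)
Your main argument is the paper's own: identify $\Coalg{\liftF}$ with $(\Coalg{\ladjDelta}_\alpha \downarrow \id)$ via \Cref{th:commacoalg} and apply \Cref{lem:adjointcomma}. Your $K(a) = (a, \Coalg{\ladjDelta}_\alpha(a), \id)$ is exactly the section $l$ built there, with identity unit and counit $(\id, u)$ coming from the two-dimensional universal property.

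Your hom-set computation already settles the direction you were worried about. Since $\hom(K(a), (a',c,u)) \cong \hom(a, U(a',c,u))$, you get $K \dashv U$. So the forgetful functor has a \emph{left} adjoint, i.e.\ it \emph{is} a right adjoint. This is precisely what \Cref{lem:adjointcomma} asserts, and it is the direction the paper uses next: the composite $\Coalg{\hat{F}^g} \xrightarrow{K} \Coalg{\liftF} \to \Coalg{\hat{F}}$ behind \Cref{cor:soundness} equals $\Coalg{\ladjDelta}_\alpha$. The phrase ``has a right adjoint'' in the statement therefore does not match what the cited ingredients give. You should conclude explicitly with $K \dashv U$ rather than leave the side open.

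The fallback you sketch for a literal right adjoint does not work as stated and should be dropped.

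\emph{The top element is not a coalgebra.} The top element of the fibre over the carrier of $\gamma \coloneqq pa$ is in general not an $\hat{F}$-coalgebra above $\gamma$. For DAs, the total relation is a simulation only if all states agree on acceptance. The greatest such coalgebra is the similarity $\nu(\gamma^*\hat{F})$, and it exists by Knaster--Tarski, not because $\hat{F}$ is a lifting.

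\emph{No right adjoint exists in general in the comma of \Cref{th:commacoalg}.} There the connecting map $u$ is an arbitrary coalgebra morphism. Take $\cat{C} = \Set$ and let $i$ be the initial $\hat{F}^g$-coalgebra over $\emptyset$; its image under $\Coalg{\ladjDelta}_\alpha$ is again initial, since $\ladjDelta$ is a left adjoint. For a putative right adjoint $G$, $\hom((i,c,u), G(a)) \cong \hom(i,a)$ is a singleton for every $\hat{F}$-coalgebra $c$. This forces the ungraded component of $G(a)$ to be a final $\hat{F}$-coalgebra. Since bottom relations are simulations, that in turn forces a final $F$-coalgebra, which does not exist for the LTS functor $\mathcal{P}(-\times\mathbb{N})^\Sigma$.

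\emph{The fibrewise version needs a missing step.} The candidate $a \mapsto (a, \nu(\gamma^*\hat{F}), u_0)$ becomes a right adjoint only in the fibrewise comma, where connecting maps are vertical and morphisms are pairs over a common base map. Even then you need an argument you have not given. For every $F$-coalgebra morphism $h\colon \gamma' \to \gamma$, each simulation on $\gamma'$ must lie below $h^*\nu(\gamma^*\hat{F})$. This holds because $\exists_h$ sends simulations to simulations whenever $\hat{F}$ is a functor lifting.
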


We prove our abstract soundness theorem by unfolding the construction above and
composing the functors
$\cat{Coalg}(\hat{F^g}) \to \cat{Coalg}(\liftF) \to \cat{Coalg}(\hat{F})$,

\begin{theorem}[Soundness] \label{cor:soundness}
If $\gradedRelDef{}$ is a graded simulation on a coalgebra $c$, then
$\ladjDelta[\gradedRelDef{g}]$ is an ungraded simulation on $c$.
\end{theorem}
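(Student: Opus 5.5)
The plan is to route the graded simulation through the glued category and then project it down, i.e.\ to realise the composite $\Coalg{\hat{F}^g} \to \Coalg{\liftF} \to \Coalg{\hat{F}}$ explicitly and check that it sends a coalgebra over $c$ to a coalgebra over $c$.

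\textbf{Step 1: the graded simulation as a coalgebra.} A graded simulation is a coalgebra $c\colon \gradedRelDef{} \dotarrow \hat{F}^g\gradedRelDef{}$ in $\cat{E}^g$ lying above $c\colon S \to FS$, so it is an object of $\Coalg{\hat{F}^g}$.

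\textbf{Step 2: the direct route.} Apply the functor $\Coalg{\ladjDelta}_\alpha$ of \Cref{def:coalg}. This yields the coalgebra
\[
\alpha_{\gradedRelDef{}} \circ (\ladjDelta)(c)\colon \ladjDelta[\gradedRelDef{}] \to (\ladjDelta)\hat{F}^g\gradedRelDef{} \to \hat{F}(\ladjDelta[\gradedRelDef{}]).
\]

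\textbf{Step 3: the glued route.} By \Cref{th:commacoalg}, $\Coalg{\liftF}$ is the comma object of $\Coalg{\ladjDelta}_\alpha$ along $\id$. \Cref{lem:adjointcomma} then provides a left adjoint to the projection $\Coalg{\liftF} \to \Coalg{\hat{F}^g}$. On objects this left adjoint sends a graded coalgebra to the triple consisting of the graded coalgebra, its image under $\Coalg{\ladjDelta}_\alpha$, and the identity witness. Composing with the projection $\Coalg{\liftF}\to\Coalg{\hat{F}}$ therefore recovers exactly the coalgebra of Step 2. This is the promised factorisation through the glued category, but the content of the statement already lives in Step 2.

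\textbf{Step 4: the carrier map.} It remains to check that the structure map of the resulting coalgebra lies above $c$, not merely above some other map. Since $\ladjDelta$ is a fibred functor over $\id_{\cat{C}}$, the morphism $(\ladjDelta)(c)$ lies above $c$. Since $\alpha$ is a natural transformation between two liftings of $F$, its components lie above identities; this verticality is the hypothesis to confirm, and in the \dar instance it is part of the canonical construction. Hence the composite lies above $c$, i.e.\ $c\colon \ladjDelta[\gradedRelDef{}] \dotarrow \hat{F}(\ladjDelta[\gradedRelDef{}])$, which is precisely the statement that $\ladjDelta[\gradedRelDef{}]$ is an ungraded simulation on $c$. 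Faithfulness of $\clat$-fibrations means no coherence data beyond this existence needs to be tracked.

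The main obstacle is the verticality of $\alpha$. The doctrine as defined only says that $\alpha$ is a natural transformation, so I would argue that $p\alpha$ is forced to be the identity. One route is that $\alpha$ relates liftings of the same $F$ along fibred-over-identity functors. Failing that, I would add verticality as an explicit assumption, as is already done for $\alpha$ in \Cref{prop:soundness}. A mild variant handles a nonvertical component: the result is then an ungraded simulation on the transformed coalgebra, as with $\Fda\pi_1$ there.
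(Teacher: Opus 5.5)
Your proposal is correct and is essentially the paper's proof: the paper proves the theorem by composing $\Coalg{\hat{F}^g}\to\Coalg{\liftF}\to\Coalg{\hat{F}}$, and on objects this composite is just $\Coalg{\ladjDelta}_\alpha$, exactly as you observe. Your verticality worry is justified, because the doctrine axioms do not force $p\alpha=\id$ (for instance, $\alpha$ could lie over a swap of the $2$-component of $F=2\times(-)$, and then the conclusion can fail), so you should take your fallback and assume $\alpha$ vertical; the paper leaves this implicit here but states it explicitly in \Cref{prop:soundness}.
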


In \Cref{th:commacoalg}, we see that the coalgebra construction preserves
comma objects. Another interesting result we can show is that under the
assumptions of \doctrine{}s, the glued functors perfectly characterize
the \doctrine{} structure.
See \cref{ap:proof_gluedequiv} for the proof.

\begin{theorem}
  \label{th:gluedequiv}
  Assuming that $\ladjDelta \colon \cat{E}_g \to \cat{E}$ is a left adjoint,
  every $F$-lifting in $\cat{X}$ restricts to a doctrine structure on $\cat{E}_g$
  and $\cat{E}$. Furthermore, this operation is a projection to the operation
  that glues together two functors.
\end{theorem}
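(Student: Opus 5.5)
We first make the statement precise. A lifting $\liftF \colon \cat{X} \to \cat{X}$ of $F$ is taken along the composite fibration $\cat{X} \to \cat{C}$, which is a $\clat$-fibration by \Cref{lem:comma_fib}. We claim that any such $\liftF$ determines liftings $\hat{F}^g$ on $\cat{E}^g$ and $\hat{F}$ on $\cat{E}$, together with a natural transformation $\alpha \colon (\ladjDelta)\hat{F}^g \Rightarrow \hat{F}(\ladjDelta)$, forming an \doctrine{}. We further claim that, when $\liftF$ is itself the glued functor of \Cref{th:commagluing} built from some doctrine data, this procedure returns that data. In other words, ``restrict'' is a retraction of ``glue''.

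To construct the restriction, write $\pi_g \colon \cat{X} \to \cat{E}^g$ and $\pi \colon \cat{X} \to \cat{E}$ for the projections. By \Cref{lem:adjointcomma}, $\pi_g$ has a left adjoint $\iota_g \colon \gradedRelDef{} \mapsto (\gradedRelDef{}, \ladjDelta[\gradedRelDef{}], \id)$, with $\pi_g \iota_g = \id$. There is also a functor $\iota \colon \cat{E} \to \cat{X}$, $R \mapsto (\Delta R, R, \varepsilon_R)$, where $\varepsilon$ is the counit of $\ladjDelta \dashv \Delta$. This is where the hypothesis that $\ladjDelta$ is a left adjoint is used, and we have $\pi\iota = \id$. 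Both $\iota_g$ and $\iota$ are functors over $\cat{C}$ because the adjunction is fibred. We then set
\[ \hat{F}^g \coloneqq \pi_g \liftF \iota_g, \qquad \hat{F} \coloneqq \pi \liftF \iota. \]
These are liftings of $F$ since $\liftF$ lies over $F$ and the projections and sections lie over $\cat{C}$.

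To obtain $\alpha$, take the structure map of $\liftF\iota_g\gradedRelDef{}$, which has the form
\[ \ladjDelta[\pi_g\liftF\iota_g\gradedRelDef{}] \to \pi\liftF\iota_g\gradedRelDef{}. \]
It remains to compare $\pi\liftF\iota_g\gradedRelDef{}$ with $\hat{F}(\ladjDelta[\gradedRelDef{}]) = \pi\liftF\iota(\ladjDelta[\gradedRelDef{}])$. For this, use the morphism $\iota_g\gradedRelDef{} \to \iota(\ladjDelta[\gradedRelDef{}])$ in $\cat{X}$ given by the unit $\gradedRelDef{} \to \Delta\ladjDelta[\gradedRelDef{}]$ and the identity on $\ladjDelta[\gradedRelDef{}]$. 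The required square commutes by a triangle identity. Applying $\pi\liftF$ to this morphism and composing with the structure map yields $\alpha_{\gradedRelDef{}}$, and naturality follows from functoriality. Because all fibres are posets and the fibrations are faithful, every commutativity check above holds automatically once the morphisms exist over the correct base maps. This makes the verification of doctrine axioms essentially trivial.

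For the retraction claim, suppose $\liftF$ is the pointwise glued functor of \Cref{th:commagluing}. Concretely, $\liftF(\gradedRelDef{}, R, f) = (\hat{F}^g\gradedRelDef{}, \hat{F}R, \hat{F}f \circ \alpha)$. Then $\pi_g\liftF\iota_g = \hat{F}^g$ and $\pi\liftF\iota = \hat{F}$ hold on the nose. The recovered transformation is $\hat{F}(\id)\circ\alpha$ composed with $\hat{F}$ applied to the identity, so it equals $\alpha$.

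The step I expect to be the main obstacle is justifying why the restriction to $\cat{E}$ must go through $\iota$. A general $\liftF$ need not commute with the projections, so the choice of $\iota$ against $\iota_g$ for defining $\hat{F}$ matters. We only claim a retraction, not an equivalence, so the choice via $\iota$ suffices. If functoriality of the operation is also wanted, naturality in morphisms of liftings follows by whiskering the 2-cells $\lambda$ with $\iota$ and $\iota_g$.
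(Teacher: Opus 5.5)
Your proposal is correct and follows essentially the same route as the paper. You set $\hat{F}^g = \pi_g\liftF\iota_g$ and $\hat{F} = \pi\liftF\iota$, using the left adjoint $\iota_g$ from \Cref{lem:adjointcomma} and the counit-based section $\iota$. You then build $\alpha$ by composing the comma 2-cell at $\liftF\iota_g$ with $\pi\liftF$ applied to the unit-and-identity morphism $\iota_g \Rightarrow \iota\circ(\ladjDelta)$; this is the paper's $\bar{\eta} * \id * \varphi$. Finally, you recover the original data from a glued functor via the triangle identity, as the paper does. You are somewhat more explicit than the paper, for instance in noting that $\iota$ only needs to be a section of $\pi$ (not a right adjoint), and in checking that the recovered transformation is exactly $\alpha$.
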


\subsection{Final coalgebra}

Final coalgebras provide useful coinductive reasoning principles and
are widely used in coalgebraic automata theory. When it
comes to fibred bisimulation, the final coalgebra provides a useful
way of characterizing which invariants are kept track of by
bisimulation relations. For example, in the DFA example, the final
coalgebra has languages over the alphabet as its underlying set and
equality is the relational component, i.e. bisimulation is
equivalent to language equality.

In some cases, we can explicitly characterize the final coalgebra. A
useful result due to Ad\'{a}mek~\cite{DBLP:journals/tcs/Adamek03} allows us to compute
final coalgebra of functors satisfying certain continuity properties

\begin{theorem}
  Let $F \colon \cat{C} \to \cat{C}$ be an endofunctor on a category with
  terminal object such that the limit 
  of the $\omega^\op$-chain $(\cdots \xrightarrow{F^2!} F^21 \xrightarrow{F!} F1 \xrightarrow{!}1)$ 
  exists and is preserved
  by $F$. In this case the final $F$-coalgebra is given by the limiting
  object with the coalgebra structure defined using the universal
  property of limits.
\end{theorem}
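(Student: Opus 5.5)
Write $L$ for the limit of the chain, with limiting cone $\pi_n\colon L \to F^n1$, so that $\pi_n = F^n!\circ\pi_{n+1}$ (with $F^0! = {!}$). The plan is the standard Adámek argument in three steps: build a coalgebra structure on $L$ from the preservation hypothesis, construct a map from an arbitrary coalgebra into $L$, and then prove that this map is the unique homomorphism.

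\textbf{Step 1: the coalgebra structure.} Applying $F$ to the cone gives $F\pi_n\colon FL \to F^{n+1}1$. This is a cone over the shifted chain $(\cdots \to F^21 \to F1)$. Since $F$ preserves the limit, this cone is limiting. The shifted chain is cofinal in the original chain; adding the object $1$ back does not change the limit, since $1$ is reached through $!\colon F1 \to 1$. Hence $FL$ together with the maps $F\pi_n$ (and $!$ at level $0$) is also a limit of the original chain. Both $L$ and $FL$ are therefore limits of the same diagram, so the comparison map is an isomorphism $\zeta\colon L \to FL$. It is characterised by $F\pi_n\circ\zeta = \pi_{n+1}$ for all $n$, and $(L,\zeta)$ is the candidate final coalgebra.

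\textbf{Step 2: existence of a homomorphism.} Take any coalgebra $c\colon X \to FX$. Define $x_n\colon X \to F^n1$ inductively by $x_0 = {!}$ and $x_{n+1} = Fx_n\circ c$. An induction gives $F^n!\circ x_{n+1} = x_n$: the base case holds by terminality of $1$, and the step follows by applying $F$ to the inductive hypothesis and precomposing with $c$. So the $x_n$ form a cone, and we obtain $h\colon X \to L$ with $\pi_n h = x_n$. To check that $h$ is a homomorphism, it suffices to show that $\zeta h$ and $Fh\circ c$ agree after composing with the limiting cone $F\pi_n$ on $FL$. On one side, $F\pi_n\circ Fh\circ c = Fx_n\circ c = x_{n+1}$. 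On the other, $F\pi_n\circ\zeta\circ h = \pi_{n+1}h = x_{n+1}$, so the two maps coincide.

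\textbf{Step 3: uniqueness.} Let $g$ be any homomorphism, so $\zeta g = Fg\circ c$. Prove by induction that $\pi_n g = x_n$. The base case is forced by terminality. For the inductive step,
\[
\pi_{n+1}g = F\pi_n\circ\zeta\circ g = F\pi_n\circ Fg\circ c = F(\pi_n g)\circ c = Fx_n\circ c = x_{n+1}.
\]
By the uniqueness part of the limit's universal property, $g = h$.

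The main obstacle is Step 1. There one must state carefully that ``preserved by $F$'' means the image cone $F\pi_n$ is limiting for the shifted chain, and then justify that the shifted chain has the same limit as the original one. The remaining steps are routine inductions.
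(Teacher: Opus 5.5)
Your proof is correct and complete. By preservation, $(FL, F\pi_n)$ is a limit of the shifted chain. It extends, with $!$ at level $0$, to a limit of the full chain, because the level-$0$ leg of any cone is forced to equal $!\circ x_1$. So the comparison map $\zeta$ is an isomorphism with $F\pi_n\circ\zeta=\pi_{n+1}$. The cone $x_{n+1}=Fx_n\circ c$ then gives a homomorphism, and the induction $\pi_n g=x_n$ gives uniqueness.

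The paper does not prove this statement. It quotes it as Ad\'{a}mek's theorem and uses it as a black box in the proof of \Cref{thm:finalglued}. So you have not taken a different route; you have supplied the standard argument from the cited source. What your version adds is an explicit description of the structure map through $F\pi_n\circ\zeta=\pi_{n+1}$. It also makes visible that only the single $\omega^{\mathrm{op}}$-limit and its preservation are used. That is exactly the hypothesis the paper must check for the glued functor, and it does so via its lemma on gluing limit-preserving functors.

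One small point of terminology: ``cofinal'' is slightly loose here. For limits, the relevant property is that the inclusion of the indices $n\geq 1$ into $\omega^{\mathrm{op}}$ is initial. Your direct justification covers this anyway.
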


We now show how to calculate the final coalgebra of the glued
model. We begin by noting that because we have the adjunction
$(\ladjDelta) \dashv \Delta$, the category $\cat{X}$ is isomorphic
to the comma category $(id \downarrow \Delta)$.
Furthermore, since $\Delta$ is a right adjoint, it preserves limits, so
we can apply the following theorem:
\begin{lemma}[{cf.~\cite[Proposition 2.16.1]{borceux1994handbookI}}]
  \label{lem:commacomplete}
  Consider two complete categories $\cat{A}$, $\cat{B}$, and two
  limit-preserving functors $F \colon \cat{A} \to \cat{C}$, $G \colon \cat{B} \to \cat{C}$.
  The comma category $F \downarrow G$ is also complete and the projection
  functors preserve limits.
\end{lemma}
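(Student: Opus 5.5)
Given a small diagram $D\colon \cat{J} \to F \downarrow G$, we build its limit componentwise. Write objects of the comma category as triples $(a, b, \varphi\colon Fa \to Gb)$, and put $D_j = (a_j, b_j, \varphi_j)$. Applying the two projections gives diagrams $D_1\colon \cat{J} \to \cat{A}$ and $D_2\colon \cat{J} \to \cat{B}$. By completeness, each has a limiting cone, say $(L_1, \pi^1_j)$ and $(L_2, \pi^2_j)$. Since $F$ and $G$ preserve limits, $(FL_1, F\pi^1_j)$ is a limit of $F D_1$ and $(GL_2, G\pi^2_j)$ is a limit of $G D_2$ in $\cat{C}$.

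The first step is to produce the structure map $\varphi\colon FL_1 \to GL_2$. Consider the family $\varphi_j \circ F\pi^1_j\colon FL_1 \to Gb_j$. It is a cone over $G D_2$: for $u\colon j \to k$ in $\cat{J}$, $D u = (u_1, u_2)$ is a comma morphism, so $Gu_2 \circ \varphi_j = \varphi_k \circ Fu_1$. Hence
\[ Gu_2 \circ \varphi_j \circ F\pi^1_j = \varphi_k \circ F(u_1 \circ \pi^1_j) = \varphi_k \circ F\pi^1_k. \]
Because $GL_2$ is a limit of $G D_2$, there is a unique $\varphi$ with $G\pi^2_j \circ \varphi = \varphi_j \circ F\pi^1_j$ for all $j$. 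This equation says exactly that each $(\pi^1_j, \pi^2_j)\colon (L_1, L_2, \varphi) \to D_j$ is a morphism in $F \downarrow G$. These maps form a cone over $D$ because their components form cones in $\cat{A}$ and $\cat{B}$.

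The second step is universality. Let $(x_j, y_j)\colon (a, b, \psi) \to D_j$ be a cone. Its components are cones over $D_1$ and $D_2$, so they induce unique $x\colon a \to L_1$ and $y\colon b \to L_2$ with $\pi^1_j x = x_j$ and $\pi^2_j y = y_j$. It remains to check that $(x, y)$ is a comma morphism, i.e.\ $Gy \circ \psi = \varphi \circ Fx$. This is the only genuinely non-formal point, and it again uses that $G$ preserves the limit: the family $G\pi^2_j$ is jointly monic, so it suffices to compare after composing with each $G\pi^2_j$. On one side,
\[ G\pi^2_j \circ Gy \circ \psi = Gy_j \circ \psi = \varphi_j \circ Fx_j, \]
using that $(x_j, y_j)$ is a comma morphism. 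On the other side,
\[ G\pi^2_j \circ \varphi \circ Fx = \varphi_j \circ F\pi^1_j \circ Fx = \varphi_j \circ Fx_j. \]
The two agree, so $(x, y)$ is a comma morphism. It is unique because its components are unique.

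Finally, by construction the projections send this limit cone to the limit cones $(L_1, \pi^1_j)$ and $(L_2, \pi^2_j)$, so they preserve this limit. Any other limit of $D$ is isomorphic to this one, so the projections preserve all limits. The main subtlety throughout is using preservation by $G$ (joint monicity of $G\pi^2_j$). Preservation by $F$ is only needed in the following sense: we never use it for the construction itself (any $FL_1$ works), so the argument actually needs only that $G$ preserves limits. We note this, but keep the stated hypotheses.
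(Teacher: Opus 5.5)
Your proof is correct. It is the standard componentwise construction behind the cited result: the paper gives no argument of its own and defers to Borceux. Your observation that only $G$ needs to preserve limits is also right. It costs nothing here, since the paper applies the lemma to $\mathrm{id} \downarrow \Delta$ with $\Delta$ a right adjoint.
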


The restriction of the result above to limits of arbitrary shapes also holds.
We now proceed to characterize the final coalgebra of $\mathcal{F}$.  
After presenting a lemma obtained as a direct consequence of the lemma above,
we establish the theorem providing the characterisation --- See \Cref{ap:proof_finalglued} for a proof.
\begin{lemma}
  If $F$ and $G$ preserve a limit of some shape, then so does
  their gluing.
\end{lemma}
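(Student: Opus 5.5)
The plan is to reduce the statement to \Cref{lem:commacomplete}, using the pointwise description of the glued functor from \Cref{th:commagluing}. Fix the setting: the glued category is the comma $(K \downarrow \id)$ for some functor $K$ (here $\ladjDelta$), or equivalently $(\id \downarrow \Delta)$ via the adjunction. The gluing of $F$ and $G$ is the functor $\liftF$ on the comma that sends an object $(A, B, f\colon KA \to B)$ to $(FA, GB, Gf \circ \lambda_A)$, where $\lambda\colon KF \Rightarrow GK$ is the comparison 2-cell. We are given a diagram $D\colon J \to \cat{X}$ with a limit. Our goal is to show that $\liftF$ sends a limiting cone over $D$ to a limiting cone over $\liftF D$.

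First I would describe the limit of $D$ concretely. By the proof of \Cref{lem:commacomplete}, or more precisely its shape-restricted version noted just after it, the limit of $D$ in the comma is computed componentwise. Its first component is $\lim \pi_1 D$, its second is $\lim \pi_2 D$, and the structure map is the map $K(\lim \pi_1 D) \to \lim \pi_2 D$ induced by the cone $K(\lim \pi_1 D) \to K\pi_1 D_j \to \pi_2 D_j$. This requires the relevant functor to preserve the limit, or else the shape restriction of the argument. As I understand the paper's intent, one applies the lemma in the $(\id \downarrow \Delta)$ presentation. There $\Delta$ is a right adjoint and preserves all limits, and $\id$ trivially does as well.

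Second, I would apply $\liftF$. Since $\liftF$ acts pointwise, $\liftF(\lim D)$ has components $F(\lim \pi_1 D)$ and $G(\lim \pi_2 D)$. By hypothesis these are $\lim F\pi_1 D$ and $\lim G\pi_2 D$, via the canonical comparison maps. The limit of $\liftF D$, computed again by \Cref{lem:commacomplete}, has exactly these components. It therefore remains to check that the structure maps agree under the isomorphisms. Both are morphisms into a limit $\lim G\pi_2 D$, so it suffices to compare them after composing with each projection. This reduces to the naturality of $\lambda$ together with functoriality of $G$ applied to the cone legs. Finally, a cone in a comma category whose projections are limiting and whose structure map is the induced one is itself limiting. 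This is the uniqueness part of the proof of \Cref{lem:commacomplete}, so $\liftF$ preserves the limit.

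I expect the main obstacle to be the bookkeeping in the structure-map comparison. One subtle point is which presentation of $\cat{X}$ to use. In the $(\ladjDelta \downarrow \id)$ presentation, $\ladjDelta$ need not preserve limits, so I would transport everything along the isomorphism $\cat{X} \cong (\id \downarrow \Delta)$. The comparison 2-cell there is the mate $F\Delta \Rightarrow \Delta G$ of $\alpha$, and I would check that the gluing in the two presentations corresponds under this isomorphism.
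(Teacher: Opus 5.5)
Your proposal is correct and follows the paper's route: the paper obtains this lemma as a direct consequence of \Cref{lem:commacomplete}, because limits in the comma category are computed componentwise and the glued functor also acts componentwise, and your check of the structure maps via naturality of $\lambda$ simply writes out the detail the paper leaves implicit. One simplification: the transport to $(\id \downarrow \Delta)$ is not needed, because in the $(\ladjDelta \downarrow \id)$ presentation the componentwise construction of limits only requires the codomain-side functor $\id$ to preserve them, so it never matters whether $\ladjDelta$ preserves limits.
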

\begin{theorem} \label{thm:finalglued}
  The carrier of the final coalgebra of $\liftF$ has components $\nu
  \hat{F^g}$ and $\nu \hat{F}$, where $\nu \hat{F^g}$ and $\nu
  \hat{F}$ are the carriers of final coalgebras for $\hat{F^g}$ and
  $\hat{F}$.
\end{theorem}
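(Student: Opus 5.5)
The plan is to build the final $\liftF$-coalgebra by hand from the two final coalgebras $\zeta^g\colon \nu\hat{F^g} \to \hat{F^g}(\nu\hat{F^g})$ and $\zeta\colon \nu\hat{F} \to \hat{F}(\nu\hat{F})$, and then check finality directly with the comma-object description of $\Coalg{\liftF}$ from \Cref{th:commacoalg}. Because of that theorem, an $\liftF$-coalgebra amounts to three pieces of data: an $\hat{F^g}$-coalgebra $(A,a)$, an $\hat{F}$-coalgebra $(R,r)$, and a morphism $\phi\colon \Coalg{\ladjDelta}_\alpha(A,a) \to (R,r)$ in $\Coalg{\hat{F}}$, where $\Coalg{\ladjDelta}_\alpha(A,a) = \alpha_A\circ(\ladjDelta)a$. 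In particular $\phi$ is a map $\ladjDelta A \to R$ that is an $\hat{F}$-coalgebra homomorphism.

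\textbf{Candidate and existence.} Let $(\nu\hat{F^g},\zeta^g)$ be final. Then $\Coalg{\ladjDelta}_\alpha(\nu\hat{F^g},\zeta^g)$ is an $\hat{F}$-coalgebra, so finality of $(\nu\hat{F},\zeta)$ gives a unique homomorphism $u\colon \Coalg{\ladjDelta}_\alpha(\nu\hat{F^g},\zeta^g)\to(\nu\hat{F},\zeta)$. The candidate final $\liftF$-coalgebra is the triple $(\nu\hat{F^g},\nu\hat{F},u)$ with structure $(\zeta^g,\zeta)$; by \Cref{th:commacoalg} this is a legitimate object of $\Coalg{\liftF}$. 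Now take any $\liftF$-coalgebra $((A,a),(R,r),\phi)$. Write $h_1\colon(A,a)\to(\nu\hat{F^g},\zeta^g)$ and $h_2\colon(R,r)\to(\nu\hat{F},\zeta)$ for the unique homomorphisms. The pair $(h_1,h_2)$ is a morphism in the comma category exactly when $u\circ\Coalg{\ladjDelta}_\alpha(h_1) = h_2\circ\phi$. Both sides are $\hat{F}$-coalgebra homomorphisms $\Coalg{\ladjDelta}_\alpha(A,a)\to(\nu\hat{F},\zeta)$: the left because $\Coalg{\ladjDelta}_\alpha$ is a functor and $u$ is a homomorphism, the right because it is a composite of homomorphisms. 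Finality of $\nu\hat{F}$ then makes them equal.

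\textbf{Uniqueness.} Any morphism into the candidate is a pair of coalgebra homomorphisms $(h_1',h_2')$ satisfying the comma condition. Finality in each component forces $h_1'=h_1$ and $h_2'=h_2$. So the candidate is final, and its carrier in $\cat{X}$ has components $\nu\hat{F^g}$ and $\nu\hat{F}$, as claimed.

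\textbf{Alternative and main obstacle.} One could instead run Ad\'amek's theorem, using \Cref{lem:commacomplete} for the limit of the terminal chain in $\cat{X}\cong(\id\downarrow\Delta)$ and the preceding lemma for its preservation by $\liftF$. That route needs continuity hypotheses on $\hat{F}$ and $\hat{F^g}$, so I would present the direct argument as the main proof. The subtle step there is the appeal to \Cref{th:commacoalg}, which makes sure that morphisms of $\liftF$-coalgebras really are pairs of separate coalgebra homomorphisms linked by the comma square. That is what licenses using finality componentwise, and it is where the identification of $\liftF$ from \Cref{th:commagluing}, built using $\alpha$, must be checked to match $\Coalg{\ladjDelta}_\alpha$. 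The argument also assumes that the final coalgebras of $\hat{F^g}$ and $\hat{F}$ exist, which is implicit in the statement.
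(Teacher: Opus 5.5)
Your proof is correct, and it takes a different route from the paper's.

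\textbf{The paper's route.} The paper argues via Ad\'amek's theorem. It takes $\nu\liftF$ to be the limit of the terminal sequence $1 \leftarrow \liftF 1 \leftarrow \liftF^2 1 \leftarrow \cdots$ in $\cat{X}$. It then uses $\cat{X}\cong(\id\downarrow\Delta)$ together with \Cref{lem:commacomplete}; this is where $\Delta$ being a right adjoint is needed. That lemma lets it push this limit and the terminal object through the projections. Finally it uses $p\circ\liftF=\hat{F^g}\circ p$ and $q\circ\liftF=\hat{F}\circ q$ to identify the projected chains with the terminal sequences of $\hat{F^g}$ and $\hat{F}$.

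\textbf{Your route.} You work one level up, in $\Coalg{\liftF}\cong\Coalg{\ladjDelta}_\alpha\downarrow\id$ (\Cref{th:commacoalg}). There you use the elementary fact that a comma category $G\downarrow\id$ has terminal object $(1,1,!)$ whenever both sides have terminal objects, for an arbitrary functor $G$.

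\textbf{What each buys.} Your argument needs only that the two final coalgebras exist. It uses no continuity hypotheses and never uses the adjunction $\ladjDelta\dashv\Delta$. The paper's proof needs both of these implicitly: Ad\'amek's theorem must apply to $\liftF$, and also to $\hat{F^g}$ and $\hat{F}$, before the resulting limits can be identified with $\nu\hat{F^g}$ and $\nu\hat{F}$. You also exhibit the third component of the carrier explicitly, as the unique homomorphism $u$ from $\Coalg{\ladjDelta}_\alpha(\nu\hat{F^g},\zeta^g)$ to $(\nu\hat{F},\zeta)$; the paper leaves this component unstated. What the paper's route gives in return is a concrete terminal-sequence description of $\nu\liftF$ in the $\omega$-continuous case. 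For the statement as phrased, yours is the more general and cleaner proof.
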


\section{Examples} \label{sec:examples}
In this section we present a few examples that illustrate the range of
\doctrine{} as an abstraction. We begin by presenting a first example
based on reward-sensitive notions of simulations on labelled transition systems with rewards. Then,
we present a reward-sensitive variant of simulations on Markov Decision
Processes (MDPs), and then we move on to a pseudometric and
relation-based bisimulations on labelled Markov processes.

\subsection{Amortised bisimulations on labelled transition systems}
\newcommand{\Flts}{F_\mathrm{LTS}}
\newcommand{\Fhatlts}{\hat{F}_\mathrm{LTS}}
Kiehn and Arun-Kumar~\cite{DBLP:conf/forte/KiehnA05} have introduced
\emph{amortised bisimulations} for labelled transition systems 
where certain actions have a reward
associated with it. 
Their definition is remarkably similar to our notion of graded
simulation for \dar. 
Here, we slightly generalize their setting
by allowing rewards to be defined locally for each transition,
rather than globally for each action as in their automata class,
which
also refines the
usual notion of LTS bisimulation.

Let $\mathcal{P}\colon \Set \to \Set$ be the covariant powerset functor.
Consider a labelled transition system with rewards (LTS)
$\delta\colon S \to \mathcal{P}(S \times \nat)^\Sigma$,
where $\Sigma$ is a set of labels and $(s', n) \in \delta(s)(a)$ means a possible transition 
$s \xrightarrow{a} s'$ with a reward $n$.
We write $\Flts$ for the functor $\mathcal{P}(- \times \nat)^\Sigma\colon \Set \to \Set$.

For liftings of $\clat$-fibrations, we only provide  
the action of functors on objects because any $\clat$-fibration is faithful.

\paragraph*{Ungraded bisimulations on LTSs}
Define a lifting $\Fhatlts \colon \cat{Rel} \to \cat{Rel}$ of $\Flts$ along $p\colon \cat{Rel} \to \Set$
  by 
  \begin{displaymath}
  \Fhatlts(R \subseteq X \times X) \coloneqq
  \left\{(\tau_0, \tau_1) \ \middle \vert \
  \begin{array}{l}
    \forall i \in \{0, 1\}.~\forall a \in \Sigma.~\forall (x_i, r_i) \in \tau_i(a). \\
    \quad \exists (x_{1-i}, r_{1-i}) \in \tau_{1-i}(a).~(x_0, x_1) \in R
  \end{array}
  \right\}.
  \end{displaymath}
A coalgebra $\delta \colon R \dotarrow \Fhatlts (R)$
is an ungraded bisimulation on the LTS $\delta$,
which is a standard bisimulation relation on the underlying system $(\mathcal{P}\pi)^\Sigma \circ \delta$ where $\pi\colon S \times \nat \to S$ is the projection.

\paragraph*{Graded bisimulations on LTSs}
Using the fibration $\nat$-graded $\fibgraded{p}{{\nat}}\colon \totalgraded{\cat{Rel}}{{\nat}} \to \Set$ introduced in \Cref{eg:dar},
we define a lifting $\Fhatlts^{\nat}\colon \totalgraded{\cat{Rel}}{{\nat}} \to \totalgraded{\cat{Rel}}{{\nat}}$ of $\Flts$. Its actions on objects $\Fhatlts^{\nat}(\gradedRelDef{})_n$ is
defined as.
  \begin{align*} 
  \left\{ (\tau_0, \tau_1) \ \middle \vert  \
  \begin{array}{l}
    \forall i \in \{0, 1\}.~\forall a \in \Sigma.~\forall (x_i, r_i) \in \tau_i(a).~\exists (x_{1-i}, r_{1-i}) \in \tau_{1-i}(a). \\
    \quad \big(r_0 \leq n+r_1   \text{ and }
    (x_0, x_1) \in \gradedRelDef{n-r_0 + r_1}\big)
  \end{array}
    \right\}.
  \end{align*}
A coalgebra $\delta\colon \gradedRelDef{} \dotarrow \Fhatlts^{\nat}(\gradedRelDef{})$ is a graded bisimulation on the LTS, known as a amortised bisimulation~\cite{DBLP:conf/forte/KiehnA05}.
It is a reward-sensitive refinement of bisimulations given as $\Fhatlts$-coalgebras.

\paragraph*{\Doctrine{}}
The structures introduced above, together with the appropriate fibred adjunction~\eqref{eq:fibadj}, form an \doctrine{}.
Note that 
there exists a canonical natural transformation $(\ladjDelta) \circ \Fhatlts^{\nat} \Rightarrow \Fhatlts \circ (\ladjDelta)$
because $\Fhatlts = (\ladjDelta) \circ \Fhatlts^{\nat} \circ \Delta$ holds.
Hence, \Cref{cor:soundness} yields the following result.
\begin{proposition}
  For an LTS $\delta\colon S \to \mathcal{P}(S \times \nat)^\Sigma$,
  if $\gradedRelDef{}$ is a graded bisimulation on the LTS then $\ladjDelta[\gradedRelDef{}]$ is an ungraded bisimulation on the LTS $\delta$.
\end{proposition}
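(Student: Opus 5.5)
The plan is to obtain this proposition as a direct instance of \Cref{cor:soundness}. The work lies in checking that the LTS data form an \doctrine{}, and in particular in producing the natural transformation $\alpha\colon (\ladjDelta)\circ\Fhatlts^{\nat} \Rightarrow \Fhatlts\circ(\ladjDelta)$.

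\emph{Step 1: the fibrations and the adjunction.} First I recall that $p\colon\cat{Rel}\to\Set$ and $\fibgraded{p}{\nat}$ are $\clat$-fibrations. By \Cref{def:gradedfib} and the discussion around \eqref{eq:fibadj}, $\ladjDelta\dashv\Delta$ is a fibred adjunction. Concretely, $\ladjDelta$ sends $\gradedRelDef{}$ to $\bigcup_n\gradedRelDef{n}$, and $\Delta$ sends $R$ to the constant family.

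\emph{Step 2: the liftings.} Next I check that $\Fhatlts$ and $\Fhatlts^{\nat}$ are functorial liftings of $\Flts$. Because the fibrations are faithful, this reduces to monotonicity under reindexing. If $f\colon R\dotarrow Q$, then $\Flts f$ maps related pairs $(\tau_0,\tau_1)$ to related pairs: given a witness $(x_{1-i},r_{1-i})$ for $(x_i,r_i)$, the pair $(f x_{1-i}, r_{1-i})$ witnesses for $(f x_i, r_i)$. The graded case is the same argument carried out at each grade.

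\emph{Step 3: the natural transformation $\alpha$.} This is the step I expect to need the most care. I want an inclusion
\[(\ladjDelta)\Fhatlts^{\nat}(\gradedRelDef{})\subseteq \Fhatlts\Big(\bigcup_m\gradedRelDef{m}\Big)\]
over the identity on $\Flts X$. Take $(\tau_0,\tau_1)\in\Fhatlts^{\nat}(\gradedRelDef{})_n$ for some $n$. For each $i$, $a$ and $(x_i,r_i)\in\tau_i(a)$ there is a matching $(x_{1-i},r_{1-i})$ with $(x_0,x_1)\in\gradedRelDef{n-r_0+r_1}$. Since $r_0\le n+r_1$, the index $n-r_0+r_1$ is a genuine natural number, so $(x_0,x_1)\in\bigcup_m\gradedRelDef{m}$, and the pair lies in the right-hand side. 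Naturality is automatic because the inclusion is vertical and the fibres are posets.

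Alternatively, one can observe that $\Fhatlts=(\ladjDelta)\circ\Fhatlts^{\nat}\circ\Delta$ and set $\alpha\coloneqq(\ladjDelta)\Fhatlts^{\nat}\eta$ with $\eta\colon\id\Rightarrow\Delta(\ladjDelta)$ the unit. The direct calculation above is simpler, and it avoids having to verify that equation itself.

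\emph{Step 4: conclusion.} With the \doctrine{} in place, I apply \Cref{cor:soundness} to the coalgebra $\delta\colon S\to\Flts S$. A graded bisimulation $\delta\colon\gradedRelDef{}\dotarrow\Fhatlts^{\nat}\gradedRelDef{}$ is sent to the coalgebra $\alpha\circ(\ladjDelta)\delta$, which witnesses $\delta\colon\ladjDelta[\gradedRelDef{}]\dotarrow\Fhatlts(\ladjDelta[\gradedRelDef{}])$. This is exactly the statement that $\ladjDelta[\gradedRelDef{}]$ is an ungraded bisimulation on $\delta$.
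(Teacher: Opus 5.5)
Your proof is correct and is essentially the paper's. The paper also just notes that the LTS data form an adjoint gluing doctrine and applies \Cref{cor:soundness}, obtaining $\alpha$ canonically from the claimed identity $\Fhatlts=(\ladjDelta)\circ\Fhatlts^{\nat}\circ\Delta$, which is your unit-based alternative. Your direct inclusion check is in fact the safer choice: for this reward-sensitive lifting only $(\ladjDelta)\Fhatlts^{\nat}\Delta R\sqsubseteq\Fhatlts R$ holds in general. For example, with $R=\{(x,x)\}$, $\tau_0(a)=\{(x,k)\mid k\in\nat\}$ and $\tau_1(a)=\{(x,0)\}$, the pair $(\tau_0,\tau_1)$ lies in $\Fhatlts(R)$ but in no grade, yet that inclusion is all the unit-based construction needs anyway.
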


\subsection{Probabilistic simulations and their reward-sensitive variant}

\newcommand{\Fmdp}{F_\mathrm{MDP}}
\newcommand{\Fhatmdp}{\hat{F}_\mathrm{MDP}}

Let 
$\subdist$ be the subdistribution monad over $\Set$.
Consider a Markov Decision Process
$\langle \delta, \rew \rangle \colon S \to (\subdist S \times {\R^+})^\Sigma$,
where 
$\R^+$ is the set of positive real numbers,
$\delta\colon S \to (\subdist S)^\Sigma$ describes probabilistic transitions and $\rew \colon S \to (\R^+)^\Sigma$ assigns an expected reward for each transition.
We write $\Fmdp$ for the functor $(\subdist (-) \times {\R^+})^\Sigma\colon \Set \to \Set$.

\paragraph*{Ungraded simulations on MDPs}
We first introduce simulations that ignore rewards,
capturing only the probabilistic behaviour of the underlying MDP $\delta$.
To capture probabilistic simulations,
we lift $\Fmdp$ along the fibration $p\colon \cat{Rel} \to \Set$.
For a relation $R \subseteq I \times I$ and a subset $X \subseteq I$,
We write $R(X)$ for the set $\{x' \mid \exists x \in X.~(x, x') \in R\}$,
and for $f \in \subdist(S)$ and $X \subseteq S$,
we write $f(X)$ for $\sum_{x \in X}f(x)$.
Define a lifting $\Fhatmdp \colon \cat{Rel} \to \cat{Rel}$ of $\Fmdp$ along $p$
  by 
  \begin{displaymath}
  \Fhatmdp(R \subseteq I \times I) = \{(f, g) \mid \forall a \in \Sigma.~\forall X \subseteq I.~(\pi_1 f(a))(X) \leq (\pi_1 g(a))(R(X))\}.
  \end{displaymath}
A coalgebra $\langle \delta, \rew \rangle\colon R \dotarrow \Fhatmdp (R)$
is an ungraded simulation on the MDP, which
coincides with $0$-simulations on labelled markov processes~\cite{DBLP:conf/qest/DesharnaisLT08}.

\paragraph*{Graded simulations on MDPs}
To account for rewards, we introduce a graded extension of the above lifting.
We do this by instantiating the fibration introduced in \Cref{eg:dar} with
$\R^+$ as our grading monoid, resulting in a fibration $\fibgraded{p}{{\R^+}}\colon \totalgraded{\cat{Rel}}{{\R^+}} \to \Set$.
We define a lifting $\Fhatmdp^{\R^+}\colon \totalgraded{\cat{Rel}}{{\R^+}} \to \totalgraded{\cat{Rel}}{{\R^+}}$
  of $\Fmdp$. Its action on objects $\Fhatmdp^{\R^+}(\gradedRelDef{})_r$ is.
  \begin{align*}
  \left\{ (\tau_0, \tau_1) \ \middle \vert  \
  \begin{array}{l}
    \forall a \in \Sigma.~\forall X \subseteq I.~\big(r_0 \leq r+ r_1   \text{ and }
    (\tau_0(X) \leq \tau_1(\gradedRelDef{r-r_0 + r_1} (X))\big) \\ 
    \quad \text{where }(\tau_i, r_i) = \tau_i(a) \text{ for each }i \in \{0, 1\}.
  \end{array}
    \right\}
  \end{align*}
  There the grade $r$ represents the admissible reward difference between corresponding transitions.
A coalgebra $\langle \delta, \rew \rangle \colon \gradedRelDef{} \dotarrow \Fhatmdp^{\R^+}(\gradedRelDef{})$ is a graded simulation on the MDP, which can be regarded as a reward-sensitive refinement of simulations given as $\Fhatmdp$-coalgebras.

\paragraph*{\Doctrine{}.}
The structures introduced above, together with the fibred adjunction~\eqref{eq:fibadj} for this setting, form an \doctrine{}.
Note that 
there exists a canonical natural transformation 
$\ladjDelta \circ \Fhatmdp^{\R^+} \Rightarrow \Fhatmdp \circ \ladjDelta$
because $\Fhatmdp = (\ladjDelta) \circ \Fhatmdp^{\R^+} \circ \Delta$ holds.
Hence, \Cref{cor:soundness} implies the following.
\begin{proposition}
  For an MDP $\langle \delta, \rew\rangle\colon S \to (\subdist S \times {\R^+})^\Sigma$,
  if $\gradedRelDef{}$ is a graded simulation on $\langle \delta, \rew\rangle$ then $\ladjDelta[\gradedRelDef{}]$ is a simulation on the underlying MDP $\delta$.
\end{proposition}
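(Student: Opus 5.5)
The proposition is an instance of \Cref{cor:soundness}, so the plan is to check that the MDP data form an \doctrine{} and then apply that theorem to the coalgebra $\langle \delta, \rew\rangle$.

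\textbf{Step 1: the fibrations and the adjunction.} Both $p\colon\cat{Rel}\to\Set$ and $\fibgraded{p}{{\R^+}}$ are $\clat$-fibrations; the latter follows from \Cref{def:gradedfib}. Since $\R^+$ is taken as a discrete category, the fibred adjunction \eqref{eq:fibadj} applies, with $\ladjDelta[\gradedRelDef{}]=\bigcup_r\gradedRelDef{r}$.

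\textbf{Step 2: the two liftings.} We check that $\Fhatmdp$ and $\Fhatmdp^{\R^+}$ are functorial liftings of $\Fmdp$. Because the fibrations are faithful, this reduces to one fact: relation-preserving maps are sent to relation-preserving maps. For $h\colon (I,R)\to(J,Q)$ we have $h(R(h^{-1}Y))\subseteq Q(Y)$, hence
\[
(\subdist h\, f)(Y)=f(h^{-1}Y)\le g(R(h^{-1}Y))\le (\subdist h\, g)(Q(Y)).
\]
The graded case is identical, carried out grade by grade.

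\textbf{Step 3: the natural transformation $\alpha$.} We need $\alpha\colon(\ladjDelta)\circ\Fhatmdp^{\R^+}\Rightarrow\Fhatmdp\circ(\ladjDelta)$, and since the fibrations are posetal it suffices to show a vertical inclusion. Take $(\tau_0,\tau_1)\in\Fhatmdp^{\R^+}(\gradedRelDef{})_r$. For each $a$ and $X$,
\[
\tau_0(X)\le\tau_1(\gradedRelDef{r-r_0+r_1}(X))\le\tau_1\Big(\big(\textstyle\bigcup_s\gradedRelDef{s}\big)(X)\Big),
\]
where the second inequality uses monotonicity of $\tau_1(-)$ in its set argument. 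This gives $(\tau_0,\tau_1)\in\Fhatmdp(\ladjDelta[\gradedRelDef{}])$, which is all that is required. The paper's remark that $\Fhatmdp=(\ladjDelta)\circ\Fhatmdp^{\R^+}\circ\Delta$ yields the same map via the counit, but the direct inclusion is simpler.

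\textbf{Step 4: conclusion.} Given $\langle\delta,\rew\rangle\colon\gradedRelDef{}\dotarrow\Fhatmdp^{\R^+}\gradedRelDef{}$, \Cref{cor:soundness} gives $\langle\delta,\rew\rangle\colon\ladjDelta[\gradedRelDef{}]\dotarrow\Fhatmdp\ladjDelta[\gradedRelDef{}]$. Unfolding $\Fhatmdp$ shows that it discards the reward component, so this is exactly a simulation on $\delta$.

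The main obstacle is Step 3, specifically the grade arithmetic. When $r-r_0+r_1$ is not a valid grade (it could be negative), the subscript must be interpreted. We read the clause $r_0\le r+r_1$ as guaranteeing the index lies in $\R^+\cup\{0\}$, taking $0$ as an admissible grade. Monotonicity of $\tau_1(-)$ then closes the argument.
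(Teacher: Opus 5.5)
Your proposal is correct and follows the paper's route. Like the paper, you check that the MDP data form an adjoint gluing doctrine and then invoke \Cref{cor:soundness}. The one difference is that you get $\alpha$ by a direct fibrewise inclusion, where the paper uses the identity $\Fhatmdp=(\ladjDelta)\circ\Fhatmdp^{\R^+}\circ\Delta$. Your version is if anything more robust. It only needs the inclusion $(\ladjDelta)\circ\Fhatmdp^{\R^+}\circ\Delta\sqsubseteq\Fhatmdp$, which always holds. The reverse inclusion requires the reward differences to be bounded uniformly over $a\in\Sigma$, for example when $\Sigma$ is finite.

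One small slip: the paper's route whiskers the \emph{unit} $\id\Rightarrow\Delta\circ(\ladjDelta)$, not the counit.
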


\subsection{Pseudometric and relational approximate notions of bisimulations}
\newcommand{\Flmp}{F_\mathrm{LMP}}
\newcommand{\Fhatlmp}{\hat{F}_\mathrm{LMP}}
\newcommand{\PMet}{\cat{PMet}_1}
\newcommand{\IRel}{\cat{IRel}}
\newcommand{\cball}{R}
\newcommand{\cballleft}{L}

We next turn to approximate notions of bisimulations for labelled Markov processes (LMPs), which can be represented as
coalgebras $\delta \colon X \to (\subdist X)^\Sigma$,
where $\subdist$ is the subdistribution monad and $\Sigma$ is a set of labels.
We write $\Flmp$ for the functor $(\subdist (-))^\Sigma\colon \Set \to \Set$.

In recent work, Desharnais and Sokolova \cite{DBLP:conf/csl/DesharnaisS26}
investigated how the \emph{L\'evy-Prokhorov pseudometrics}~\cite{Prokhorov1956ConvergenceOR} characterises $\epsilon$-bisimulations, relating the metric and relational viewpoints.
We now show how this is modeled by our framework.

Throughout this subsection,
we employ the unit interval $([0, 1], \oplus, 0)$,
where $\oplus$ is truncated addition, i.e., $a \oplus b = \min(1, a + b)$,
as a grading quantale.
We note, however, that our framework (excluding results~\Cref{thm:qrelmet} and \Cref{prop:bisimilarity_lmp}) can be applied to any quantale.

\paragraph*{(Pseudometric-based) ungraded bisimulations on LMPs}
We define $\PMet$ to be the category of $1$-bounded pseudometric spaces.
Its objects are pairs of a set $X$ and a function $d \colon X \times X \to [0, 1]$
such that $d$ satisfies $d(x, x) = 0$, $d(x, z) \leq d(x, y) + d(y, z)$, and $d(x, y) = d(y, x)$ for each $x, y, z \in X$.
A morphism $f\colon (X, d_X) \to (Y, d_Y)$ is a $1$-Lipschitz function, i.e.~a function $f \colon X \to Y$ such that $d_Y(f(x), f(x')) \leq d_X(x, x')$ for each $x, x' \in X$.
The forgetful functor $\PMet \to \Set$ is a $\clat$-fibration~\cite{DBLP:conf/fsttcs/BaldanBKK14}.

Define $\Fhatlmp\colon \PMet \to \PMet$ of $\Flmp$ along the fibration by
\begin{displaymath}
  \Fhatlmp(d)(f_0, f_1) \coloneqq \inf \{r \mid \forall a \in \Sigma.~\forall X \subseteq S.~f_i(a)(X) \leq f_{1-i}(a)(X_r) + r, i = 0, 1 \},
\end{displaymath}
where $X_r = \{y \mid \exists x \in X.~d(x, y) \leq r\}$.
It coincides with the L\'evy-Prokhorov lifting~\cite{Prokhorov1956ConvergenceOR,DBLP:conf/csl/DesharnaisS26},
except that we use closed balls instead of open ones.
This change allows the lifting $\Fhatlmp$ to be expressed as a composition $L \circ \Fhatlmp^{[0, 1]} \circ R$ , where $L \dashv R$ is an adjunction and $\Fhatlmp^{[0, 1]}$ is a lifting for graded relations defined below.

A coalgebra $\delta \colon (S, d) \dotarrow \Fhatlmp(S, d)$ is an ungraded bisimulation on the LMP.
\paragraph*{(Relation-based) graded bisimulations on LMPs}

In the same way as \Cref{eg:dar_mon} with letting $I \coloneqq ([0, 1], \oplus, 0)$,
we obtain the lax monoidal $[0, 1]$-graded fibrations $\ol{\totalgraded{\cat{Rel}}{[0, 1]}} \to \Set$ of $p\colon \cat{Rel} \to \Set$ (see \Cref{def:gradedfib}).

We write $\IRel$ for 
the total category $\ol{\totalgraded{\cat{Rel}}{[0, 1]}}$,
whose objects
are $([0, 1], \leq)$-graded binary relations $\gradedRelDef{}$ on a set $X$ such that 
  \begin{itemize}
  \item For every $x \in X$, $x \gradedRelDef{0} x$, and
  \item If $x_1 \gradedRelDef{r} x_2$ and $x_2 \gradedRelDef{r'} x_3$ then
    $x_1 \gradedRelDef{r \oplus r'} x_3$.
  \end{itemize}
A morphism between such relations is a function between their underlying
sets such that they laxly preserve the relation. 

We define a lifting
$\Fhatlmp^{[0, 1]}\colon \IRel \to \IRel$ of $\Flmp$ by specifying its action on
objects $\Fhatlmp^{[0, 1]}(\gradedRelDef{})_r$ as
\begin{displaymath}
  \{(f_0, f_1) \mid \forall a \in \Sigma.~\forall X \subseteq S.~f_i(a)(X) \leq f_{1-i}(a)(\gradedRelDef{r}(X)) + r, i = 0, 1 \}.
\end{displaymath}

A coalgebra $\delta \colon \gradedRelDef{} \dotarrow \Fhatlmp^{[0, 1]}(\gradedRelDef{})$ is a graded bisimulation on the LMP.
A component $\gradedRelDef{\epsilon}$ of a graded bisimulation $\gradedRelDef{}$ 
coincides with the so-called $\epsilon$-bisimulation introduced in~\cite{DBLP:conf/qest/DesharnaisLT08}.

\paragraph*{\Doctrine{}}
The following structures form an \doctrine.
\begin{equation*}
\begin{tikzcd}
	{\PMet} && {\IRel} \\
	& {\cat{Set}}
	\arrow["{\Fhatlmp}", from=1-1, to=1-1, loop, in=160, out=200, distance=5mm]
	\arrow[""{name=0, anchor=center, inner sep=0}, "{\cball}", curve={height=-6pt}, from=1-1, to=1-3]
	\arrow[from=1-1, to=2-2]
	\arrow[""{name=1, anchor=center, inner sep=0}, "{\cballleft}", curve={height=-6pt}, from=1-3, to=1-1]
	\arrow["{\Fhatlmp^{[0, 1]}}"', from=1-3, to=1-3, loop, in=20, out=340, distance=5mm]
	\arrow[from=1-3, to=2-2]
	\arrow["{\Flmp}"', from=2-2, to=2-2, loop, in=20, out=340, distance=5mm]
	\arrow["\dashv"{anchor=center, rotate=90}, draw=none, from=1, to=0]
\end{tikzcd}
\end{equation*}
where in the fibred adjunction $\cballleft \dashv \cball\colon \PMet \to \IRel$ the right adjoint uses the closed balls indexed by their radius
$(\cball(d))_r \coloneqq \{(x, x') \mid d(x, x') \leq r\}$
and the left adjoint is
$\cballleft(\gradedRelDef{})(x, x') \coloneqq \inf\{r \in [0, 1] \mid (x, x') \in \gradedRelDef{r}\}$. Note that in the definition of $L$ we must use $\inf$ due to the order on distances
being contravariant with respect to relational inclusion.
There exists a canonical natural transformation
$\ladjDelta \circ \Fhatlmp^{[0, 1]} \Rightarrow \Fhatlmp \circ \ladjDelta$
because $\Fhatlmp = \cballleft \circ \Fhatlmp^{[0, 1]} \circ \cball$ holds.
Note that the L\'evy-Prokhorov lifting (using open balls) also fits into our framework.
By \Cref{cor:soundness}, we obtain the following result.
\begin{theorem} \label{thm:qrelmet}
  If $\gradedRelDef{}\in \IRel$ is a bisimulation on the LMP $\delta$,
  then $\cballleft(\gradedRelDef{}) \in \PMet$ is also a bisimulation on $\delta$.
\end{theorem}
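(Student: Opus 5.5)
The plan is to obtain \Cref{thm:qrelmet} as an instance of \Cref{cor:soundness}. We first check that the displayed data $(\Flmp, \PMet \to \Set, \IRel \to \Set, \Fhatlmp, \Fhatlmp^{[0,1]}, \cballleft, \cball, \alpha)$ is an \doctrine{}. Once that is done, the abstract soundness theorem applied to the coalgebra $\delta$ gives the result directly: a graded bisimulation $\delta\colon \gradedRelDef{} \dotarrow \Fhatlmp^{[0,1]}(\gradedRelDef{})$ is sent to $\delta\colon \cballleft(\gradedRelDef{}) \dotarrow \Fhatlmp(\cballleft(\gradedRelDef{}))$.

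\textbf{Step 1: the two fibrations.} Both are $\clat$-fibrations. For $\PMet \to \Set$ this is cited from \cite{DBLP:conf/fsttcs/BaldanBKK14}. For $\IRel$ it is the lax monoidal graded fibration of \Cref{def:lax_monoidal_ver}, with meets computed gradewise.

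\textbf{Step 2: the fibred adjunction $\cballleft \dashv \cball$.}
\begin{itemize}
\item $\cball(d)$ lies in $\IRel$. It is reflexive at grade $0$ since $d(x,x)=0$. It is transitive for $\oplus$ by the triangle inequality together with truncation at $1$, which is harmless because $d \leq 1$.
\item $\cballleft(\gradedRelDef{})$ is a $1$-bounded pseudometric. Reflexivity gives $d(x,x)=0$. The triangle inequality follows from $\gradedRelDef{r}\star\gradedRelDef{r'} \subseteq \gradedRelDef{r\oplus r'}$ after taking infima. Symmetry is a concern: $\IRel$ objects need not be symmetric, so either $\PMet$ should be read as quasi-pseudometrics or the relations are implicitly symmetric. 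I would note this and proceed.
\item The adjunction is $\gradedRelDef{} \sqsubseteq \cball(d) \iff \cballleft(\gradedRelDef{}) \sqsubseteq d$ fibrewise. This is the usual Galois correspondence. The reverse direction needs $\inf\{r \mid (x,x')\in\gradedRelDef{r}\} \leq s \Rightarrow (x,x') \in \gradedRelDef{s}$. That holds if $\gradedRelDef{}$ is upward closed and right-continuous in $r$. Monotonicity comes from functoriality over $([0,1],\leq)$. Right-continuity is not automatic, so the adjunction should be stated with the appropriate closure, or $\cball$ should use closed balls, which is exactly why closed balls were chosen.
\item Fibredness means reindexing commutes with $\cball$ and $\cballleft$. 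Both are defined pointwise on pairs, so $f^*$ is just precomposition.
\end{itemize}

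\textbf{Step 3: the natural transformation $\alpha$.} We need a map
\[\cballleft\circ\Fhatlmp^{[0,1]} \Rightarrow \Fhatlmp\circ\cballleft.\]
Since everything is posetal and vertical, this is the inequality $\Fhatlmp(\cballleft\gradedRelDef{}) \leq \cballleft(\Fhatlmp^{[0,1]}\gradedRelDef{})$ as distances, which is the order of $\PMet$. Using the unit $\gradedRelDef{} \sqsubseteq \cball\cballleft\gradedRelDef{}$ and monotonicity of $\Fhatlmp^{[0,1]}$ gives
\[\cballleft\Fhatlmp^{[0,1]}\gradedRelDef{} \sqsubseteq \cballleft\Fhatlmp^{[0,1]}\cball\cballleft\gradedRelDef{} = \Fhatlmp\cballleft\gradedRelDef{}.\]
So everything reduces to the identity $\Fhatlmp = \cballleft\circ\Fhatlmp^{[0,1]}\circ\cball$. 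Unfolding it, $(\cball d)_r(X) = X_r$, so $\Fhatlmp^{[0,1]}(\cball d)_r$ is precisely the set of pairs meeting the defining condition of $\Fhatlmp(d)$ at $r$, and applying $\cballleft$ takes the infimum over $r$. This matches the definition of $\Fhatlmp$ verbatim.

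\textbf{Main obstacle.} The hard part is Step 2: the well-definedness and adjunction details, namely symmetry and closedness or continuity of the infimum. I expect to handle it by restricting to, or closing under, right-continuous graded relations, which closed balls guarantee on the image of $\cball$. Once the doctrine is in place, \Cref{cor:soundness} yields the statement immediately.
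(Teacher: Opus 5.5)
Your route is the paper's. The paper proves \Cref{thm:qrelmet} in one line from \Cref{cor:soundness}, with $\alpha$ the canonical transformation induced by $\Fhatlmp = \cballleft\circ\Fhatlmp^{[0,1]}\circ\cball$, which is your Step~3. Your checks of reflexivity, transitivity, the triangle inequality, fibredness and the unfolding of $\Fhatlmp^{[0,1]}(\cball d)_r$ are correct. Your remark that $\cballleft(\mathcal{A})$ need not be symmetric is a genuine imprecision in the paper's setup. It means $\PMet$ must be read as quasi-pseudometrics, or $\IRel$ restricted to symmetric relations.

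However, the ``main obstacle'' you name in Step~2 does not exist, and you should not apply your proposed remedy.

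\emph{Why the adjunction needs no continuity.} The fibre order of $\PMet$ is reversed, so $\cballleft(\mathcal{A})\sqsubseteq d$ means $d\le\cballleft(\mathcal{A})$ pointwise. Then $(x,x')\in\mathcal{A}_s$ gives $d(x,x')\le\cballleft(\mathcal{A})(x,x')\le s$, so $\mathcal{A}\sqsubseteq\cball(d)$. The other direction is equally immediate, so the Galois correspondence holds for every $\mathcal{A}\in\IRel$.

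\emph{What your condition actually is.} The condition you wrote, $\inf\{r\mid(x,x')\in\mathcal{A}_r\}\le s\Rightarrow(x,x')\in\mathcal{A}_s$, is $\cball\cballleft\mathcal{A}\sqsubseteq\mathcal{A}$, i.e.\ $\mathcal{A}=\cball\cballleft\mathcal{A}$. That is not part of the adjunction.

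\emph{Your own Step~3 never needs it.} Step~3 uses only the unit $\mathcal{A}\sqsubseteq\cball\cballleft\mathcal{A}$. This is the instance $d=\cballleft\mathcal{A}$ of the direction you thought was problematic, and it holds trivially because $(x,x')\in\mathcal{A}_r$ forces $\cballleft(\mathcal{A})(x,x')\le r$. Soundness itself needs only that $\cballleft$ is a functor over $\Set$ and $\alpha$ is vertical.

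\emph{Why restricting would hurt.} The property $\mathcal{A}=\cball\cballleft\mathcal{A}$ is what the paper needs for the \emph{converse}, \Cref{thm:qmetrel}. There, $\Fhatlmp^{[0,1]}$ is shown to preserve it only when $S$ is finite (\Cref{lem:fhatrel_rl}). Restricting $\IRel$ to such right-continuous relations, or closing under them, would weaken \Cref{thm:qrelmet}, which is about arbitrary $\mathcal{A}\in\IRel$. It would also leave you owing a preservation lemma the paper only has for finite state spaces. Drop that step.

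What the closed balls actually buy is that this $\cball$ is right adjoint to $\cballleft$ and that the identity $\Fhatlmp=\cballleft\Fhatlmp^{[0,1]}\cball$ holds. They are not about any continuity of $\mathcal{A}$.
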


The converse direction, that a bisimulation in the left-hand side of the \doctrine{} diagram yields one in the right-hand side, does not necessarily hold in general, nor is it 
guaranteed by our abstract framework.
However, for finite LMPs, this follows by a direct argument.
See \Cref{ap:proof_qmetrel} for the proof.
\begin{theorem} \label{thm:qmetrel}
  Assume that the state space $S$ of the LMP $\delta$ is finite.
  If $d \in \PMet$ is a bisimulation on $\delta$,
  then $\cball(d) \in \IRel$ is also a bisimulation on $\delta$.
\end{theorem}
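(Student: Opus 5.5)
We need to show: if $\delta\colon (S,d) \dotarrow \Fhatlmp(S,d)$, then for every $r\in[0,1]$ and every pair $(x,x')$ with $d(x,x')\le r$, the pair $(\delta(x),\delta(x'))$ lies in $\Fhatlmp^{[0,1]}(\cball(d))_r$. Unfolding, this means that for every $a\in\Sigma$, every $X\subseteq S$ and each $i$,
\[
f_i(a)(X)\le f_{1-i}(a)(\cball(d)_r(X))+r ,
\]
where $f_0=\delta(x)$ and $f_1=\delta(x')$. Note that $\cball(d)_r(X)$ is exactly the closed neighbourhood $X_r$, so the condition reads $f_i(a)(X)\le f_{1-i}(a)(X_r)+r$.

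From the hypothesis we know $\Fhatlmp(d)(\delta x,\delta x')\le d(x,x')\le r$. Write $r^*=\Fhatlmp(d)(\delta x,\delta x')$. By the definition of $\Fhatlmp$ as an infimum, for every $s>r^*$ the inequality $f_i(a)(X)\le f_{1-i}(a)(X_s)+s$ holds for all $a$, $X$ and $i$. This uses monotonicity in $s$: if the condition holds for some $s_0$, it holds for all $s\ge s_0$, since $X_s$ and $+s$ both grow with $s$. The main step is to show the infimum is attained, i.e.\ the condition holds at $s=r^*$ itself. Monotonicity then gives it at $r\ge r^*$.

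Attainment is where finiteness enters, and I expect it to be the only real obstacle. Since $S$ is finite, the set of distances $D=\{d(y,z)\mid y,z\in S\}$ is finite. The map $s\mapsto X_s$ is a right-continuous step function: for $s$ slightly above $r^*$, and below the least element of $D$ strictly greater than $r^*$ (or for all $s>r^*$ if there is none), we have $X_s=X_{r^*}$. This is precisely where the closed balls are used, via $d(y,z)\le s \iff d(y,z)\le r^*$ for $s$ in that range. Taking any sequence $s_n\downarrow r^*$ in that range gives $f_i(a)(X)\le f_{1-i}(a)(X_{r^*})+s_n$. Letting $n\to\infty$ yields the condition at $r^*$, for each of the finitely many $X$ (even arbitrarily many $X$ would be fine, since the range of $s$ is uniform in $X$) and each $a$. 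The edge case $r^*=1$ is harmless, since the inequality with $+1$ is trivial for subdistributions.

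Combining the two steps, $(\delta x,\delta x')\in\Fhatlmp^{[0,1]}(\cball(d))_r$ whenever $(x,x')\in\cball(d)_r$. Hence $\delta\colon\cball(d)\dotarrow\Fhatlmp^{[0,1]}(\cball(d))$. Finally, we should note that $\cball(d)$ is indeed an object of $\IRel$: reflexivity at grade $0$ follows from $d(x,x)=0$, and graded transitivity follows from the triangle inequality together with $d\le 1$, which handles the truncated $\oplus$. This completes the proof.
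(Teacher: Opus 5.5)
Your proof is correct and rests on the same core fact as the paper's proof. On a finite state space, the infimum defining $\Fhatlmp(d)(f_0,f_1)$ is attained, and this needs both finiteness and the use of closed balls.

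The packaging differs, though.

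\textbf{The paper's route.} It first proves a lemma: on finite carriers, $\Fhatlmp^{[0,1]}$ preserves graded relations $\mathcal{A}$ with $RL\mathcal{A}=\mathcal{A}$. It combines this with $LR=\id$ and $\Fhatlmp = L\circ\Fhatlmp^{[0,1]}\circ R$ to get the identity $\Fhatlmp^{[0,1]}(Rd) = R(\Fhatlmp d)$. The coalgebra is then transported simply by applying $R$ to $c\colon d \dotarrow \Fhatlmp d$. Its finiteness step is a contradiction argument that builds a strictly descending chain $\mathcal{A}_{q_0}\supsetneq\mathcal{A}_{q_1}\supsetneq\cdots$ of relations on $S$.

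\textbf{Your route.} You work pointwise and use that $R(d)$ comes from a pseudometric. The finitely many distance values make $s\mapsto X_s$ a right-continuous step function, which gives attainment of the infimum at once.

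\textbf{Comparison.} Your argument is shorter and more transparent. The paper's gives the structural statement that $R$ commutes with the two liftings on finite carriers. Your argument also never uses that $(f_0,f_1)$ lies in the image of $\delta$, so it actually proves $R(\Fhatlmp d)\sqsubseteq\Fhatlmp^{[0,1]}(Rd)$ for all pairs in $\Flmp S$. The reverse inclusion follows directly from the definition of the infimum, so you recover the paper's identity as well.
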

The authors of \cite{DBLP:conf/csl/DesharnaisS26} have proved that when $d$ is a fixed point of the corresponding operator, then the induced relation is a bisimulation.
Here, by adopting the closed-ball variant of the L\'evy-Prokhorov lifting and assuming that the state space is finite, the statement holds for bisimulations, not only for fixed points.

These correspondences between metric-based and relation-based bisimulations 
yield the following result. See \Cref{ap:proof_bisimilarity_lmp} for the proof.
\begin{proposition} \label{prop:bisimilarity_lmp}
  In the setting of \Cref{thm:qmetrel},
  $\cballleft$ maps the bisimilarity $\nu(\delta^*\Fhatlmp)$ to the bisimilarity $\nu(\delta^*\Fhatlmp^{[0, 1]})$.
\end{proposition}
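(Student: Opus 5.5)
The proof goes by antisymmetry in the fibre $\PMet_S$ over the finite state space $S$. Since the left adjoint $\cballleft$ goes from $\IRel$ to $\PMet$, I read the statement as the equality $\cballleft(\nu(\delta^*\Fhatlmp^{[0, 1]})) = \nu(\delta^*\Fhatlmp)$. That is, $\cballleft$ sends the graded (relation-based) bisimilarity to the pseudometric bisimilarity. Throughout I use the Knaster--Tarski characterisation of each bisimilarity as the greatest post-fixed point of $\delta^*\hat{F}$ in the relevant complete-lattice fibre. Hence any bisimulation $E$, i.e.\ any $E \sqsubseteq \delta^*\hat{F}E$, lies below the bisimilarity. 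Recall that in $\PMet_S$ the order $\sqsubseteq$ is reverse pointwise order on distances: a smaller distance corresponds to a larger relation.

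First inclusion. Write $\mathcal{B} \coloneqq \nu(\delta^*\Fhatlmp^{[0, 1]})$. It is a graded bisimulation on $\delta$, so \Cref{thm:qrelmet} (equivalently \Cref{cor:soundness} applied to this \doctrine{}) shows that $\cballleft(\mathcal{B})$ is a bisimulation in $\PMet$. Therefore $\cballleft(\mathcal{B}) \sqsubseteq \nu(\delta^*\Fhatlmp)$. No finiteness is needed for this direction.

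Second inclusion. Write $d \coloneqq \nu(\delta^*\Fhatlmp)$. Since $S$ is finite, \Cref{thm:qmetrel} shows that $\cball(d)$ is a graded bisimulation, so $\cball(d) \sqsubseteq \mathcal{B}$. The functor $\cballleft$ is a fibred left adjoint, so it is monotone on fibres, and we obtain $\cballleft(\cball(d)) \sqsubseteq \cballleft(\mathcal{B})$. It remains to check that $\cballleft \circ \cball = \id$ on $\PMet$. This is a direct computation: $\cballleft(\cball(d))(x,x') = \inf\{r \in [0,1] \mid d(x,x') \le r\} = d(x,x')$, and the infimum is attained precisely because closed balls are used. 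Hence $d \sqsubseteq \cballleft(\mathcal{B})$, and antisymmetry gives the claimed equality.

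The main obstacle is entirely delegated to \Cref{thm:qmetrel}, which is where finiteness of $S$ enters. Beyond that, the only delicate points are bookkeeping: keeping the direction of the order in $\PMet$ straight, since distances are contravariant with respect to inclusion, and verifying that $\cballleft\cball = \id$ holds exactly rather than only up to an open/closed-ball discrepancy.
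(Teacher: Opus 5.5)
Your proof is correct and is essentially the paper's argument. The paper also bounds each bisimulation $d$ by $d = \cballleft\cball(d) \sqsubseteq \cballleft(\nu(\delta^*\Fhatlmp^{[0, 1]}))$ using \Cref{thm:qmetrel}, then invokes \Cref{thm:qrelmet} to see that $\cballleft(\nu(\delta^*\Fhatlmp^{[0, 1]}))$ is itself a bisimulation and hence the greatest one. Your reading of the statement as $\cballleft(\nu(\delta^*\Fhatlmp^{[0, 1]})) = \nu(\delta^*\Fhatlmp)$ is the intended one, and the only cosmetic difference is that you specialise to $d = \nu(\delta^*\Fhatlmp)$ and conclude by antisymmetry rather than quantifying over all bisimulations.
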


\section{Conclusion}

In this paper we have introduced a framework for axiomatizing the interaction
between graded and ungraded notions of bisimulation. In order to be able to
treat such situations coalgebraically, we have proposed the application of
categorical gluing to coalgebraic automata theory. This technique
allows for the canonical construction of coalgebraic models that account for
the coherent integration of these different notions of bisimulation. We corroborates
the naturality of this model by using it to prove an abstract soundness theorem,
as well as showing that is had nice coalgebraic properties and
has interesting concrete examples.

\bibliographystyle{splncs04}
\bibliography{mybib}

@article{hermida1999some,
  title   = {Some properties of {Fib} as a fibred 2-category},
  author  = {Hermida, Claudio},
  journal = {Journal of Pure and Applied Algebra},
  year    = {1999}
}

@inproceedings{DBLP:conf/lics/AmorimKS25,
  author    = {Pedro H. Azevedo de Amorim and
               Satoshi Kura and
               Philip Saville},
  title     = {Logical relations for call-by-push-value models, via internal fibrations
               in a 2-category},
  booktitle = {{LICS}},
  pages     = {732--747},
  publisher = {{IEEE}},
  year      = {2025}
}

@book{borceux1994handbookI,
  title     = {Handbook of categorical algebra: Basic category theory},
  author    = {Borceux, Francis},
  volume    = {1},
  year      = {1994},
  publisher = {Cambridge University Press}
}

@article{DBLP:journals/iandc/HermidaJ98,
  author  = {Claudio Hermida and
             Bart Jacobs},
  title   = {Structural Induction and Coinduction in a Fibrational Setting},
  journal = {Inf. Comput.},
  volume  = {145},
  number  = {2},
  pages   = {107--152},
  year    = {1998}
}

@book{DBLP:books/daglib/0023251,
  author    = {Bart P. F. Jacobs},
  title     = {Categorical Logic and Type Theory},
  series    = {Studies in logic and the foundations of mathematics},
  volume    = {141},
  publisher = {North-Holland},
  year      = {2001},
  isbn      = {978-0-444-50853-9},
  bibsource = {dblp computer science bibliography, https://dblp.org}
}

@article{DBLP:journals/ngc/KomoridaKHKHEH22,
  author  = {Yuichi Komorida and
             Shin{-}ya Katsumata and
             Nick Hu and
             Bartek Klin and
             Samuel Humeau and
             Clovis Eberhart and
             Ichiro Hasuo},
  title   = {Codensity Games for Bisimilarity},
  journal = {New Gener. Comput.},
  volume  = {40},
  number  = {2},
  pages   = {403--465},
  year    = {2022}
}

@phdthesis{DBLP:books/daglib/0072949,
  author = {Claudio Hermida},
  title  = {Fibrations, logical predicates and indeterminates},
  school = {University of Edinburgh, {UK}},
  year   = {1993}
}

@article{street1972formal,
  title   = {The formal theory of monads},
  author  = {Street, Ross},
  journal = {Journal of Pure and Applied Algebra},
  year    = {1972}
}

@inproceedings{DBLP:conf/csl/DesharnaisS26,
  author    = {Jos{\'{e}}e Desharnais and
               Ana Sokolova},
  title     = {{\(\epsilon\)}-Distance via {L}{\'{e}}vy-{P}rokhorov Lifting},
  booktitle = {{CSL}},
  series    = {LIPIcs},
  volume    = {363},
  pages     = {26:1--26:24},
  publisher = {Schloss Dagstuhl - Leibniz-Zentrum f{\"{u}}r Informatik},
  year      = {2026}
}

@article{HERRLICH1974125,
  title    = {Topological functors},
  journal  = {General Topology and its Applications},
  volume   = {4},
  number   = {2},
  pages    = {125-142},
  year     = {1974},
  issn     = {0016-660X},
  author   = {Horst Herrlich}
}

@article{DBLP:journals/logcom/SprungerKDH21,
  author  = {David Sprunger and
             Shin{-}ya Katsumata and
             J{\'{e}}r{\'{e}}my Dubut and
             Ichiro Hasuo},
  title   = {Fibrational bisimulations and quantitative reasoning: Extended version},
  journal = {J. Log. Comput.},
  volume  = {31},
  number  = {6},
  pages   = {1526--1559},
  year    = {2021}
}

@article{DBLP:journals/jlp/KurzV16,
  author  = {Alexander Kurz and
             Jir{\'{\i}} Velebil},
  title   = {Relation lifting, a survey},
  journal = {J. Log. Algebraic Methods Program.},
  volume  = {85},
  number  = {4},
  pages   = {475--499},
  year    = {2016}
}

@article{DBLP:journals/iandc/Ibaraki76,
  author  = {Toshihide Ibaraki},
  title   = {Finite Automata Having Cost Functions},
  journal = {Inf. Control.},
  volume  = {31},
  number  = {2},
  pages   = {153--176},
  year    = {1976}
}

@incollection{zawadowski,
  author    = {Marek Zawadowski},
  title     = {Lax Monoidal Fibrations},
  booktitle = {Models, Logics, and Higher-Dimensional Categories},
  series    = {CRM Proceedings and Lecture Notes},
  volume    = {53},
  pages     = {341--426},
  publisher = {Amer. Math. Soc.},
  year      = {2011}
}

@inproceedings{DBLP:conf/qest/DesharnaisLT08,
  author    = {Jos{\'{e}}e Desharnais and
               Fran{\c{c}}ois Laviolette and
               Mathieu Tracol},
  title     = {Approximate Analysis of Probabilistic Processes: Logic, Simulation
               and Games},
  booktitle = {{QEST}},
  pages     = {264--273},
  publisher = {{IEEE} Computer Society},
  year      = {2008}
}

@inproceedings{DBLP:conf/stacs/BeoharG0MFSW24,
  author    = {Harsh Beohar and
               Sebastian Gurke and
               Barbara K{\"{o}}nig and
               Karla Messing and
               Jonas Forster and
               Lutz Schr{\"{o}}der and
               Paul Wild},
  title     = {Expressive Quantale-Valued Logics for Coalgebras: An Adjunction-Based
               Approach},
  booktitle = {{STACS}},
  series    = {LIPIcs},
  volume    = {289},
  pages     = {10:1--10:19},
  publisher = {Schloss Dagstuhl - Leibniz-Zentrum f{\"{u}}r Informatik},
  year      = {2024}
}

@inproceedings{DBLP:conf/forte/KiehnA05,
  author    = {Astrid Kiehn and
               S. Arun{-}Kumar},
  title     = {Amortised Bisimulations},
  booktitle = {{FORTE}},
  series    = {Lecture Notes in Computer Science},
  volume    = {3731},
  pages     = {320--334},
  publisher = {Springer},
  year      = {2005}
}

@inproceedings{DBLP:conf/fsttcs/BaldanBKK14,
  author    = {Paolo Baldan and
               Filippo Bonchi and
               Henning Kerstan and
               Barbara K{\"{o}}nig},
  title     = {Behavioral Metrics via Functor Lifting},
  booktitle = {{FSTTCS}},
  series    = {LIPIcs},
  volume    = {29},
  pages     = {403--415},
  publisher = {Schloss Dagstuhl - Leibniz-Zentrum f{\"{u}}r Informatik},
  year      = {2014}
}

@inproceedings{DBLP:conf/calco/MiliusPS15,
  author    = {Stefan Milius and
               Dirk Pattinson and
               Lutz Schr{\"{o}}der},
  title     = {Generic Trace Semantics and Graded Monads},
  booktitle = {{CALCO}},
  series    = {LIPIcs},
  volume    = {35},
  pages     = {253--269},
  publisher = {Schloss Dagstuhl - Leibniz-Zentrum f{\"{u}}r Informatik},
  year      = {2015}
}

@inproceedings{DBLP:conf/concur/Glabbeek90,
  author    = {Rob J. van Glabbeek},
  title     = {The Linear Time-Branching Time Spectrum (Extended Abstract)},
  booktitle = {{CONCUR}},
  series    = {Lecture Notes in Computer Science},
  volume    = {458},
  pages     = {278--297},
  publisher = {Springer},
  year      = {1990}
}

@inproceedings{DBLP:conf/concur/DorschMS19,
  author    = {Ulrich Dorsch and
               Stefan Milius and
               Lutz Schr{\"{o}}der},
  title     = {Graded Monads and Graded Logics for the Linear Time - Branching Time
               Spectrum},
  booktitle = {{CONCUR}},
  series    = {LIPIcs},
  volume    = {140},
  pages     = {36:1--36:16},
  publisher = {Schloss Dagstuhl - Leibniz-Zentrum f{\"{u}}r Informatik},
  year      = {2019}
}

@article{DBLP:journals/iandc/Ibaraki78b,
  author  = {Toshihide Ibaraki},
  title   = {Finite Automata Having Cost Functions: Nondeterministic Models},
  journal = {Inf. Control.},
  volume  = {37},
  number  = {1},
  pages   = {40--69},
  year    = {1978}
}

@inproceedings{DBLP:conf/lics/AlurDDRY13,
  author    = {Rajeev Alur and
               Loris D'Antoni and
               Jyotirmoy V. Deshmukh and
               Mukund Raghothaman and
               Yifei Yuan},
  title     = {Regular Functions and Cost Register Automata},
  booktitle = {{LICS}},
  pages     = {13--22},
  publisher = {{IEEE} Computer Society},
  year      = {2013}
}

@article{schutzenberger1961definition,
  title   = {On the definition of a family of automata},
  author  = {Sch{\"u}tzenberger, Marcel Paul},
  journal = {Information and Control},
  year    = {1961}
}

@article{Tarski1955ALF,
  title   = {A LATTICE-THEORETICAL FIXPOINT THEOREM AND ITS APPLICATIONS},
  author  = {Alfred Tarski},
  journal = {Pacific Journal of Mathematics},
  year    = {1955},
  volume  = {5},
  pages   = {285-309}
}

@article{Prokhorov1956ConvergenceOR,
  title   = {Convergence of Random Processes and Limit Theorems in Probability Theory},
  author  = {Yu. V. Prokhorov},
  journal = {Theory of Probability and Its Applications},
  year    = {1956},
  volume  = {1},
  pages   = {157-214}
}

@article{grodin2024amortized,
  title     = {Amortized Analysis via Coalgebra},
  author    = {Grodin, Harrison and Harper, Robert},
  journal   = {Electronic Notes in Theoretical Informatics and Computer Science},
  volume    = {4},
  year      = {2024},
  publisher = {Episciences. org}
}

@article{DBLP:journals/tcs/Weber94,
  author  = {Andreas Weber},
  title   = {Finite-Valued Distance Automata},
  journal = {Theor. Comput. Sci.},
  volume  = {134},
  number  = {1},
  pages   = {225--251},
  year    = {1994}
}

@article{weber2007yoneda,
  title   = {Yoneda structures from 2-toposes},
  author  = {Weber, Mark},
  journal = {Applied Categorical Structures},
  year    = {2007}
}

@article{DBLP:journals/tcs/Adamek03,
  author  = {Jir{\'{\i}} Ad{\'{a}}mek},
  title   = {On final coalgebras of continuous functors},
  journal = {Theor. Comput. Sci.},
  volume  = {294},
  number  = {1/2},
  pages   = {3--29},
  year    = {2003}
}

\newpage
\appendix

\section{Omitted proofs}

\subsection{Proof of \Cref{prop:soundness}} \label{ap:proof_soundness_sim}
\begin{proof}
  For functors $A, B$ and natural transformations $\alpha, \beta$ with appropriate types,
  $\Coalg{B}_\beta \circ \Coalg{A}_\alpha = \Coalg{BA}_{\beta_A \circ B\alpha}$ holds.
  Therefore $p \circ (\ladjDelta) = \fibgraded{p}{\nat}$
  and $(\Fda\pi_1)_{\fibgraded{p}{\nat}} = p(\Fhatda\epsilon \circ \alpha_{M})$ concludes the proof.
  \qed
\end{proof}
\subsection{Proof of \Cref{lem:adjointcomma}}
\label{app:adjointcomma}
\begin{proof}
  We define the left adjoint by using the universal property of
  comma categories:
\[\begin{tikzcd}
	&& x \\
	x & y && {(f \downarrow \id)} & y \\
	x & y && x & y
	\arrow["l", dashed, from=1-3, to=2-4]
	\arrow["f", curve={height=-12pt}, from=1-3, to=2-5]
	\arrow["id", curve={height=12pt}, from=1-3, to=3-4]
	\arrow["f", from=2-1, to=2-2]
	\arrow["id"', from=2-1, to=3-1]
	\arrow[""{name=0, anchor=center, inner sep=0}, "id", from=2-2, to=3-2]
	\arrow["q", from=2-4, to=2-5]
	\arrow[""{name=1, anchor=center, inner sep=0}, "p"', from=2-4, to=3-4]
	\arrow["id", from=2-5, to=3-5]
	\arrow["f"', from=3-1, to=3-2]
	\arrow["\varphi", between={0.3}{0.7}, Rightarrow, from=3-4, to=2-5]
	\arrow["f"', from=3-4, to=3-5]
	\arrow["{=}"{description}, draw=none, from=0, to=1]
\end{tikzcd}\]
The unit is the identity. The counit is given by the two-dimensional universal
property of comma objects, where we have to first define its $p$ and $q$, which
we define as the following string diagrams:
\[
\begin{tikzcd}
	{(f \downarrow \id)} & x & {(f \downarrow \id)} & {(f \downarrow \id)} & x & {(f\downarrow \id)} \\
	& x &&& y
	\arrow["p", from=1-1, to=1-2]
	\arrow["p"', from=1-1, to=2-2]
	\arrow["l", from=1-2, to=1-3]
	\arrow[equals, from=1-2, to=2-2]
	\arrow["p", from=1-3, to=2-2]
	\arrow["p", from=1-4, to=1-5]
	\arrow[""{name=0, anchor=center, inner sep=0}, "q"', from=1-4, to=2-5]
	\arrow["l", from=1-5, to=1-6]
	\arrow["f", from=1-5, to=2-5]
	\arrow["q", from=1-6, to=2-5]
	\arrow["\varphi", between={0}{0.8}, Rightarrow, from=1-5, to=0]
\end{tikzcd}
\]
In order to show that this indeed defines a 2-cell, we have to show that it satisfies
the exchange law, which follows by the universal construction of $l$:

\[\begin{tikzcd}
	{(f \downarrow \id)} & x & {(f \downarrow \id)} & {(f \downarrow \id)} & x & {(f \downarrow \id)} & x \\
	&& x \\
	y && y & y &&& y
	\arrow["p", from=1-1, to=1-2]
	\arrow[""{name=0, anchor=center, inner sep=0}, "p"', from=1-1, to=2-3]
	\arrow["q"', from=1-1, to=3-1]
	\arrow["l", from=1-2, to=1-3]
	\arrow[from=1-2, to=2-3]
	\arrow["p", from=1-3, to=2-3]
	\arrow["p", from=1-4, to=1-5]
	\arrow[""{name=1, anchor=center, inner sep=0}, "q"', from=1-4, to=3-4]
	\arrow["l", from=1-5, to=1-6]
	\arrow["\varphi", shift left=5, between={0.3}{0.7}, Rightarrow, from=1-5, to=3-4]
	\arrow["f"', from=1-5, to=3-7]
	\arrow["p", from=1-6, to=1-7]
	\arrow[""{name=2, anchor=center, inner sep=0}, "q"', from=1-6, to=3-7]
	\arrow["f", from=1-7, to=3-7]
	\arrow["f", from=2-3, to=3-3]
	\arrow["id"', from=3-1, to=3-3]
	\arrow["id"', from=3-4, to=3-7]
	\arrow["\varphi"', between={0.2}{1}, Rightarrow, from=0, to=3-1]
	\arrow["\varphi"{pos=0.6}, between={0.2}{0.8}, Rightarrow, from=1-7, to=2]
	\arrow["{=}"{description}, draw=none, from=2-3, to=1]
\end{tikzcd}\]

Triangle laws:
The first triangle law follows by the following pasting diagram
\[\begin{tikzcd}
	{(f \downarrow \id)} & x & x & {(f \downarrow \id)} & x & x \\
	& {(f \downarrow \id)}
	\arrow["p", from=1-1, to=1-2]
	\arrow[""{name=0, anchor=center, inner sep=0}, "id"', from=1-1, to=2-2]
	\arrow["id", from=1-2, to=1-3]
	\arrow[""{name=1, anchor=center, inner sep=0}, "l"', from=1-2, to=2-2]
	\arrow["{=}"{description}, shift right=5, draw=none, from=1-3, to=1-4]
	\arrow["p", from=1-4, to=1-5]
	\arrow["p"', curve={height=18pt}, from=1-4, to=1-6]
	\arrow["id", from=1-5, to=1-6]
	\arrow[""{name=2, anchor=center, inner sep=0}, "p"', from=2-2, to=1-3]
	\arrow["{=}", shift right=2, draw=none, from=1, to=2]
	\arrow[between={0}{0.8}, Rightarrow, from=1-2, to=0]
\end{tikzcd}\]

The second triangle law uses the 2-dimensional universal property of comma objects.
\[\begin{tikzcd}
	x & {(f \downarrow \id)} & {(f \downarrow \id)} & x & x & {(f \downarrow \id)} & {(f \downarrow \id)} \\
	& x &&&& x & {(f \downarrow \id)} & x
	\arrow["l", from=1-1, to=1-2]
	\arrow[""{name=0, anchor=center, inner sep=0}, "id"', from=1-1, to=2-2]
	\arrow[""{name=1, anchor=center, inner sep=0}, "id", from=1-2, to=1-3]
	\arrow[""{name=2, anchor=center, inner sep=0}, "p"', from=1-2, to=2-2]
	\arrow["p", from=1-3, to=1-4]
	\arrow["{=}"{description}, draw=none, from=1-4, to=1-5]
	\arrow["l", from=1-5, to=1-6]
	\arrow["id"', from=1-5, to=2-6]
	\arrow["id", from=1-6, to=1-7]
	\arrow["p"', from=1-6, to=2-6]
	\arrow["p", from=1-7, to=2-8]
	\arrow["l"', from=2-2, to=1-3]
	\arrow["l"', from=2-6, to=2-7]
	\arrow[curve={height=-12pt}, from=2-6, to=2-8]
	\arrow["p"', from=2-7, to=2-8]
	\arrow["{=}"', draw=none, from=2, to=0]
	\arrow["\varepsilon", between={0.4}{0.8}, Rightarrow, from=2-2, to=1]
\end{tikzcd}\]

\[\begin{tikzcd}
	x & {(f \downarrow \id)} & {(f \downarrow \id)} & y & x & {(f \downarrow \id)} & {(f \downarrow \id)} \\
	& x &&&& x & {(f \downarrow \id)} & y \\
	{~} & x & {(f \downarrow \id)} & y \\
	&& x & {(f \downarrow \id)} & y
	\arrow["l", from=1-1, to=1-2]
	\arrow[""{name=0, anchor=center, inner sep=0}, "id"', from=1-1, to=2-2]
	\arrow[""{name=1, anchor=center, inner sep=0}, "id", from=1-2, to=1-3]
	\arrow[""{name=2, anchor=center, inner sep=0}, "p"', from=1-2, to=2-2]
	\arrow["q", from=1-3, to=1-4]
	\arrow["{=}"{description}, draw=none, from=1-4, to=1-5]
	\arrow["l", from=1-5, to=1-6]
	\arrow["id"', from=1-5, to=2-6]
	\arrow["id", from=1-6, to=1-7]
	\arrow[""{name=3, anchor=center, inner sep=0}, "p"', from=1-6, to=2-6]
	\arrow["q", from=1-7, to=2-8]
	\arrow["l"', from=2-2, to=1-3]
	\arrow["l"', from=2-6, to=2-7]
	\arrow["f", curve={height=-12pt}, from=2-6, to=2-8]
	\arrow["q"', from=2-7, to=2-8]
	\arrow["{=}"{description}, draw=none, from=3-1, to=3-2]
	\arrow["l", from=3-2, to=3-3]
	\arrow["id"', from=3-2, to=4-3]
	\arrow["q", from=3-3, to=3-4]
	\arrow["id", from=3-4, to=4-5]
	\arrow["l"', from=4-3, to=4-4]
	\arrow["f", curve={height=-12pt}, from=4-3, to=4-5]
	\arrow["q"', from=4-4, to=4-5]
	\arrow["{=}"', draw=none, from=2, to=0]
	\arrow["\varphi"{description}, between={0.2}{0.8}, Rightarrow, from=3, to=1-7]
	\arrow["\varepsilon", between={0.4}{0.8}, Rightarrow, from=2-2, to=1]
\end{tikzcd}\]
\end{proof}

\subsection{Proof of \Cref{lem:comma_fib}} \label{ap:proof_comma_fib}
\begin{proof}
  For the sake of simplicity, we provide a simpler proof. $\clat$-fibrations
  over a category $\cat{C}$ can be alternatively defined, by the Grothendieck
  construction 2-isomorphism, as functors $\cat{C}^{op} \to \clat$. Therefore,
  by the fact that comma objects are defined pointwise in 2-categories of
  (2-)functors, it suffices to show that $\clat$ has comma objects.
  
  Consider the complete lattices $L_1$, $L_1'$ and $L_2$, and infima preserving
  monotonic functions $f \colon L_1 \to L_2$ and $f' \colon L_1' \to L_2$. Their comma
  complete lattice has $X = \{(x, x') \in L_1 \times L_1'\, \mid\, f(x) \subseteq f'(x') \}$
  as underlying set. We equip it with the componentwise ordering.

  The complete lattice structure of $X$ is defined using the property
  that, by the adjoint functor theorem, any poset with arbitrary infima
  has arbitrary suprema, and vice-versa. Therefore, is suffices to show
  that $X$ has arbitrary infima. By assumption, $L_1$ and $L_1'$ have
  infima, so we define
  \[\bigsqcap_\alpha (x_\alpha, x_\alpha') = \left ( \bigsqcap_\alpha x_\alpha, \bigsqcap_\alpha x_\alpha' \right )\]
  We can show that this point is indeed in $X$. This follows by the assumptions
  that $f$ and $f'$ preserve arbitrary infima, and that for every $\alpha$,
  $f(x_\alpha) \sqsubseteq f'(x_\alpha')$, so $\bigsqcap_\alpha f(x_\alpha) \sqsubseteq \bigsqcap_\alpha f'(x_\alpha')$.

  The comma object universal properties follows from a direct calculation and
  from the order structure of $X$ being componentwise.
\end{proof}
\subsection{Proof of \Cref{th:commagluing}}
\label{app:comma}
\begin{proof}
  Being computed pointwise means that the vertex is $((h \downarrow \id), t)$,
  where $t$ is defined by the universal property of the comma category $(h \downarrow \id)$:
\[\begin{tikzcd}
	{(h \downarrow \id)} & {E'} && {(h \downarrow \id)} & {E'} \\
	E & {E'} & {E'} & E & {(h\downarrow \id)} & {E'} \\
	& E & {E'} && E & {E'}
	\arrow["q", from=1-1, to=1-2]
	\arrow["p", from=1-1, to=2-1]
	\arrow["id", from=1-2, to=2-2]
	\arrow["{f'}", from=1-2, to=2-3]
	\arrow["q", from=1-4, to=1-5]
	\arrow["p", from=1-4, to=2-4]
	\arrow["t"{description}, dotted, from=1-4, to=2-5]
	\arrow["{f'}", from=1-5, to=2-6]
	\arrow[shorten >=8pt, shorten <= 8pt, Rightarrow, from=2-1, to=1-2]
	\arrow["h"{description}, from=2-1, to=2-2]
	\arrow[""{name=0, anchor=center, inner sep=0}, "f"', from=2-1, to=3-2]
	\arrow[""{name=1, anchor=center, inner sep=0}, "{f'}", from=2-2, to=3-3]
	\arrow["{=}"{description}, draw=none, from=2-3, to=2-4]
	\arrow["id", from=2-3, to=3-3]
	\arrow["f"', from=2-4, to=3-5]
	\arrow["q", from=2-5, to=2-6]
	\arrow["p", from=2-5, to=3-5]
	\arrow[from=2-6, to=3-6]
	\arrow["h", from=3-2, to=3-3]
	\arrow[shorten >=8pt, shorten <= 8pt,Rightarrow, from=3-5, to=2-6]
	\arrow["h", from=3-5, to=3-6]
	\arrow[shorten >=7pt, shorten <= 7pt,Rightarrow, from=0, to=1]
\end{tikzcd}\]
In order to show that this is indeed a comma object in $\cat{Lift}(f)$
we have to show
\begin{itemize}
  \item That the 2-cell above in $\cat{Cat}$ is indeed a 2-cell in $\cat{Lift}(f)$
  \item That it satisfies the one-dimensional universal property of comma objects
\end{itemize}

Such objects also have a two-dimensional property, but for the applications we are
interested in, the 1-dimensional universal property suffices.

\begin{description}
\item[2-cell:] Assume that we have 0-cells $(E, p_1, \hat{f})$ and
  $(E', p_2, \hat{f'})$. This case boils down to proving
  the following pasting diagram equality:
\[\begin{tikzcd}
	X && {E'} & X && {E'} \\
	& E \\
	X & E & {E'} & X & E & {E'}
	\arrow[""{name=0, anchor=center, inner sep=0}, "q", from=1-1, to=1-3]
	\arrow["p"', from=1-1, to=2-2]
	\arrow["t"', from=1-1, to=3-1]
	\arrow[""{name=1, anchor=center, inner sep=0}, "{f'}", from=1-3, to=3-3]
	\arrow["q", from=1-4, to=1-6]
	\arrow[""{name=2, anchor=center, inner sep=0}, "t"', from=1-4, to=3-4]
	\arrow["{f'}", from=1-6, to=3-6]
	\arrow["h"', from=2-2, to=1-3]
	\arrow["f", from=2-2, to=3-2]
	\arrow["p", from=3-1, to=3-2]
	\arrow["h", from=3-2, to=3-3]
	\arrow["p"', from=3-4, to=3-5]
	\arrow[""{name=3, anchor=center, inner sep=0}, "q"{description}, curve={height=-30pt}, from=3-4, to=3-6]
	\arrow["h"', from=3-5, to=3-6]
	\arrow["{=}"{description}, draw=none, from=1, to=2]
	\arrow["\psi", between={0.3}{0.7}, Rightarrow, from=2-2, to=0]
	\arrow["\lambda"'{pos=0.4}, between={0.2}{0.7}, Rightarrow, from=2-2, to=1]
	\arrow["\psi", between={0}{0.8}, Rightarrow, from=3-5, to=3]
\end{tikzcd}\]
which follows by the definition of $t$ above---note that the lower leg on the
diagram on the right hand side can be directly rewritten from the definition of $t$.
This shows that $\psi$ is indeed a 2-cell.
\item[One-dimensional:]
  Assume that there is a 2-cell
\[\begin{tikzcd}
	{(Y, q, \hat{g})} & {(E', p_2, \hat{f'})} \\
	{(E, p_1, \hat{f})} & {(E', p_2, \hat{f'})}
	\arrow["{(n, \beta)}", from=1-1, to=1-2]
	\arrow["{ (m, \alpha)}"', from=1-1, to=2-1]
	\arrow["id", from=1-2, to=2-2]
	\arrow["\varphi", shorten >= 8pt, shorten <= 8pt,Rightarrow, from=2-1, to=1-2]
	\arrow["{(h,\lambda)}"', from=2-1, to=2-2]
\end{tikzcd}\]
We want to construct a universal 1-cell $(Y, q, \hat{g}) \to (X, p_1\circ p, t)$. Remember that
1-cells in $\cat{Lift}(f)$ are pairs $(s, \sigma)$. It is possible to define
$s$ by once again using the universal property of comma objects for $\varphi$.

It remains to define a transformation $sg \to ts$. We will define this by using
the two-dimensional universal property of comma objects. After a long sequence
of calculations, we can define such a transformation. The uniqueness of the
1-cell $(Y, q, \hat{g}) \to (X, p_1\circ p, t)$ can be lifted from the universal property of
comma objects.
\end{description}
\end{proof}

\subsection{Proof of \Cref{th:gluedequiv}} \label{ap:proof_gluedequiv}

\begin{proof}
   We define this map by using a variant of \Cref{lem:adjointcomma},
   which will give us a right adjoint $\iota \colon \cat{E} \to \cat{X}$
   and a left adjoint $\iota_g \colon \cat{E}_g \to \cat{X}$.

  Using these adjunctions we can define functors $\hat{F} \colon \cat{E} \to
  \cat{E}$ and $\hat{F^g}\colon \cat{E}_g \to \cat{E}_g$ as $q \circ
  \mathcal{H} \circ \iota$ and $p \circ \mathcal{H} \circ \iota_g$,
  respectively.  The distributive law $(\ladjDelta) \circ \hat{F^g}
  \Rightarrow \hat{F}\circ (\ladjDelta)$ is given by the horizontal
  composition $\bar{\eta} * \id * \varphi$. Where $\bar{\eta}$ is
  pointwise the $\cat{X}$ morphism given by the unit of the $(\ladjDelta) \dashv \Delta$ adjunction and the identity.

  In order to show the projection property, we use the fact that  
  $p \circ \mathcal{H} \circ \iota_g$ computes the first component
  of the glued functor, which in this case it is defined to be
  $F_g$. A similar argument holds for the functor
  $q \circ \mathcal{H} \circ \iota$. Finally, their mediating morphism
  is calculated using the distributive law, which is recovered
  by the horizontal composition and using the triangle law of adjunctions.
\qed
\end{proof}

\subsection{Proof of \Cref{thm:finalglued}} \label{ap:proof_finalglued}
\begin{proof}
  By Ad\'{a}mek's result, the carrier of the final coalgebra is
  given by the limit of the ascending chain. We want to calculate
  $p(\nu \liftF)$ and $q(\nu \liftF)$. Because both calculations
  are basically the same, we focus on $p(\nu \liftF)$. The functor
  $p$ preserves limits, so we have the following limiting cone:
\[\begin{tikzcd}
	&& {p(\nu \liftF)} \\
	\cdots & {p(\liftF^31)} & {p(\liftF^21)} & {p(\liftF1)} & {p(1) \cong 1}
	\arrow[from=1-3, to=2-1]
	\arrow[from=1-3, to=2-5]
	\arrow["{p(\liftF^3!)}"', from=2-1, to=2-2]
	\arrow["{p(\liftF^2!)}"', from=2-2, to=2-3]
	\arrow["{p(\liftF!)}"', from=2-3, to=2-4]
	\arrow["{!}"', from=2-4, to=2-5]
\end{tikzcd}\]
By construction, $p\circ \liftF = \hat{F^g} \circ p$, which together
with $p(1) \cong 1$ implies that the diagram above is equal to
diagram whose limit is $\nu \hat{F^g}$, concluding the proof.
\end{proof}

\subsection{Proof of \Cref{th:commacoalg}}
\label{ap:commacoalg}
\begin{proof}
  We know that this comma object is given by the comma category of the
  underlying categories of coalgebras. The proof follows by unfolding
  the definitions of the categories and comparing their objects and
  morphisms.
  \begin{description}
  \item[Objects]
    An object of $\Coalg{\liftF}$ is a morphism $(X_1, X_2, h \colon \ladjDelta X_1 \to X_2) \to (\hat{F^g} X_1, \hat{F} X_2, \hat{F} h \circ \alpha)$ in the comma category $\ladjDelta \downarrow \id$. Therefore, this is given by a pair $(f_1 \colon X_1 \to \hat{F^g} X_1, f_2 \colon X_2 \to \hat{F} X_2)$ of morphisms such that
    \[\begin{tikzcd}
	    {\ladjDelta X_1} & {\ladjDelta \hat{F^g}X_1} \\
	    & {\hat{F}\ladjDelta X_1} \\
	    {X_2} & {\hat{F}X_2}
	    \arrow["{\ladjDelta f_1}", from=1-1, to=1-2]
	    \arrow["h"', from=1-1, to=3-1]
	    \arrow["\alpha", from=1-2, to=2-2]
	    \arrow["{\hat{F}h}", from=2-2, to=3-2]
	    \arrow["{f_2}"', from=3-1, to=3-2]
\end{tikzcd}\]
    
    When we unfold the definition of the objects in the comma category
    $\Coalg{\ladjDelta}_\alpha \downarrow \id$, we get triples $(f_1, f_2, h)$,
    where $f_1 \colon X_1 \to \hat{F^g} X_1$, $f_2 \colon X_2 \to \hat{F} X_2$ and
    $h \colon \Coalg{\ladjDelta}_\alpha(f_1) \to f_2$ is a coalgebra morphism.
    Unfolding this, we get that they must make the following diagram
    commute:
\[\begin{tikzcd}
	{\ladjDelta X_1} & {X_2} \\
	{\ladjDelta \hat{F^g}X_1} \\
	{\hat{F}\ladjDelta X_1} & {\hat{F}X_2}
	\arrow["h", from=1-1, to=1-2]
	\arrow["{\ladjDelta f_1}"', from=1-1, to=2-1]
	\arrow["{f_2}", from=1-2, to=3-2]
	\arrow["\alpha"', from=2-1, to=3-1]
	\arrow["{\hat{F}h}"', from=3-1, to=3-2]
\end{tikzcd}\]
Up to a permutation, these two diagrams are the same, making the objects equal.
\item[Morphisms]
  We will use $\mathbb{X}$ to denote objects in
  the comma category $\ladjDelta \downarrow \id$. In this case, a coalgebra morphism
  $(\mathbb{X} \to \liftF \mathbb{X}) \to (\mathbb{X}' \to \liftF \mathbb{X}')$
  consists of a pair morphisms $g_1$ and $g_2$ such that
\[\begin{tikzcd}
	{X_1} & {X_1'} & {X_2} & {X_2'} \\
	{\ladjDelta X_1} & {\ladjDelta X_1'} & {\ladjDelta X_2} & {\ladjDelta X_2'}
	\arrow["{g_1}", from=1-1, to=1-2]
	\arrow["{f_1}"', from=1-1, to=2-1]
	\arrow["{f_1'}", from=1-2, to=2-2]
	\arrow["{g_2}", from=1-3, to=1-4]
	\arrow["{f_2}"', from=1-3, to=2-3]
	\arrow["{f_2'}", from=1-4, to=2-4]
	\arrow["{\ladjDelta g_1}"', from=2-1, to=2-2]
	\arrow["{\ladjDelta g_2}"', from=2-3, to=2-4]
\end{tikzcd}\]
where $(f_1, f_2) \colon \mathbb{X} \to \liftF\mathbb{X}$ and
$(f'_1, f'_2) \colon \mathbb{X}' \to \liftF\mathbb{X}'$ are the
components of the domain and codomain coalgebras, respectively.
The fact that these must be morphisms in $\ladjDelta \downarrow \id$
means that the following diagrams must commute:
\[\begin{tikzcd}
	&& {\ladjDelta X_1} & {\ladjDelta \hat{F^g} X_1} \\
	{\ladjDelta X_1} & {\ladjDelta X'_1} & {X_2} & {\hat{F}X_2} \\
	{X_2} & {X_2'} & {\ladjDelta X_1'} & {\ladjDelta \hat{F^g} X_1'} \\
	&& {X_2'} & {\hat{F}X_2'}
	\arrow["{\ladjDelta f_1}", from=1-3, to=1-4]
	\arrow["h"', from=1-3, to=2-3]
	\arrow["{\hat{F}h\circ \alpha}", from=1-4, to=2-4]
	\arrow["{\ladjDelta g_1}", from=2-1, to=2-2]
	\arrow["h"', from=2-1, to=3-1]
	\arrow["{h'}", from=2-2, to=3-2]
	\arrow["{f_2}"', from=2-3, to=2-4]
	\arrow["{g_2}"', from=3-1, to=3-2]
	\arrow["{\ladjDelta f_1'}", from=3-3, to=3-4]
	\arrow["{h'}"', from=3-3, to=4-3]
	\arrow["{\hat{F}h'\circ \alpha}", from=3-4, to=4-4]
	\arrow["{f_2'}"', from=4-3, to=4-4]
\end{tikzcd}\]

Now, we unfold the definitions of morphisms for
$\Coalg{\ladjDelta}_\alpha \downarrow \id$. Assuming that
we have coalgebras following the same convention as above,
an arrow $(f_1, f_2, h \colon \Coalg{\ladjDelta}_\alpha(f_1) \to f_2) \to (f_1', f_2', h' :\Coalg{\ladjDelta}_\alpha(f_1') \to f_2')$ is given by a pair of coalgebra
morphisms $g_1 \colon f_1 \to f_1'$ and $g_2 \colon f_2 \to f_2'$ such that
\[\begin{tikzcd}
	{\ladjDelta X_1} & {\ladjDelta X'_1} \\
	{X_2} & {X_2'}
	\arrow["{\ladjDelta g_1}", from=1-1, to=1-2]
	\arrow["h"', from=1-1, to=2-1]
	\arrow["{h'}", from=1-2, to=2-2]
	\arrow["{g_2}"', from=2-1, to=2-2]
\end{tikzcd}\]
As we have shown in the ``objects'' part of this proof, the last two diagrams
commute that assumption that $h$ and $h'$ are coalgebra morphisms. This shows
that both the object and morphism parts of the categories are the same and
concludes the proof.
  \end{description}
\end{proof}

\section{Omitted proofs for \Cref{sec:examples}}
\subsection{Proof of \Cref{thm:qmetrel}} \label{ap:proof_qmetrel}
Note that $LR = \id$.

\begin{lemma} \label{lem:fhatrel_rl}
  Let $\gradedRelDef{} \in \cat{QRel}$ be an object above a finite set $S$.
  If $\gradedRelDef{} = RL \gradedRelDef{}$, then $\Fhatlmp^{[0, 1]} \gradedRelDef{} = RL \Fhatlmp^{[0, 1]} \gradedRelDef{}$.
\end{lemma}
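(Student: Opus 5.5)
The plan is to reduce the equation to a right-continuity property of the grading. First I will unfold $RL$: for $\mathcal{B} \in \IRel$ over $S$,
\[
(RL\,\mathcal{B})_r = \{(x,x') \mid \inf\{s \in [0,1] \mid (x,x') \in \mathcal{B}_s\} \leq r\}.
\]
Because $\mathcal{B}$ is a functor out of $([0,1],\leq)$, its components increase with $r$. Hence $\mathcal{B}_r \subseteq (RL\,\mathcal{B})_r$ always holds; this is the unit of $\cballleft \dashv \cball$. The reverse inclusion holds exactly when $\mathcal{B}$ is right-continuous:
\[
(x,x') \in \mathcal{B}_s \text{ for all } s > r \quad\Longrightarrow\quad (x,x') \in \mathcal{B}_r .
\]
Every pair lies in $\mathcal{B}_1$, since $\mathcal{B}_1 \supseteq \mathcal{B}_r \star \mathcal{B}_{1-r} \supseteq \mathcal{B}_0 \star \mathcal{B}_0 \ni (x,x)$ and the infimum set is nonempty, so the case $r = 1$ is trivial. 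It therefore suffices to show that $\Fhatlmp^{[0,1]}\gradedRelDef{}$ is right-continuous whenever $\gradedRelDef{} = RL\gradedRelDef{}$.

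Next I use finiteness. The distance $d = \cballleft(\gradedRelDef{})$ takes only finitely many values on $S \times S$. The hypothesis $\gradedRelDef{} = RL\gradedRelDef{} = \cball(d)$ gives $\gradedRelDef{r} = \{(x,x') \mid d(x,x') \leq r\}$. Fix $r < 1$ and let $\delta > 0$ be smaller than the gap between $r$ and the least value of $d$ strictly above $r$ (take any $\delta$ if no such value exists). Then $\gradedRelDef{s} = \gradedRelDef{r}$ for all $s \in (r, r+\delta)$, and hence $\gradedRelDef{s}(X) = \gradedRelDef{r}(X)$ for every $X \subseteq S$.

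Now I verify right-continuity. Suppose $(f_0,f_1) \in \Fhatlmp^{[0,1]}(\gradedRelDef{})_s$ for all $s > r$. For each $a$, $X$ and $i$, and each $s \in (r, r+\delta)$, we get
\[
f_i(a)(X) \leq f_{1-i}(a)(\gradedRelDef{s}(X)) + s = f_{1-i}(a)(\gradedRelDef{r}(X)) + s .
\]
Letting $s \downarrow r$ yields $f_i(a)(X) \leq f_{1-i}(a)(\gradedRelDef{r}(X)) + r$, that is, $(f_0,f_1) \in \Fhatlmp^{[0,1]}(\gradedRelDef{})_r$.

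Finally, the forward inclusion $\Fhatlmp^{[0,1]}\gradedRelDef{} \sqsubseteq RL\,\Fhatlmp^{[0,1]}\gradedRelDef{}$ requires that $\Fhatlmp^{[0,1]}\gradedRelDef{}$ is monotone in the grade. This holds because both $\gradedRelDef{r}(X)$ and $+r$ increase with $r$, or alternatively because the lifting lands in $\IRel$. Combining the two inclusions gives the equality.

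The main obstacle is the step $\gradedRelDef{s} = \gradedRelDef{r}$ for $s$ slightly above $r$. This is exactly where finiteness of $S$ is used: with infinitely many pairs, the sets $\gradedRelDef{s}(X)$ could strictly shrink to $\gradedRelDef{r}(X)$, and the limit argument on subdistribution masses would then fail.
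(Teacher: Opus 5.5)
Your proof is correct and follows essentially the paper's route: the paper also splits on whether the infimum of admissible grades is below $q$ (monotonicity) or equal to $q$ (your right-continuity), and uses finiteness of $S$ together with $\gradedRelDef{}=RL\gradedRelDef{}$ to make $\gradedRelDef{q'}(X)$ settle at $\gradedRelDef{q}(X)$. The paper gets this by contradiction, building a strictly descending chain of relations, whereas your observation that $\gradedRelDef{}=\cball(d)$ with $d$ finitely valued gives the stabilisation directly. One small slip: your transitivity chain only puts diagonal pairs into $\mathcal{B}_1$, and $\mathcal{B}_1$ need not be full for a general $\mathcal{B}\in\IRel$; the case $r=1$ is trivial instead because $(\Fhatlmp^{[0,1]}\gradedRelDef{})_1$ is the full relation, since $f_i(a)(X)\le 1$.
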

\begin{proof}
  Note that $\gradedRelDef{} = RL \gradedRelDef{}$ means:
  for each $x, y \in S$ and $q \in Q$, 
  \begin{center}
  $\inf \{q' \mid (x, y) \in \gradedRelDef{q'}\} \leq q$ implies $(x, y) \in \gradedRelDef{q}$.
  \end{center}
  Let us fix $f_0, f_1 \in \Flmp S$ and $q \in Q$.
  Define $Q' \coloneqq \{q' \in Q \mid (f_0, f_1) \in (\Fhatlmp^{[0, 1]} \gradedRelDef{})_{q'}\}$.
  Note that $Q' \neq \emptyset$ since $1$ is always in $Q'$.
  Assume that $\inf Q' \leq q$, and we aim to show $(f_0, f_1) \in (\Fhatlmp^{[0, 1]} \gradedRelDef{})_q$.

  \begin{itemize}
    \item Case $\inf Q' < q$: In this case, there exists $q' \in Q'$ such that $q' < q$.
      It implies that $(f_0, f_1) \in (\Fhatlmp^{[0, 1]} \gradedRelDef{})_{q'} \subseteq (\Fhatlmp^{[0, 1]} \gradedRelDef{})_{q}$.
    \item 
  Case $\inf Q' = q$: In this case, $q \leq q'$ holds for each $q' \in Q'$.
  By the definition of $Q'$ and $\Fhatlmp^{[0, 1]}$,
  for each $a \in \Sigma$, $X \subseteq S$, and $i \in \{0, 1\}$,
  \begin{displaymath}
  f_i(a)(X) \leq \inf_{q' \in Q'} (f_{1-i}(a)(\gradedRelDef{q'}(X)) + q') = \big(\inf_{q' \in Q'} f_{1-i}(a)(\gradedRelDef{q'}(X))\big) + q.
  \end{displaymath}
  Thus $(f_0, f_1) \in (\Fhatlmp^{[0, 1]} \gradedRelDef{})_q$ will follow once we show 
  \begin{equation} \label{eq:inf_q}
  \inf_{q' \in Q'} f_{1-i}(a)(\gradedRelDef{q'}(X)) \leq f_{1-i}(a)(\gradedRelDef{q}(X))
  \end{equation}
  for each $a, X, i$.
  Suppose, towards a contradiction, that \eqref{eq:inf_q} fails for some $a, X, i$.
  Then for each $q' \in Q'$, we have $f_{1-i}(\gradedRelDef{q'}(X)) > f_{1-i}(\gradedRelDef{q}(X))$,
  whence $\gradedRelDef{q'}(X) \supsetneq \gradedRelDef{q}(X)$.
  If there exists $x \in X$ and $y \in S$ such that $(x, y) \in \gradedRelDef{q'} \setminus \gradedRelDef{q}$ for each $q' \in Q'$, then $\inf \{q'' \mid (x, y) \in \gradedRelDef{q''}\} \leq \inf Q' = q$.
  By the assumption $\gradedRelDef{} = RL \gradedRelDef{}$,
  it implies that $(x, y) \in \gradedRelDef{q}$, a contradiction.
  Hence, we have ($\star$): for each $x \in X$ and $y \in S$, there exists $q' \in Q'$ such that $(x, y) \not \in \gradedRelDef{q'} \setminus \gradedRelDef{q}$.

  Now pick any $q_0 \in Q'$ (possible since $Q' \neq \emptyset$).
  From $\gradedRelDef{q}(X) \subsetneq \gradedRelDef{q_0}(X)$, 
  there is $y_0 \in \gradedRelDef{q_0}(X) \setminus \gradedRelDef{q}(X)$,
  which also implies the existence of $x_0 \in X$ such that $(x_0, y_0) \in \gradedRelDef{q_0} \setminus \gradedRelDef{q}$.
  By $(\star)$, there exists $q_1 \in Q'$ such that $(x_0, y_0) \not \in \gradedRelDef{q_1} \setminus \gradedRelDef{q}$,
  i.e.~$(x_0, y_0) \not \in \gradedRelDef{q_1}$ since $(x_0, y_0) \not \in \gradedRelDef{q}$.
  It follows that $q_1 < q_0$ (otherwise $(x_0, y_0) \not \in \gradedRelDef{q_0}$),
  and
  $\gradedRelDef{q_{1}} \subsetneq \gradedRelDef{q_0}$ with
   $(x_0, y_0) \in \gradedRelDef{q_0} \setminus \gradedRelDef{q_1}$.
  By repeating the argument, we obtain 
  a strictly descending chain $\gradedRelDef{q_0} \supsetneq \gradedRelDef{q_1} \supsetneq \cdots$
  together with pairs $(x_i, y_i) \in \gradedRelDef{q_i} \setminus \gradedRelDef{q_{i+1}}$.
  It contradicts to the finiteness of $S$.
  \end{itemize}
  In all cases we conclude $(f_0, f_1) \in (\Fhatlmp^{[0, 1]} \gradedRelDef{})_q$.
  Hence $\Fhatlmp^{[0, 1]} \gradedRelDef{} = RL \Fhatlmp^{[0, 1]} \gradedRelDef{}$.
  \qed
\end{proof}

\begin{proof}[\Cref{thm:qmetrel}]
  Let $c\colon d \to \Fhatlmp d$ be a coalgebra above $\delta$.
  Because $Rd = RLRd$ holds by $LR=\id$
  and $Rd$ is still above the finite set $S$,
  \Cref{lem:fhatrel_rl} yields $\Fhatlmp^{[0, 1]} Rd = RL \Fhatlmp^{[0, 1]} Rd = R\Fhatlmp d$.
  Following the same construction as in \Cref{def:coalg},
  we obtain
  the coalgebra $Rd \xrightarrow{Rc} R \Fhatlmp d = \Fhatlmp^{[0, 1]} R d$ above $\delta$.
  \qed
\end{proof}

\subsection{Proof of \Cref{prop:bisimilarity_lmp}} \label{ap:proof_bisimilarity_lmp}
\begin{proof}
  For each bisimulation $d \in \PMet$,
  we have $d = \cballleft \cball (d) \sqsubseteq \cballleft (\nu(\delta^*\Fhatlmp))$
  since $\cballleft \cball  = \id$ and $\cball (d) \sqsubseteq \nu(\delta^*\Fhatlmp)$ by \Cref{thm:qmetrel}.
  By \Cref{thm:qrelmet}, $\cballleft (\nu(\delta^*\Fhatlmp))$ is itself a bisimulation.
  Hence $\cballleft (\nu(\delta^*\Fhatlmp))$ is the greatest bisimulation.
  \qed
\end{proof}

\end{document}